\documentclass{article}

\usepackage{microtype}
\usepackage{graphicx}
\usepackage{subcaption}
\usepackage{booktabs}

\usepackage{hyperref}

\usepackage{algorithm,algorithmic}
\usepackage{breakcites}

\usepackage{amsmath}
\usepackage{amssymb}
\usepackage{mathtools}
\usepackage{amsthm}

\usepackage{ifthen}
\usepackage{xargs}

\usepackage[capitalize,noabbrev]{cleveref}
\theoremstyle{plain}
\newtheorem{theorem}{Theorem}[section]
\newtheorem{proposition}[theorem]{Proposition}
\newtheorem{lemma}[theorem]{Lemma}

\theoremstyle{definition}
\newtheorem{definition}[theorem]{Definition}
\newtheorem{assumption}[theorem]{Assumption}
\theoremstyle{remark}
\newtheorem{remark}[theorem]{Remark}

\usepackage[textsize=tiny]{todonotes}

\usepackage[left=2cm,right=2cm,top=2cm,bottom=2cm]{geometry}
\usepackage{stmaryrd}
\usepackage{authblk}

\newcommandx{\amap}[2][1=]{\ifthenelse{\equal{#1}{}}{\mathcal{A}^{#2}}{\mathcal{A}^{#2}_{#1}}}
\newcommandx{\hmap}[2][1=]{\ifthenelse{\equal{#1}{}}{\mathcal{H}^{#2}}{\mathcal{H}^{#2}_{#1}}}
\newcommandx{\vmap}[2][1=]{\ifthenelse{\equal{#1}{}}{\mathcal{V}^{#2}}{\mathcal{V}^{#2}_{#1}}}
\newcommandx{\umap}[2][1=]{\ifthenelse{\equal{#1}{}}{\mathcal{U}^{#2}}{\mathcal{U}^{#2}_{#1}}}
\newcommand{\oscnorm}{\operatorname{osc}}
\newcommand{\bmf}{\mathsf{F}_{\mathsf{b}}}
\newcommand{\bmffd}{\mathcal{F}_{\mathsf{b}}}
\newcommand{\mf}{\mathsf{F}}
\newcommand{\tensprod}{\varotimes}
\newcommand{\lip}{\mathcal{L}}
\newcommand{\liptd}{\tilde{\mathcal{L}}}
\newcommand{\Xsp}{\mathsf{X}}
\newcommand{\Xfd}{\mathcal{X}}
\newcommand{\Ysp}{\mathsf{Y}}
\newcommand{\Yfd}{\mathcal{Y}}
\newcommand{\Asp}{\mathsf{A}}
\newcommand{\Afd}{\mathcal{A}}
\newcommand{\Zsp}{\mathsf{Z}}
\newcommand{\Zfd}{\mathcal{Z}}

\newcommand{\intset}[2]{\llbracket #1, #2 \rrbracket}
\newcommand{\nsetpos}{\mathbb{N}_{>0}}
\renewcommand{\geqslant}{\geq}

\newcommand{\emk}{T^{\thp, \parvar}}
\newcommand{\COLBO}{\ell}
\newcommand{\contr}{\lambda}
\newcommand{\eqdef}{\vcentcolon=}

\newcommand{\thp}{\theta}
\newcommand{\thpsp}{\Theta}
\newcommand{\statedim}{d_x}
\newcommand{\obsdim}{d_y}

\newcommand{\vparsp}{\Phi}
\newcommand{\statesp}{\mathbb{R}^{\statedim}}
\newcommand{\obssp}{\mathbb{R}^{\obsdim}}

\newcommand{\nrmbackwdweight}[3][\parvar]{\bar{w}_{#2-1|#2}^{#1, #3}}
\newcommand{\sample}[2]{\xi_{#1}^{#2}}

\newcommand{\afelbo}[1]{\ell_{#1}^{\thp,\parvar}}
\newcommand{\addfelbo}[2][\parvar]{\ell_{#2}^{\, \thp, #1}}

\newcommand{\grad}{\nabla}
\newcommandx{\gdiffELBO}[2][1=]{
\ifthenelse{\equal{#1}{}}
{\mathcal{G}_{#2}^{\thp, \parvar}}
{\mathcal{G}_{#2}^{\thp, \parvar}\langle #1 \rangle}
}

\newcommand{\gradalt}[1]{\nabla_{#1}}
\newcommand{\parvec}{\the}
\newcommand{\parspace}{\the}
\newcommand{\parvar}{\phi}
\newcommand{\md}[1]{g_{#1}^\thp}
\newcommand{\hd}[1]{m_{#1}^\thp}
\newcommand{\pE}{\mathbb{E}}
\newcommand{\postletter}{\pi}
\newcommandx\post[2][1=]{
\ifthenelse{\equal{#1}{}}
	{\ensuremath{\postletter_{#2}^\thp}}
	{\ensuremath{\postletter_{#2}^{\thp, \N}}}
}
\newcommand{\rset}{\ensuremath{\mathbb{R}}}
\newcommand{\rsetpos}{\ensuremath{\mathbb{R}_{>0}}}
\newcommand{\rmd}{\ensuremath{\mathrm{d}}}
\newcommand{\eqsp}{}
\newcommand{\nset}{\mathbb{N}}
\newcommand{\Km}{K^{\thp, \parvar}}
\newcommand{\K}{T^{\thp, \parvar}}
\renewcommand{\S}{S^{\parvar}}
\newcommand{\R}{R}

\newcommand{\mixrt}{\alpha}
\newcommand{\probmeas}{\mathsf{M}_1}
\newcommand{\jointd}[1]{p_{#1}^{\thp}}
\newcommand{\vd}[2][\parvar]{q_{#2}^{#1}} 
\newcommand{\vk}[2][\parvar]{Q_{#2}^{#1}}

\newcommand{\expect}[2]{\mathbb{E}_{#1}\left[#2\right]}
\newcommand{\approxexpect}[2]{\widehat{\mathbb{E}}_{#1}{\left[#2\right]}}
\newcommand{\ELBO}[2][\parvar]{\mathcal{L}_{#2}^{\thp, #1}}

\newcommand{\gELBOt}[3]{\grad_{#1}\mathcal{L}_{#2}^{#3}}
\newcommand{\approxELBO}[2][\parvar]{\widehat{\mathcal{L}}^{\thp, #1}_{#2}}
\newcommand{\hGELBO}[2]{\widehat{\grad}_{#1} \mathcal{L}_{#2}^{\thp,\parvar}}
\newcommand{\hGELBOt}[3]{\widehat{\grad}_{#1} \mathcal{L}_{#2}^{\thp_{#3},\parvar_{#3}}}
\newcommand{\Hstat}[1]{h_{#1}}
\newcommand{\Gstat}[1]{u_{#1}}
\newcommand{\Fstat}[1]{v_{#1}}
\newcommand{\backwd}[1]{b_{#1}^\thp}
\newcommand{\fwdpot}[2][\parvar]{\psi_{#2}^{#1}}
\newcommand{\approxHstat}[3][\parvar]{\hat{h}_{#2}^{#1, #3}}
\newcommand{\approxGstat}[3][\parvar]{\hat{u}_{#2}^{#1, #3}}
\newcommand{\approxFstat}[3][\thp]{\hat{v}_{#2}^{#1, #3}}

\newcommand{\paramapproxvarcondE}[2][H]{#1_{\hat{\gamma}_{#2}}^\parvar}
\newcommand{\paramvarcondE}{H_{\gamma}^\parvar}

\newcommand{\fat}[1][]{f^{\parvar_{#1}}}
\newcommand{\fxt}[1][]{\tilde{f}^{\parvar_{#1}}}
\newcommand{\statemap}{\zeta}
\newcommand{\RMCVI}{RMCVI}

\newcommand{\X}{\mathsf{X}}

\newcommand{\Z}{\mathsf{Z}}

\title{Efficient Online Variational Estimation via Monte Carlo Sampling}
\date{}

\author[$\dag$]{Mathis Chagneux}
\author[$\ddag$]{Mathias Muller}
\author[$\star$]{ Pierre Gloaguen}
\author[$\top$]{Sylvain Le Corff}
\author[$\ddag$]{Jimmy Olsson}

\affil[$\dag$]{{\small T\'el\'ecom Paris, Institut Polytechnique de Paris, Palaiseau.}}
\affil[$\ddag$]{{\small KTH Royal Institute ofTechnology.}}
\affil[$\star$]{{\small Universit\'e Bretagne Sud, UMR CNRS 6205, LMBA, F-56000 Vannes, France.}}
\affil[$\top$]{{\small Sorbonne Universit\'e, Universit\'e Paris Cit\'e, CNRS, Laboratoire de Probabilit\'es, Statistique et Mod\'elisation, LPSM, F-75005 Paris, France.}}

\begin{document}

\maketitle

\begin{abstract}
  This article addresses online variational estimation in parametric state-space models. We propose a new procedure for efficiently computing the evidence lower bound and its gradient in a streaming-data setting, where observations arrive sequentially.
The algorithm allows for the simultaneous training of the model parameters and the distribution of the latent states given the observations.
It is based on i.i.d. Monte Carlo sampling, coupled with a well-chosen deep architecture, enabling both computational efficiency and flexibility. The performance of the method is illustrated on both synthetic data and real-world air-quality data.
The proposed approach is theoretically motivated by the existence of an asymptotic contrast function and the ergodicity of the underlying Markov chain, and applies more generally to the computation of additive expectations under posterior distributions in state-space models.
\end{abstract}

\section{Introduction}
\label{sec:intro}

This work considers \emph{state-space models} (SSMs) where the law of observations $(Y_t)_{t \in \nset}$ is governed by an unobserved, or `hidden', Markov chain $(X_t)_{t \in \nset}$, and the finite dimensional distributions of $(X_t, Y_t)_{t \in \nset}$ are given by parametric distributions indexed by $\thp \in \Theta$. Learning the parameter $\theta$ in this context is a complex task, as it usually requires access to the \textit{joint-smoothing distributions}, defined at time $t$ as the conditional distribution of the latent states $X_{0:t}$\footnote{$a_{u:v}$ is a short-hand notation for $(a_u,\ldots,a_v)$.} given the corresponding observations $Y_{0:t}$. 
This paper addresses the challenging problem of \emph{online estimation} in SSMs, which consists of sequentially learning both $\thp$ and the smoothing distributions as data stream in real time. 

A first possible approach to this task is based on sequential Monte Carlo (SMC) smoothing (see \cite{chopin2020introduction} and the references therein). 
Such SMC-based methods for online learning come with strong theoretical guaranties \cite{le2013convergence,olsson2017efficient,tadic2020asymptotic,gao2025parameter}, but typically suffer from the curse of dimensionality in the state dimension. This recently motivated the use of variational inference (VI) \cite{blei2017} as an alternative, as VI has demonstrated strong empirical performance for online inference in large-scale settings, particularly through stochastic variational frameworks  \cite{hoffman2013stochastic, broderick2013streaming}. 

 Taking a variational approach, the smoothing distributions are approximated by simpler distributions, the \emph{variational distributions}, depending on some unknown parameter $\parvar\in\vparsp$.  Both $\thp$ and $\parvar$ are then learned by maximizing a proxy of the log likelihood, the \emph{evidence lower bound} (ELBO), which is generally done using gradient ascent, thus requiring the computation of the ELBO's gradient. 
A key feature of these approaches is the choice of the variational distribution, which must be suited for online learning. \cite{campbell2021online} rely on structured variational distributions that both mimic the Markovian form of the  smoothing distributions (see \Cref{sec:model:background}) and are well suited to online learning. 
The authors explore online variational additive smoothing for the recursive computation of the ELBO and its gradients, using a Bellman-type recursion inspired by reinforcement learning and recursive maximum likelihood methods.  A major limitation of this algorithm is that, \emph{at each iteration $t$}, it requires solving an inner optimization problem to learn a regression function serving as a proxy for a conditional expectation. This step can be computationally expensive and depends critically on the appropriate choice of the regression class.  
More broadly, a key drawback of existing online variational learning methods is the lack of theoretical guaranties for the proposed algorithms. 

In this paper, we develop a theoretically grounded framework for online variational learning. 
Our approach is rooted in a stochastic approximation perspective, and, following \cite{mastrototaro:muller:olsson:2025}, we show that for a structured variational distributions parameterized by flexible function classes (such as deep neural networks), online variational learning amounts to maximizing a lower bound (COLBO) on the limiting time-normalized asymptotic log likelihood, also known as the asymptotic contrast function. Building on this theoretical framework, we first formulate an ideal---but generally intractable---stochastic approximation  algorithm that maximizes the COLBO, and then propose a Monte Carlo version of then same, using an efficient importance sampling approach that avoids any regression task and outperforms the algorithm of \cite{campbell2021online} in terms of computation time.
The main contributions of this paper can be summarized as follows:

\begin{itemize}
    \item We provide a theoretically grounded framework for online variational learning, showing that it can be viewed as Robbins---Monro algorithm.
    \item We propose a computationally efficient online estimator of the COLBO its gradient in the context of SSMs. In contrast to computationally intensive SMC or Markov chain Monte Carlo (MCMC) methods, our algorithm, which we refer to as \emph{Recursive Monte Carlo Variational Inference} (\RMCVI) to emphasize its iterative structure, relies on simple i.i.d. samples from the marginal variational distributions. 
     \item The proposed algorithm is not limited to the online computation and optimization of the COLBO and can be directly adapted to compute more general expectations, training losses, or gradients under distributions that admit a Markovian structure. 
    \item Experimentally, we demonstrate the performance of our estimator  both on synthetic and real world data.
\end{itemize}

\section{Related work}
Our methodology draws on recent advances in smoothing  methods for SSMs by (i) proposing a Monte Carlo approach for approximating conditional expectations  and (ii) relying on a structured variational family.

\emph{SMC for online learning in SSMs.} The original approaches to online smoothing is based on SMC methods. 
We refer the reader to \cite[Section~11]{douc2014nonlinear} for a presentation of the general concepts underlying SMC algorithms, and to \cite{olsson2017efficient,gloaguen2022pseudo,dau2022complexity} for more recent developments and applications of these methods to online smoothing.  
Theoretical guaranties for online optimization of the log-likelihood using SMC can be found in \cite{le2013convergence,olsson2017efficient,tadic2020asymptotic,gao2025parameter}.

\emph{Offline VI for SSMs.} Early works on VI for SSMs primarily focus on offline estimation \cite{johnson2016, krishnan2017structured, lin2018variational}, \emph{i.e.}, they require prior access to the entire observation sequence $Y_{0:t}$ in order to compute the gradients of the ELBO. These approaches rely on a \textit{forward} factorization of the variational distribution, which is incompatible with online learning. 

\emph{Online VI for SSMs.} \cite{Marino2018AGM, zhao2020variational, dowling2023real} opts to trade smoothing for filtering by targeting the marginal distributions at each timestep with variational distributions that depend only on observations up to $t$. 
By designing a new variational family, \cite{campbell2021online} show that the ELBO and its gradient can be recursively expressed via conditional expectations, providing a natural framework for online learning.  
The authors approximate these conditional expectations at each time step using functional approximations. 

\emph{Online Variational SMC.} A recent approach to improving SMC for online learning is to adapt the particle proposal dynamics by optimizing a variational objective \cite{zhao2022streaming, mastrototaro:olsson:2024, mastrototaro:muller:olsson:2025}. In this framework, the variational method aims to approximate the locally optimal proposal while simultaneously learning the model parameters, rather than targeting the full joint smoothing distribution. 

\emph{Theoretical guaranties for VI.} In the offline setting, \cite{chagneux2022amortized} established the first theoretical results on error control in VI for SSM, building on the variational family of \cite{campbell2021online}. 
Moreover, \cite{mastrototaro:muller:olsson:2025} provided an online variational SMC framework that provably maximizes a well-defined asymptotic contrast function via stochastic approximation, which serves as the conceptual inspiration for our work.

\section{Model and background}
\label{sec:model:background}

Consider an SSM $(X_t, Y_t)_{t \in \nset}$, where $(X_t)_{t \in \nset}$ is a discrete-time Markov chain on $\Xsp \eqdef \statesp$. 
The distribution of $X_0$ has density $\chi$ with respect to the Lebesgue measure $\mu$ and for all $t \in \nset$, the conditional distribution of $X_{t+1}$ given $X_{0:t}$ depends only on $X_t$ and has transition density $\hd{t}(X_{t},\cdot)$. 
In SSMs, it is assumed that the states of the Markov chain are only partially observed through an observation process $(Y_t)_{t \in \nset}$ taking on values in $\Ysp \eqdef \obssp$. 
For every $t \in \nset$,  the observations $Y_{0:t}$ are assumed to be conditionally independent given $X_{0:t}$ and such that  the conditional distribution of each $Y_s$, $s \in \intset{0}{t}$, given $X_{0:t}$ depends only on the corresponding $X_t$ and has density $\md{t}(X_t,\cdot)$ with respect to the Lebesgue measure. 
The model is then defined, for all $ t \in \nset$, by the joint distributions 
\begin{equation}
\label{eq:joint:ssm:dist}
\jointd{0:t}(x_{0:t},y_{0:t}) \eqdef \chi(x_0) \md{0}(x_0, y_0)  \prod_{s=1}^{t} \hd{s}(x_{s-1}, x_s)\md{s}(x_s, y_s) 
\end{equation}
of the hidden states and the observations.

A classical learning task in SSMs is \textit{state inference}, which consists of estimating the joint-smoothing distribution, \emph{i.e.} the conditional distribution of $X_{0:t}$ given $Y_{0:t}$, given by
$$
\post{0:t}(x_{0:t}) \eqdef \frac{\jointd{0:t}(x_{0:t},y_{0:t})}{\jointd{0:t}(y_{0:t})} \eqsp,
$$
where $\jointd{0:t}(y_{0:t}) \eqdef \int \jointd{0:t}(x_{0:t},y_{0:t}) \, \rmd x_{0:t} $ is the observed-data likelihood. The marginal of this joint distribution with respect to the state $x_t$ at time $t$ is known as the \textit{filtering distribution} at time $t$, and its density with respect to the Lebesgue measure is denoted by $\post{t}$. It is straightforward to show that the density of the joint-smoothing distribution satisfies the so-called \textit{backward decomposition}
\begin{equation}
\label{eq:post:backward:factorization}
    \post{0:t}(x_{0:t}) = \post{t}(x_t)\prod_{s=1}^{t} \backwd{s-1|s}(x_s, x_{s-1}) \eqsp, 
\end{equation}
where each \emph{backward kernel}
\begin{equation}
    \label{eq:true_backwd}
    \backwd{s-1|s}(x_s, x_{s-1}) \propto 
    \hd{s}(x_{s - 1},x_{s})\post{s - 1}(x_{s-1})
\end{equation}
is the conditional probability density function of $X_{s-1}$ given $(X_{s}, Y_{0:s-1})$. The backward decomposition stems from the fact that the hidden process is still Markov when evolving conditionally on the observations, with time-inhomogeneous transition densities \eqref{eq:true_backwd}.  
However, since the filtering distributions are intractable, the backward kernels generally lack closed-form expressions. 

In variational approaches, the smoothing distribution $\post{0:t}$ is approximated by selecting a candidate from a parametric family $\{ \vd{0:t}\}_{\parvar \in \vparsp}$, known as the \textit{ variational family}, where $\vparsp$ is a parameter space. 
This parameter is learned jointly with $\thp$ by maximizing the ELBO 
\begin{equation}
\label{eq:ELBO}
\ELBO{t} = \pE_{\vd{0:t}}\left[\log \frac{\jointd{0:t}(X_{0:t},Y_{0:t})}{ \vd{0:t}(X_{0:t})}\right] \eqsp, 
\end{equation}
where $\pE_{\vd{0:t}}$ denotes expectation under $\vd{0:t}$. Needless to say, the form of the variational family is crucial in this approach. 
Motivated by the backward decomposition, some works impose structure on the variational family through a factorization of $\vd{0:t}$. 
A variational counterpart of \eqref{eq:post:backward:factorization}, introduced by \cite{campbell2021online}, is given by
\begin{equation}
\label{eq:varpost:backward:factorization}
 \vd{0:t}(x_{0:t})=  \vd{t}(x_t)\prod_{s=1}^{t}\vd{s-1\vert s}(x_{s},x_{s - 1})\eqsp,
\end{equation}
where $\vd{t}$ (resp. $\vd{s - 1\vert s}(x_{s},\cdot)$) are user-designed probability density functions, the variational kernels, whose parameters are learned from data.  A decisive advantage of this factorization is that it respects the true dependencies in \eqref{eq:post:backward:factorization}.
Moreover,  \cite{chagneux2022amortized} established an upper bound on the error when expectations with respect to the smoothing distribution are approximated by expectations with respect to variational 
distributions that satisfy this backward factorization.

\paragraph{Defining the variational distributions.}\label{sec:var_dist}
The variational inference framework requires the definition of the variational distributions involved in \eqref{eq:varpost:backward:factorization}, \textit{i.e.} the set of distributions $(\vd{t}, \vd{t| t+1})_{t\in \nset}$. We here define a new variational family in a recursive manner, with shared parameters over time. 
This new design based on backward factorization is efficient in terms of online parameter learning (as the number of parameters does not grow with $t$) and creates a link between variational kernels that will ensure an efficient importance sampling procedure in Section \ref{sec:MC:approach}. 

Our online learning challenge requires that variational distributions (i) can be recursively defined using streaming data $(y_t)_{t \in \nset}$, (ii) are related to each other to mimic the relation given by \eqref{eq:true_backwd} between the backward kernel and the filtering distribution, (iii) are easy to sample from to perform Monte Carlo approximations.
For this purpose, each $\vd{t}$ is chosen as a parametric distribution belonging to the exponential family (in our experiments, the Gaussian family in $\Xsp$), defined by some parameter $\eta_t$ belonging to a parameter space $\mathcal{E}$. More precisely, we define intermediate quantities $(a_t)_{t \in \nset}$  belonging to some user-defined space $\Asp$, initialized at some arbitrary value $a_0 \in \Asp$ and governed by a deterministic recursion $a_t = \amap{\parvar}(a_{t-1}, y_t)$. Based on these quantities, we let, for each $t$, $\eta_t = \fat(a_t)$. Here the mappings $\amap{\parvar}$ and $\fat$ are used-defined.
This framework creates a link between variational filtering distributions, in the spirit of the filtering recursions in SSMs.
The variational backward kernels are then defined from on the basis of this flow of distributions by setting, for all $t \in \nsetpos$, 
\begin{equation}
\label{eq:def:varbackwd}
\vd{t-1|t}(x_t, x_{t-1})  \propto \vd{t-1}(x_{t-1})\fwdpot{t}(x_{t-1}, x_t)\eqsp,
\end{equation}
where $(\fwdpot{t})_{t\geq 0}$ are potential functions on $\Xsp^2$ of form $\fwdpot{t}(x_{t-1}, x_t) = \exp( \langle \tilde{\eta}_t^{\parvar}(x_t), T(x_{t-1}) \rangle)$, with $\tilde{\eta}_{t}^\parvar(x_t) = \fxt(x_t)$ and $T(x_{t-1})$ being a natural parameter and a sufficient statistic, respectively, for the chosen exponential family.
Eqn.~\eqref{eq:def:varbackwd} ensures that $\vd{t-1|t}(x_t, \cdot)$ will be a probability density function with natural parameter $\eta_{t-1|t}^\parvar = \eta_{t-1}^\parvar + \tilde{\eta}_t^\parvar$. 
In this convenient setting, the backward kernels $\vd{t-1|t}$ can have arbitrarily complex dependencies on $x_t$, while their densities are derived analytically from the potentials.
This enables straightforward Monte Carlo sampling procedures and direct computations of normalizing constants (which are required in our proposed algorithm, \emph{e.g.}, in \eqref{eq:backwardweights:offline} below), while at the same time avoiding the reduction of our variational kernels to mere transformations or linearizations (\emph{e.g.}, linear Gaussian kernels). 
In is important to note that the parameters of functions $\amap{\parvar},\fat,\fxt$ are shared across time, leading to an amortized framework. 
In our experiments, these functions are neural networks, and $\parvar$ are their weights.

\section{Online variational learning}
\label{sec:onlinelearning}

\paragraph{The asymptotic contrast function and the COLBO.}
In the context of maximum likelihood estimation, the online learning of an unknown model parameter $\thp$ is known as recursive maximum likelihood (RML) \cite{legland1997recursive}.
RML focuses on maximizing the \emph{asymptotic contrast function} $\contr(\thp) \eqdef \lim_{t\to\infty} t^{-1} \log p_\thp(Y_{0:t})$ (a.s.), which serves as a foundational objective in this setting. 
If the data are generated by an SSM belonging to the parametric family of interest, characterized by a ‘true’ parameter $\thp^\ast$, then, under suitable identifiability conditions, the asymptotic contrast is maximised at $\thp^\ast$. 
Consequently, the maximum likelihood estimator (MLE) is strongly consistent in the sense that it converges almost surely to $\thp^\ast$ as $t$ tends to infinity.

Since the asymptotic contrast $\contr(\thp)$ is intractable, we use a similar approach to that of \cite{mastrototaro:muller:olsson:2025} and instead aim to maximise online, with respect to $(\thp, \parvar)$, the \emph{contrast lower bound} (COLBO) given by 
\begin{equation} \label{eq:def:COLBO}
\COLBO(\thp, \parvar) \eqdef \lim_{t\to\infty} \frac{1}{t}\ELBO{t}
\leq \contr(\thp) \quad \mbox{(a.s.)}.
\end{equation}

\paragraph{Stochastic approximation viewpoint. }

Following standard RML ideas, online variational learning seek to maximize $\COLBO(\thp, \parvar)$ by updating $(\thp, \parvar)$ in the direction of its gradient. 
Because $\COLBO(\thp, \parvar)$ is defined as a long-run time average, it is natural to pursue a stochastic approximation approach with the goal of solving 
\(
\nabla_{\thp, \parvar}\,\COLBO(\thp, \parvar)=0.
\)
Indeed, defining $\gdiffELBO{t}\eqdef \grad_{\thp,\parvar}\ELBO{t}-\grad_{\thp,\parvar}\ELBO{t-1}$, we may write  
\begin{equation} \label{eq:grad:elbo:finite}
\frac{1}{t}\grad_{\thp,\parvar}\,\ELBO{t} = \frac{1}{t}\sum_{s=1}^t\gdiffELBO{s} + \frac{1}{t} \grad_{\thp,\parvar} \ELBO{0}. 
\end{equation}
Interpreted through the lens of ergodic theory, it is tempting to see the long term limit of the right-hand side of \eqref{eq:grad:elbo:finite} as an expectation, allowing $\grad_{\thp,\parvar}\COLBO(\thp, \parvar)$ to be expressed as a \emph{mean field} (\emph{i.e.}, the deterministic drift) that governs the long-run behavior of Robbins--Monro stochastic updates. 
In this idealized framework, it is natural to use the observed gradient increments to build a sequence $(\thp_t,\parvar_t)_{t\in \nset}$  leading to an ideal procedure summarized in  Algorithm~\ref{alg:ideal}.
However, since each term $\gdiffELBO{s}$ depends on the whole historical record $Y_{0:s}$ and $(Y_t)_{t \in \nset}$ is not a Markov process, the existence of the limit of \eqref{eq:grad:elbo:finite} as $t$ tends to infinity is non-trivial.
Actually, to the best of our knowledge, no theoretical results exist justifying the existence of the COLBO objective \eqref{eq:def:COLBO} and its gradient, both of which are necessary to place this learning procedure on firm theoretical ground. In the coming sections we provide theoretical results motivating  this existence
for the variational family of \Cref{sec:model:background} (Eqn.~\eqref{eq:varpost:backward:factorization} and \eqref{eq:def:varbackwd}). 
In particular, we show that the COLBO objective and its gradient can be justified via the law of large numbers, applied to a suitably constructed Markov chain. On the basis of this justification, Algorithm 1 can be motivated as a stochastic approximation scheme with state-dependent Markov noise. 

\begin{algorithm}[b]
\caption{Ideal algorithm (exact recursions)}
\label{alg:ideal}
\begin{algorithmic}[1]
\STATE \textbf{For each} \(t \in \nset\), at the arrival of $y_t$, using Prop. \ref{prp:elbo:grad:recursion}:
\STATE \quad Compute \(\Hstat{t},\Gstat{t},\Fstat{t}\).
\STATE \quad Compute 
         $\gELBOt{\thp}{t}{\thp_t,\parvar_t}$ and $\gELBOt{\parvar}{t}{\thp_t,\parvar_t}$.
\STATE \quad \textbf{Update:} 
\begin{align*}
    \parvar_{t+1} 
           &\gets \parvar_t 
           + \gamma^{\parvar}_{t+1}\left(\gELBOt{\parvar}{t}{\thp_t,\parvar_t}
- \gELBOt{\parvar}{t-1}{\thp_t,\parvar_t}\right)\,,\\
    \thp_{t+1} 
           &\gets \thp_t 
           + \gamma^{\thp}_{t+1}\left(\gELBOt{\thp}{t}{\thp_t,\parvar_t}
- \gELBOt{\thp}{t-1}{\thp_t,\parvar_t}\right)\,,      
\end{align*}
        where $(\gamma^{\thp}_{t+1}, \gamma^{\parvar}_{t+1})$ are learning rates satisfying the usual Robbins Monro conditions.
\end{algorithmic}
\end{algorithm}

\paragraph{Recursive expression of the ELBO and its gradient. }
In the following, we assume that we are given a sequence $(y_t)_{t \in \nset}$ of observations, and leave the dependence on these implicit in the notation. 
Write $\addfelbo{0}(x_{-1}, x_0) \eqdef \log(\chi(x_0) \md{0}(x_0, y_0))$, $q^{\parvar}_{-1\vert 0}(x_0, x_{-1})=1$ 
and for $t \in \nsetpos$,
\begin{equation}
\label{eq:elbo:pair:terms}
\addfelbo{t}(x_{t-1}, x_{t}) \eqdef 
\log \left( \frac{\hd{t}(x_{t-1}, x_t)\md{t}(x_t, y_t)}{q^{\parvar}_{t-1\vert t}(x_t, x_{t-1})} \right),
\end{equation}
which allows to rewrite the ELBO \eqref{eq:ELBO} as 
 $$
 \ELBO{t} = \pE_{ \vd{0:t}}\left[\sum_{s = 0}^t \addfelbo{s}(x_{s-1}, x_{s}) - \log \vd{t}(x_t)\right]\,.
 $$
 The following proposition provides recursive formulas for computing both the ELBO and its gradient. 
 For brevity, we let $\mathbb{E}^{\parvar,x_t}_{t-1|t}$ denote expectation under $\vd{t-1|t}(x_t, \cdot)$.  
\begin{proposition}
\label{prp:elbo:grad:recursion}
For every $t \in \nset$ and $(\thp, \parvar) \in \thpsp \times \vparsp$, the ELBO and its gradient are given by 
\begin{align*}
\ELBO{t} &= \expect{\vd{t}}{\Hstat{t}(X_t)} - \expect{\vd{t}}{\log\vd{t}(X_t)} \eqsp, \\
\grad_\parvar \ELBO{t}&= \expect{\vd{t}}{\grad_\parvar \log \vd{t}(X_t) \, \Hstat{t}(X_t) + \Gstat{t}(X_t)}
\eqsp, \\
\grad_\thp \ELBO{t} &= \expect{\vd{t}}{\Fstat{t}(X_t)} \eqsp,
\end{align*}
where the real-valued function $\Hstat{t}$ on $\Xsp$ and its gradients $\Gstat{t} \eqdef \grad_{\parvar} \Hstat{t}$ and 
$\Fstat{t} \eqdef \grad_{\thp} \Hstat{t}$ satisfy the recursions
\begin{align*}
     \Hstat{t}(x_t) &= \mathbb{E}^{\parvar,x_t}_{t-1|t}{\left[\Hstat{t-1}(X_{t-1}) + \addfelbo{t}(X_{t-1}, x_t)\right]}\\
     \Gstat{t}(x_t) &= \mathbb{E}^{\parvar,x_t}_{t-1|t}{\left[\Gstat{t-1}\left(X_{t-1}\right) + \grad_\parvar \log \vd{t-1|t}(x_t, X_{t-1}) \{\Hstat{t-1}(x_{t-1}) + \addfelbo{t}(x_{t-1}, x_t)\} \right]}\\
     \Fstat{t}(x_t) &= \mathbb{E}^{\parvar,x_t}_{t-1|t} {\left[\Fstat{t-1}(X_{t-1}) + \grad_{\thp} \addfelbo{t}(X_{t-1}, x_t)\right]}\eqsp, 
\end{align*}
with $\Hstat{0}(x_0) = \addfelbo{0}(x_{-1}, x_{0}),\Gstat{0}(x_0) = 0,$ and $\Fstat{0}(x_0)=\grad_\thp\addfelbo{0}(x_{-1}, x_{0})$. 
\end{proposition}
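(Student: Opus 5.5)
The plan is to identify $\Hstat{t}$ explicitly as a conditional expectation along the backward chain and then read off all three formulas from the backward factorization \eqref{eq:varpost:backward:factorization}. I will define
\[
\Hstat{t}(x_t) \eqdef \int \sum_{s=0}^t \addfelbo{s}(x_{s-1},x_s) \prod_{s=1}^{t} \vd{s-1|s}(x_s,x_{s-1}) \, \rmd x_{0:t-1},
\]
that is, the expectation of the additive functional $\sum_{s\le t}\addfelbo{s}$ under the backward kernels started from $x_t$. Integrating first in $x_0$, then $x_1$, and so on up to $x_{t-1}$ (Fubini, the kernels being probability densities) and using that $\addfelbo{t}$ depends only on $(x_{t-1},x_t)$ gives the recursion $\Hstat{t}(x_t)=\mathbb{E}^{\parvar,x_t}_{t-1|t}[\Hstat{t-1}(X_{t-1})+\addfelbo{t}(X_{t-1},x_t)]$ with $\Hstat{0}=\addfelbo{0}$. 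Substituting \eqref{eq:varpost:backward:factorization} into the rewritten ELBO and integrating out $x_{0:t-1}$ then gives $\ELBO{t}=\expect{\vd{t}}{\Hstat{t}(X_t)-\log\vd{t}(X_t)}$.

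For the $\thp$-gradient I will use that neither $\vd{t}$ nor the variational kernels depend on $\thp$; only $\addfelbo{s}$ does, through $\hd{s}$, $\md{s}$ and $\chi$. Differentiating under the integral sign yields $\grad_\thp\ELBO{t}=\expect{\vd{t}}{\grad_\thp\Hstat{t}(X_t)}$. Differentiating the $\Hstat{}$ recursion in the same way gives the $\Fstat{t}$ recursion, with $\Fstat{0}=\grad_\thp\addfelbo{0}$.

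For the $\parvar$-gradient I will differentiate $\int \vd{t}(x)\{\Hstat{t}(x)-\log\vd{t}(x)\}\rmd x$ by the product rule and apply the score identity $\grad_\parvar \vd{t}=\vd{t}\grad_\parvar\log\vd{t}$. The term $-\int\vd{t}\grad_\parvar\log\vd{t}$ vanishes because $\int\grad_\parvar\vd{t}=0$. What remains is $\expect{\vd{t}}{\grad_\parvar\log\vd{t}\,(\Hstat{t}-\log\vd{t})+\Gstat{t}}$.

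The stated formula has only $\Hstat{t}$ multiplying the score, so I must also account for the $-\log\vd{t}$ part. It enters as $\expect{\vd{t}}{\grad_\parvar\log\vd{t}\,\log\vd{t}}=\grad_\parvar\expect{\vd{t}}{\log\vd{t}}$, up to the vanishing term. This is the main obstacle. I will first check whether the authors intend the entropy term to be handled separately or treated as known in closed form (Gaussian entropy). If so, I would record the statement with that convention. Otherwise I would add the term $-\grad_\parvar\pE_{\vd{t}}[\log\vd{t}]$ explicitly, so that the main content of the $\parvar$-part lies in the $\Gstat{}$ recursion.

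That recursion follows by differentiating $\Hstat{t}(x_t)=\int \vd{t-1|t}(x_t,x_{t-1})\{\Hstat{t-1}(x_{t-1})+\addfelbo{t}(x_{t-1},x_t)\}\rmd x_{t-1}$ in $\parvar$. The product rule gives a score term $\grad_\parvar\log\vd{t-1|t}\{\Hstat{t-1}+\addfelbo{t}\}$ plus $\Gstat{t-1}+\grad_\parvar\addfelbo{t}$. Since $\addfelbo{t}$ contains $-\log\vd{t-1|t}$, its gradient integrates to zero against $\vd{t-1|t}$ by the score identity, which matches the stated recursion, with $\Gstat{0}=0$ because $\addfelbo{0}$ is free of $\parvar$.

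Throughout, I assume enough regularity for dominated convergence to justify interchanging gradients and integrals; this holds for the exponential-family kernels with smooth networks.
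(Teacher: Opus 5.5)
Your proposal is correct and follows essentially the same route as the paper. Both define $\Hstat{t}$ as the expectation of the additive functional under the backward kernels, obtain its recursion by Fubini, and then differentiate using $\grad_\parvar \vd{t} = \vd{t}\grad_\parvar\log\vd{t}$ and the zero-mean score; the only difference is that for $\Gstat{t}$ you differentiate the one-step recursion inductively, whereas the paper differentiates the path-space expectation under $\vd{0:t-1|t}$ and then splits it, which is a cosmetic difference. Your concern about the entropy term is well founded: the paper's own proof arrives at $\grad_\parvar \ELBO{t} = \expect{\vd{t}}{\Gstat{t}(X_t) + \{\Hstat{t}(X_t) - \log\vd{t}(X_t)\}\grad_\parvar\log\vd{t}(X_t)}$, so the displayed $\parvar$-gradient in the statement omits $-\grad_\parvar\expect{\vd{t}}{\log\vd{t}(X_t)}$, and your explicitly corrected version is the one actually proved.
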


\begin{proof}
See \Cref{appdx:proof:prp:elbo:recursion}. 
\end{proof}

Proposition~\ref{prp:elbo:grad:recursion} is of twofold interest. First, it provides a recursive scheme for the online computation of the ELBO, which will lead to a learning algorithm in Section~\ref{sec:MC:approach}. Second, it highlights the natural quantities to consider when studying the existence of the COLBO objective and its gradient.

\paragraph{Existence of the COLBO.} From now on, we assume that the observed data is generated by some SSM $(X_t, Y_t)_{t \in \nset}$, which does not necessarily belong to the parametric family considered in  \Cref{sec:model:background}. Under this assumption, it is easy to see that also the process $(Z_t)_{t \in \nset}$, where $Z_t \eqdef (X_t, Y_t, \Hstat{t}, \Gstat{t}, \Fstat{t}, a_t)$, with $\Hstat{t}$, $\Gstat{t}$, and $\Fstat{t}$ being the functions defined recursively in \Cref{prp:elbo:grad:recursion} and $a_t$ being the intermediate quantities used in the parameterization of $\vd{t}$, is a Markov chain. 
The state space and Markov kernel of $(Z_t)_{t \in \nset}$ are denoted by $(\Zsp, \Zfd)$ and $\emk$, respectively (see  \Cref{app:extended-chain}, Eqn.~\eqref{eq:def:T}, for details).
The Markov property follows from the assumed SSM dynamics of $(X_t, Y_t)_{t \in \nset}$, along with the fact that the updates of Proposition \ref{prp:elbo:grad:recursion}, as well as the update of $a_t$ from $a_{t - 1}$, are performed recursively based on the current observation $Y_t$. Denote also by $\S$ the Markov kernel of the marginal chain $(X_t, Y_t, a_t)_{t \in \nset}$. We will establish the exponential forgetting of the extended chain $(Z_t)_{t \in \nset}$ under the following assumptions.
\begin{assumption} \label{ass:unif:erg:xya:main}
    There exist $\pi \in \probmeas(\Xfd \tensprod \Yfd \tensprod \Afd)$ and $\alpha \in (0, 1)$ such that for every $t \in \nsetpos$, $\parvar \in \Phi$, and $(x, y, a) \in \Xsp \times \Ysp \times \Asp$,  
    $$
    \| (\S)^t(x, y, a) - \pi \|_{\mathsf{TV}} \leq \mixrt^t. 
    $$
\end{assumption}

\Cref{ass:unif:erg:xya:main} is discussed in \Cref{sec:discussion:unif:erg:xya}. The following assumptions are purely technical. 

\begin{assumption} \label{ass:grad:unif:bound:main}
    There exists $c \in \rsetpos$ such that for every $\parvar \in \Phi$, $a_t \in \Asp$, and $(x_{s + 1}, y_{s + 1}) \in \Xsp \times \Ysp$,     
    \begin{itemize}
        \item[(i)] $\mathbb{E}^{\parvar,x_{s+1}}_{s|s+1}\left[| \grad_\parvar \log \vd{s|s + 1}(x_{s + 1}, X_s) |^2 \right] \leq c^2$, 
        \item[(ii)] $\mathbb{E}^{\parvar,x_{s+1}}_{s|s+1}\left[|\addfelbo{s}(X_s, x_{s + 1})|^2 \,\right] \leq c^2$,  
        \item[(iii)] $\mathbb{E}^{\parvar,x_{s+1}}_{s|s+1}\left[ | \grad_\thp \addfelbo{s}(X_s, x_{s + 1}) | \right] \leq c$. 
    \end{itemize}
\end{assumption}

\begin{assumption}\label{ass:bounded-pot:main}
There exist constants $0<\varepsilon^{-}<\varepsilon^{+}<\infty$ such that, for every $t\in \nsetpos$, $(x_{t-1}, x_t) \in\X^2$, and $\parvar\in\Phi$,
\[
   \varepsilon^{-} \le \fwdpot{t}(x_{t-1},x_t) \le \varepsilon^{+}.
\]
\end{assumption}
The following theorem establishes the geometric ergodicity of the extended Markov chain, if not for all $f$ in the space $ \bmf(\Zfd)$ of bounded measurable functions on $\Zsp$, so at least for a subclass $\lip(\Zfd) \subset \bmf(\Zfd)$ of Lipschitz functions. More precisely, $f \in \lip(\Zfd)$ if there exists $\varphi \in \bmf(\Xfd \tensprod \Yfd \tensprod \Afd)$ such that for every $(x, y, a, h, h', u, u', v, v') \in \Xsp \times \Ysp \times \Asp \times \bmf(\Xfd)^6$, writing $\mathsf{s} = (x, y, a)$,
    \begin{itemize}
        \item[(i)] $|f(\mathsf{s}, h, u, v)| \leq \varphi(\mathsf{s})$, 
        \item[(ii)] $|f(\mathsf{s}, h, u, v) - f(\mathsf{s}, h', u', v')| \leq \varphi(\mathsf{s}) \left( \oscnorm(h - h') + \oscnorm(u - u') + \oscnorm(v - v') \right).$ 
    \end{itemize}

\begin{theorem}[Geometric ergodicity of $(Z_t)_{t \in \nset}$]
\label{th:main}
Assume  \ref{ass:unif:erg:xya:main}--\ref{ass:bounded-pot:main}. Then there exist $\rho \in (0, 1)$ and a functional $\statemap : \bmf(\Xfd)^6 \to \rsetpos$ such that for every $(\thp, \parvar) \in \Theta \times \vparsp$, $t \in \nset$, $f \in \lip(\Zfd)$, $z = (x_0, y_0, a_0, h_0, u_0, v_0) \in \Zsp$, and $z' = (x'_0, y'_0, a'_0, h'_0, u'_0, v'_0) \in \Zsp$,  
    \begin{equation}  \label{eq:K:diff:bound:final}
    |(\K)^t f(z) - (\K)^t f(z')|  \leq  \statemap(h_0, h'_0, u_0, u'_0, v_0, v'_0) \| \varphi \|_\infty \rho^t.     
    \end{equation}
    Moreover, there exists a functional $\bar{\statemap} : \bmf(\Xfd)^3 \to \rsetpos$ and a kernel $\Pi^{\thp, \parvar}$ such that for every $f \in \lip(\Zfd)$, $\Pi^{\thp, \parvar} f$ is constant and for every $t \in \nset$ and $z = (x_0, y_0, a_0, h_0, u_0, v_0) \in \Zsp$, 
    \begin{equation} \label{eq:erg:extended:chain}
    |(\K)^t f(z) - \Pi^{\thp, \parvar} f| \leq \bar{\statemap}(h_0, u_0, v_0) \| \varphi \|_\infty \rho^t.   
    \end{equation}
\end{theorem}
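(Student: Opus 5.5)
The plan is a coupling argument that separates the ``driving'' chain $(X_t,Y_t,a_t)$, which mixes by \Cref{ass:unif:erg:xya:main}, from the function-valued components $(\Hstat{t},\Gstat{t},\Fstat{t})$. These components are updated by backward kernels that contract in oscillation thanks to \Cref{ass:bounded-pot:main}.

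\textbf{Step 1: contraction of the backward operators.} By \eqref{eq:def:varbackwd} and \Cref{ass:bounded-pot:main}, for all $x_t,x_t'$ we have
\[
\vd{t-1|t}(x_t,\cdot)\ge \frac{\varepsilon^-}{\varepsilon^+}\,\vd{t-1|t}(x'_t,\cdot),
\]
since both are proportional to $\vd{t-1}$ times a potential bounded in $[\varepsilon^-,\varepsilon^+]$. This is a Doeblin condition. Hence the operator $h\mapsto \mathbb{E}^{\parvar,\cdot}_{t-1|t}[h]$ contracts oscillation by the factor $\delta\eqdef 1-\varepsilon^-/\varepsilon^+<1$, uniformly in $\parvar$, $a_t$ and $y_t$.

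\textbf{Step 2: propagation of the initial conditions.} Two copies of the $h$-recursion driven by the same $(x,y,a)$-path but started from $h_0,h'_0$ satisfy $\oscnorm(h_t-h'_t)\le\delta^t\oscnorm(h_0-h'_0)$, because the additive term $\addfelbo{t}$ cancels. For $v$ the argument is identical. For $u$, the difference contains the extra term $\mathbb{E}[\grad_\parvar\log \vd{t-1|t}\,(h_{t-1}-h'_{t-1})]$. Cauchy--Schwarz together with \Cref{ass:grad:unif:bound:main}(i) bounds it by $c\,\oscnorm(h_{t-1}-h'_{t-1})$, after centring: we may subtract a constant from $h$, since the score has zero mean. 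We then obtain $\oscnorm(u_t-u'_t)\le \delta^t\oscnorm(u_0-u'_0)+c\sum_s\delta^{t-s}\delta^{s}\oscnorm(h_0-h'_0)\lesssim (1+ct)\delta^t$. Any $\rho_1\in(\delta,1)$ absorbs the factor $t$.

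\textbf{Step 3: coupling the driving chains.} Let $z,z'$ be arbitrary. Split the horizon at $t/2$. Over the first half, use the uniform TV bound of \Cref{ass:unif:erg:xya:main}: there is a coupling of the two $(X,Y,a)$-chains such that they coincide from time $\lfloor t/2\rfloor$ on, except on an event of probability at most $\alpha^{\lfloor t/2\rfloor}$. The marginal chain $\S$ does not depend on $h,u,v$, so this is a maximal coupling applied to the chain alone. On the coupled event the driving paths agree after $t/2$, and Step 2 applies from time $t/2$ onward with a contraction $\rho_1^{t/2}$ of the discrepancy at time $t/2$. That discrepancy is bounded using Step 2's recursions and \Cref{ass:grad:unif:bound:main}(ii),(iii). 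The resulting bound on the oscillations of $h_{t/2},u_{t/2},v_{t/2}$ grows at most polynomially in $t$ and depends on the initial $h_0,u_0,v_0$; this is where $\statemap$ comes from. On the non-coupled event we use property (i) of $\lip(\Zfd)$: $|f|\le\|\varphi\|_\infty$. On the coupled event we use property (ii), with the same $\mathsf{s}$ for both copies. Combining gives \eqref{eq:K:diff:bound:final} with $\rho=\max(\alpha^{1/2},\rho_1^{1/2})$, up to enlarging $\rho$ to absorb polynomial factors.

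\textbf{Step 4: the limit.} Equation \eqref{eq:K:diff:bound:final} together with the Markov property yields $|(\K)^{t+1}f(z)-(\K)^tf(z)|\le \int \K(z,\rmd z')\,|(\K)^tf(z')-(\K)^tf(z)|$. This is geometrically small provided $\statemap(h_0,h_1,\dots)$ is integrable under one step. That holds because $h_1-h_0$ etc.\ have controlled oscillation by \Cref{ass:grad:unif:bound:main}. So $(\K)^tf(z)$ is Cauchy, and its limit is independent of $z$ by \eqref{eq:K:diff:bound:final}. Define $\Pi^{\thp,\parvar}f$ as this limit and sum the geometric tail to obtain \eqref{eq:erg:extended:chain}. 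The main obstacle is Step 3: the functional components $h_t,u_t$ are not uniformly bounded, since they grow with $t$ through the accumulated $\addfelbo{s}$. The argument therefore must work only with oscillations of \emph{differences}, in which the additive terms cancel, and must keep the $u$-cross term under control via centring of $h$.
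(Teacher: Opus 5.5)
Your proof is correct, but it is organized differently from the paper's. The paper never couples the two driving chains. It fixes reference functions $(\tilde h_0,\tilde u_0,\tilde v_0)$ and uses the Tadi\'c--Doucet telescoping decomposition: $f_{1,t}$ is written as the $t$-step contraction of $h_0-\tilde h_0$ (and of the $u,v$ analogues), plus a sum over $s$ of differences between the recursion restarted from the reference at time $s-1$ and at time $s$. Each such difference is $O(\varrho^{t-s})$ by Lemma~\ref{hmap:vmap:contraction:lemma}. It also depends only on the driving path from time $s-1$ on, so the dependence on the initial state enters only through $\Delta \S_s$, i.e.\ through $\alpha^s$. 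Summing gives $\sum_s \varrho^{t-s}\alpha^s \lesssim t(\alpha\vee\varrho)^t$.

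Your route replaces this bookkeeping with a single mixing step at $t/2$ followed by pure contraction. It rests on the fact that $\oscnorm(h_t)$, $\oscnorm(u_t)$, $\oscnorm(v_t)$ stay bounded along every driving path, not just polynomially as you state. For example, $\oscnorm(h_{s+1})\le(1-\varepsilon)\oscnorm(h_s)+2c$, and similarly for $v$, and for $u$ with an extra $c\,\oscnorm(h_s)+2c^2$. This is more transparent. It also makes your $\statemap$ depend on the initial functions only through oscillations, so the one-step integrability in your Step 4 is immediate.

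The price is twofold:
\begin{itemize}
\item A worse rate, $\rho>\max(\alpha,1-\varepsilon)^{1/2}$ instead of $\rho>\alpha\vee\varrho$. This is irrelevant for the statement.
\item You need a path coupling of the $(X,Y,a)$-chains. That requires a gluing construction: a maximal coupling at time $\lfloor t/2\rfloor$, with the pasts drawn from regular conditional laws.
\end{itemize}
You can drop the coupling altogether. With $m=\lfloor t/2\rfloor$ and $g=(\K)^{t-m}f$, write $(\K)^t f(z)=\mathbb{E}_z[g(Z_m)]$. Replace $(h_m,u_m,v_m)$ by $(0,0,0)$ at cost $O(\|\varphi\|_\infty\, t\,(1-\varepsilon)^{t/2})$ by your Step 2 and the oscillation bounds. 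Then apply Assumption~\ref{ass:unif:erg:xya:main} to the bounded function $\mathsf{s}\mapsto g(\mathsf{s},0,0,0)$.

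One small slip in Step 1: the normalizing constants of $\vd{t-1|t}(x_t,\cdot)$ and $\vd{t-1|t}(x'_t,\cdot)$ differ, so the pairwise ratio is only bounded below by $(\varepsilon^-/\varepsilon^+)^2$. Your contraction factor $1-\varepsilon^-/\varepsilon^+$ should instead be derived from the minorization $\vd{t-1|t}(x_t,\cdot)\ge(\varepsilon^-/\varepsilon^+)\,\vd{t-1}$, as in Lemma~\ref{lem:bounded:dc:Q}.
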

\begin{proof}
See \Cref{app:extended-chain}.
\end{proof}

Although the contraction \eqref{eq:erg:extended:chain} does not hold in the total variation norm (due to the restriction to test functions in $\lip(\Zfd)$), the quantity $\Pi^{\thp, \parvar}$ provided by the same theorem can be regarded as a candidate for the unique stationary distribution of $(Z_t)_{t \in \nset}$. Moreover, by \Cref{prp:elbo:grad:recursion}, each term $\gdiffELBO{s} = \gdiffELBO[Z_{s - 1:s}]{}$ depends explicitly on the consecutive states $Z_{s - 1:s}$ of the extended chain. By the law of large numbers for Markov chains, we may expect that (a.s.),  
\begin{equation} \label{eq:mean:field:representation}
    \lim_{t \to \infty} \frac{1}{t} \, \grad_{\thp,\parvar} \ELBO{t} 
    =
    \iint \gdiffELBO[z, z']{} \, 
    \Pi^{\thp, \parvar}(\rmd z) \, \K(z, \rmd z').   
\end{equation}
Letting the limit \eqref{eq:mean:field:representation} serve as the mean field of a stochastic approximation scheme with state-dependent Markov noise \cite{karimi:etal:2019}, a recursive Robbins--Monro algorithm finding a stationary point of the COLBO gradient is given by
$$
(\theta_{t + 1}, \parvar_{t + 1}) \gets (\theta_t, \parvar_t) + \gamma_{t + 1} \mathcal{G}^{\thp_t, \parvar_t} \langle Z_{t - 1}, Z_t \rangle,  
$$
and $Z_{t + 1} \sim T^{\thp_{t + 1}, \parvar_{t + 1}}(Z_t, \cdot)$, where  $(\gamma_t)_{t \in \nsetpos}$ is a sequence of step sizes satisfying the usual assumptions. This procedure is summarized in \Cref{alg:ideal}, which uses distinct step-size sequences $(\gamma_t^\thp)_{t \in \nsetpos}$ and $(\gamma_t^\parvar)_{t \in \nsetpos}$ for updating the model and variational parameters.

\section{Online Monte Carlo approximation}
\label{sec:MC:approach}

We now derive a practical version of the ideal Algorithm~\ref{alg:ideal}.  \Cref{prp:elbo:grad:recursion} suggests that it is possible to estimate the ELBO and its gradient recursively. The key feature of our Monte Carlo algorithm is that each conditional expectation in the recursion only needs to be estimated on a \textit{finite support}, bypassing the regression step required at each time step in  \cite{campbell2021online}. This results in a more efficient procedure, as confirmed empirically in Section \ref{sec:xp:chaotic:rnn}. The algorithm proceeds as follows:

First, sample $\lbrace \sample{0}{i} \rbrace_{i = 1}^N\overset{\text{i.i.d.}}{\sim} \vd{0}$, and set 
$$
\approxHstat{0}{i} = \Hstat{0}(\sample{0}{i}), \quad \approxGstat{0}{i} = \Gstat{0}(\sample{0}{i}), \quad \approxFstat{0}{i} = \Fstat{0}(\sample{0}{i})\,.
$$ 
At time $t \in \nsetpos$, having access to a Monte Carlo sample $\lbrace \sample{t - 1}{i} \rbrace_{1 = i}^N$ from $\vd{t - 1}$ and approximations $\approxHstat{t - 1}{i},\approxGstat{t - 1}{i},\approxFstat{t - 1}{i}$ of $\Hstat{t - 1}(\sample{t - 1}{i}),\Gstat{t - 1}(\sample{t - 1}{i}), \Fstat{t - 1}(\sample{t - 1}{i})$, respectively, 
sample independently $\lbrace \sample{t}{i} \rbrace_{i = 1}^N$ from $\vd{t}$ and update 
\begin{align}
\approxHstat{t}{i} =& \sum_{j=1}^N \nrmbackwdweight{t}{i,j}
\left(\approxHstat{t-1}{j} + \addfelbo{t}(\sample{t-1}{j}, \sample{t}{i}) \right), \label{eq:approx:Hstat:offline}\\ 
\approxGstat{t}{i} =& \sum_{i=1}^N \nrmbackwdweight{t}{i,j}\left\{\approxGstat{t - 1}{j} + \grad_\parvar \log \vd{t-1|t}(\xi_{t}^i, \xi_{t-1}^j)\left( \approxHstat{t-1}{j} + \addfelbo{t}(\xi_{t-1}^j, \xi_t^i) \right)\right\},  \label{eq:approx:Gstat:offline}\\
\approxFstat{t}{i} =& \sum_{i=1}^N \nrmbackwdweight{t}{i,j}\left(\approxFstat{t - 1}{j} + \grad_\thp \addfelbo{t}(\sample{t-1}{j}, \sample{t}{i})\right),  \label{eq:approx:Fstat:offline}
\end{align}
where
\begin{equation}
    \label{eq:backwardweights:offline}
    \nrmbackwdweight{t}{i,j} \eqdef \frac{\vd{t-1 \vert t}(\sample{t}{i}, \sample{t-1}{j})/\vd{t-1}(\sample{t-1}{j})}{\sum_{k=1}^N \vd{t-1 \vert t}(\sample{t}{i}, \sample{t-1}{k})/\vd{t-1}(\sample{t-1}{k})}\eqsp.
\end{equation}

Estimators (\ref{eq:approx:Hstat:offline}--\ref{eq:approx:Fstat:offline}) are \textit{self-normalized importance sampling} estimators  of the updates of Proposition \ref{prp:elbo:grad:recursion}, and \eqref{eq:backwardweights:offline} provides the (shared) importance weights of these estimators. 
Note that we cannot perform direct Monte Carlo approximation on the basis of samples from $\vd{t-1\vert t}(\sample{t}{i},\cdot)$, as we would not have access to any approximations of the values of the functionals $\Hstat{t-1}$ and $\Gstat{t-1}$ at the sampled points. 
The use of importance sampling is therefore a  prerequisite for updating the approximations. 
Moreover, note that the design of variational distributions imposed by \eqref{eq:def:varbackwd} creates a link between the the target $\vd{t - 1\vert t}$ distribution and $\vd{t}$, making the latter a natural proposal distribution.

Once $\{(\approxHstat{t}{i},\approxGstat{t}{i},\approxFstat{t}{i})\}_{i = 1}^N$ are computed, approximations of the ELBO and its gradient at time $t$ are obtained by 
\begin{align}
\approxELBO{t} &= \frac{1}{N}\sum_{i = 1}^N\left( \approxHstat{t}{i} - \log\vd{t}(\sample{t}{i})\right),\label{eq:approx:ELBO}\\
\hGELBO{\parvar}{t} &= \frac{1}{N}\sum_{i = 1}^{N}\left( \grad_\parvar \log \vd{t}(\sample{t}{i})  \approxHstat{t}{i} + \approxGstat{t}{i}\right), \label{eq:approx:grad:phi}\\
\hGELBO{\thp}{t} &= \frac{1}{N}\sum_{i = 1}^{N}\approxFstat{t}{i}. \label{eq:approx:grad:theta}
\end{align}
The full procedure, which we refer to as {\RMCVI} (Recursive Monte Carlo Variational Inference), is detailed in \Cref{alg:onlinenabla} in \Cref{appdx:alg}. 
Appendix \ref{appdx:full:algo} provides refinements to increase computational efficiency and reduce the variance of the gradient estimator.
It is worth noting that {\RMCVI} computes estimated gradients using both $(\parvar_t,\thp_t)$ and $(\parvar_{t-1},\thp_{t-1})$ (Eqns.~(\ref{eq:approx:update:phi}--\ref{eq:approx:update:theta}), Appendix~\ref{appdx:alg}),  which introduces a deviation from the ideal updates of Algorithm~\ref{alg:ideal}.
Such approximations---commonly employed in RML settings---are crucial for enabling a feasible  practical online implementation.

\section{Experiments}
\label{sec:exp}

We now evaluate the proposed algorithm on several streaming data inference tasks.
Our goal is to demonstrate that the method can jointly learn both the latent posterior approximation and the model parameters in a fully sequential manner while requiring substantially less computation than existing online approaches (Table \ref{table:1_step_smoothing_chaotic_rnn}).

In all experiments, the variational filtering distributions $\vd{t}$ are chosen to be in the Gaussian family. For the non-linear models  (Sections~\ref{sec:xp:chaotic:rnn} and \ref{sec:exp_air_quality})
we implement the deterministic recursion $a_t = \amap{\parvar}(a_{t-1}, y_t)$ as an RNN, where both the update function $\amap{\parvar}$ and the parameter mapping $\fat$ are parameterized by MLPs with tanh activation functions.
The variational backward kernels are defined via the potentials $\fwdpot{t}$ in \eqref{eq:def:varbackwd}, which are parameterized by similar neural networks (except for the linear Gaussian case, where we exploit analytical conjugation to derive exact backward kernels).
In addition to this architecture, some control-variate tricks are implemented to reduce the variance of the gradient estimator (see details in Appendix~\ref{appdx:experiments} and ~\ref{appdx:full:algo}).

\subsection{Linear-Gaussian HMM}
\label{sec:LGHMM}
We first assess our algorithm on a linear Gaussian SSM. Consider $(X_t, Y_t)_{t \in \nset}$ in $\mathbb{R}^{\statedim}\times \mathbb{R}^{\obsdim}$, with $X_0 \sim \mathcal{N}(\mu_0, Q_0)$, $Y_0 = G X_0 + \varepsilon_0$, and, for all $t \in \nsetpos$,
\begin{align*}
    X_t &= F X_{t-1} + \nu_t, & \nu_t \overset{\text{i.i.d}}{\sim} \mathcal{N}(0, I_{\statedim}),\\
    Y_t &= G X_t + \varepsilon_t,& \varepsilon_t \overset{\text{i.i.d}}{\sim} \mathcal{N}(0,  I_{\obsdim}),
\end{align*}
where $\theta = (F, G)$ is the parameter to be learned. In this case, the true smoothing distributions are Gaussian and can be computed via closed-form recursions (the Kalman smoother), providing an analytical reference for evaluation.
The variational family $\vd{0:t}$ is parameterized using Gaussian conditionals and marginals; see \Cref{sec:details:lin:Gauss} for details. 

The first experiment is run with $\statedim=\obsdim=10$, and a sequence of $T=50000$ observations. 
Figure \ref{fig:training_lgm_50k} shows the evolution of the ELBO through the learning of ($\thp,\parvar$) as well as the posterior mean of the hidden states on a test sequence of observations never seen by the model. 
The right panel shows a particular dimension of the hidden state for the test sequence, as well as the learned posterior distribution at different iterations.  As the number of observations grows, our estimator learns a mapping that produces the true posterior distribution, corresponding to the oracle Kalman smoother. 
Figure~\ref{fig:lgm_mae_compare} shows the MAE in parameter estimation of $F$ (a) and $G$ (b). Our method achieves lower error than the regression based method of \cite{campbell2021online} in roughly one quarter of the runtime when using $N=1000$ samples for the importance weights. 
Overall, these results highlight the computational efficiency of the proposed updates without compromising statistical accuracy.

\begin{figure}[t]
    \centering
    \includegraphics[width=\columnwidth]{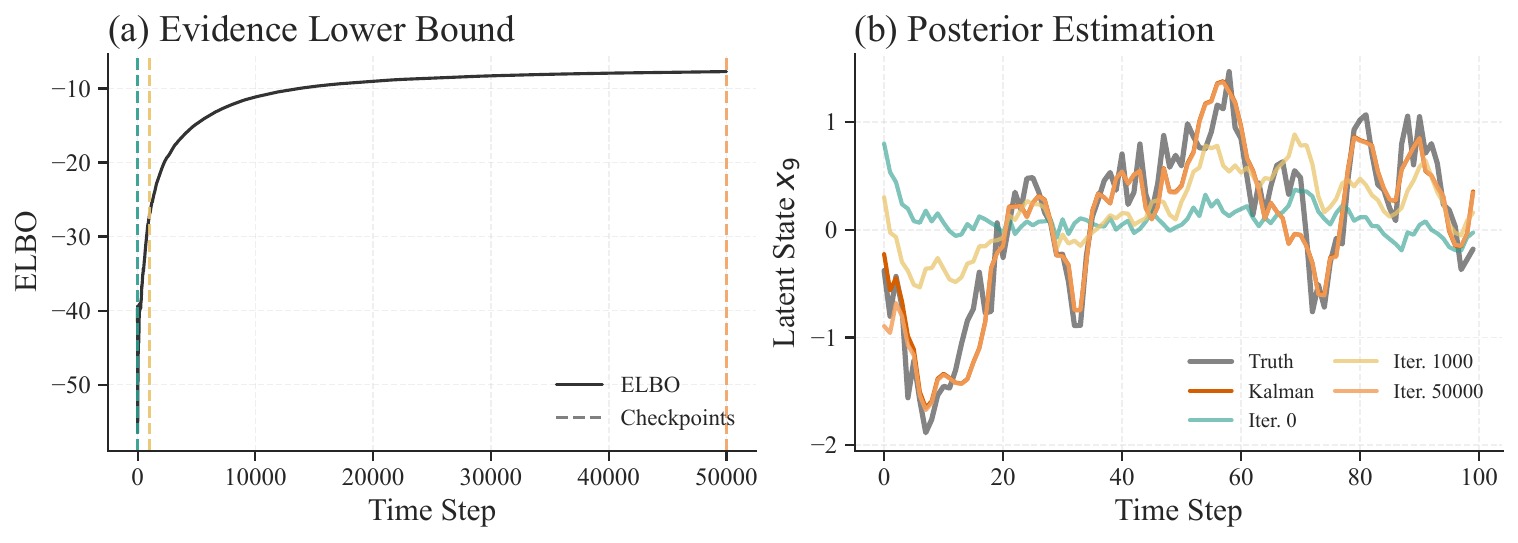}
    \caption{(a) Evolution of $\approxELBO{t}/t$ during the online learning in the linear Gaussian SSM. Vertical lines indicate times at which state estimation is made on a test sequence. (b) State estimation on a test sequence (only one particular dimension is displayed).}
    \label{fig:training_lgm_50k}
\end{figure}

\begin{figure}[t]
\centering
\includegraphics[width=\columnwidth]{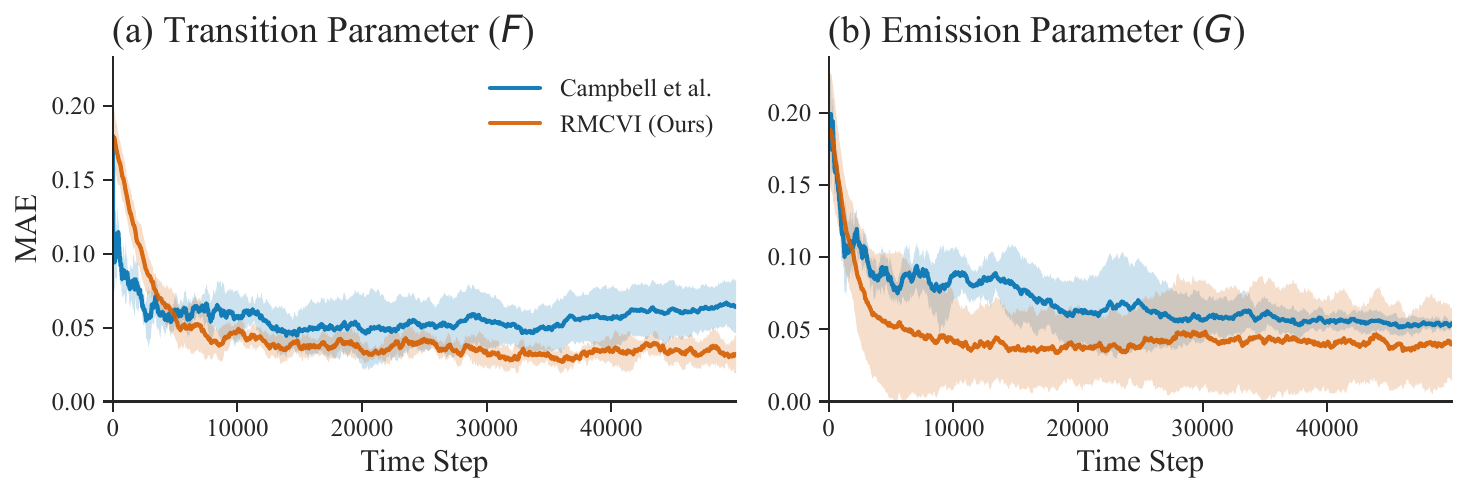}
\caption{Model-parameter learning in the linear–Gaussian HMM.
Mean-absolute errors of the transition ($F$) and emission ($G$) matrices for our method and \cite{campbell2021online}.}
\label{fig:lgm_mae_compare}
\end{figure}

\subsection{Chaotic recurrent neural network.}
\label{sec:xp:chaotic:rnn}
We now consider the model used in \cite{campbell2021online}, where $X_0 \sim \mathcal{N}(0, Q)$, $Y_0 = X_0 + \varepsilon_0$, and, for $t \in \nsetpos$,
\begin{align*}
X_t &= X_{t-1} + \frac{\Delta}{\rho}\left(\gamma W \tanh{(X_{t-1})} - X_{t-1}\right) + \eta_t\eqsp,\\
Y_t &= X_t + \varepsilon_t \eqsp,
\end{align*}
where $(\eta_t)_{t \in \nsetpos}$ and $(\varepsilon_t)_{t \in \nset}$ are mutually independent sequences of i.i.d. $\mathcal{N}(0, Q)$ and   Student-$t$  random variables, respectively.   
We set $\statedim=\obsdim=5$, and the use same true parameters as \cite{campbell2021online} (see Appendix~\ref{appdx:chaotic:rnn}) and $N=500$ importance samples.

\paragraph{Filtering and one-step smoothing. }
To compare our method with the one of \cite{campbell2021online} on this model, we reproduce the 1-step smoothing experiment of their work \cite[Appendix~B.2]{campbell2021online}.
Specifically, we evaluate the ability of both approaches to estimate the conditional laws of $X_{t-1}$ given $Y_{0:t}$ and of $X_t$ given $Y_{0:t}$ by learning $\expect{\vd{t-1:t}}{X_{t-1}}$ and $\expect{\vd{t}}{X_t}$.
In order to perform the same comparison, we mimic the non-amortized framework of the original paper (details are provided Appendix~\ref{appdx:chaotic:rnn}).
Table~\ref{table:1_step_smoothing_chaotic_rnn} reports the 1-step smoothing and filtering errors as defined in Eqn.~\eqref{eq:1-step_error}.
We also report average computation time per gradient steps. 
With comparable errors,  
{\RMCVI} is about $5$ times faster than the regression approach.

\begin{table}[b]
    \centering
    \small
    \begin{tabular}{||c||c|c|c||} 
        \hline
        Method & 1-Smooth. & Filt. & Time\\ [0.5ex] 
        \hline
        {\RMCVI} (ours) & 8.9 (0.2) & 10.3 (0.2) & 1 ms \\ 
        \hline
        \cite{campbell2021online} & 9.2 (0.2) & 10.3 (0.2) & 4.8 ms \\
        \hline
    \end{tabular}
    \caption{
    Time per gradient step, 1-step smoothing and filtering RMSE ($\times 10^{-2}$) (defined in Eqn.~\eqref{eq:1-step_error}) for the chaotic RNN.  
    }
    \label{table:1_step_smoothing_chaotic_rnn}
\end{table}

\paragraph{Online learning.}
Moving beyond the fixed-parameter setting, we evaluate our method in a true streaming regime where both the variational parameters and selected generative parameters are learned online. 
Concretely, $T=5\times 10^5$ observations are processed with updates of the parameters  $\gamma$ and  $\rho$.
Figure~\ref{fig:chaotic_learning} (a) shows the MAE between parameter estimates and the true value. 
The  central panel shows the dynamics of $\approxELBO{t}/t$ with checkpoint markers, at which state estimation on held-out sequences is performed (right panel) for a specific state dimension. 
These results show that the proposed scheme remains stable and accurate while simultaneously learning $(\rho,\gamma)$ in this highly nonlinear regime.

\begin{figure*}[ht]
    \centering
    \includegraphics[width=\textwidth]{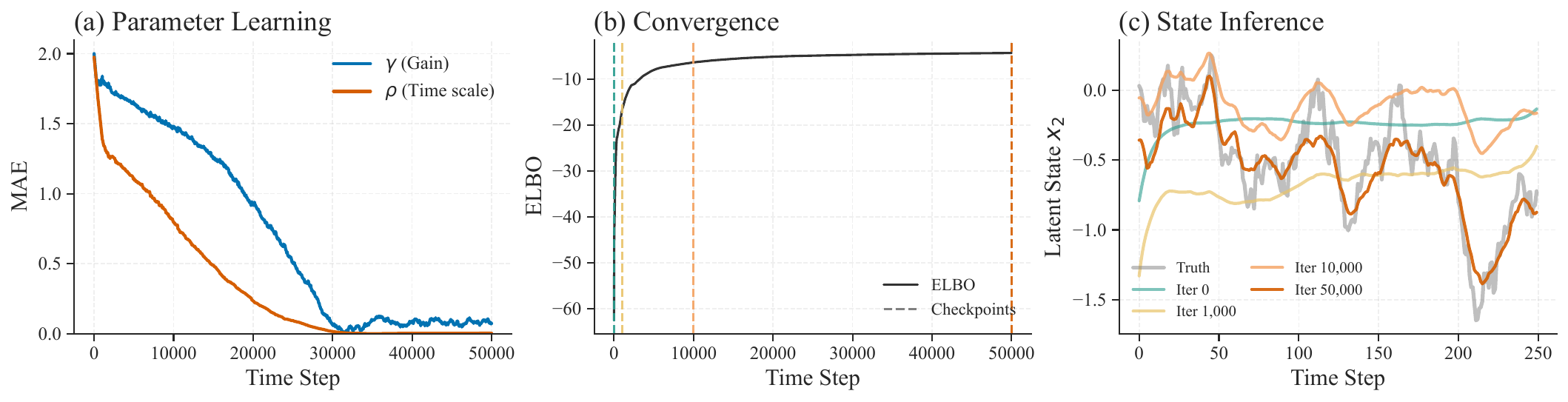}
    \caption{(a) Parameter MAE. (b) Approximate $\widetilde{\mathcal{L}}_t/t$ with checkpoint markers. (c) State estimation on a test sequence for one latent dimension. Colored lines/markers correspond to the same checkpoints.}
    \label{fig:chaotic_learning}
\end{figure*}

\subsection{Air-quality Data}
\label{sec:exp_air_quality}
We evaluate the framework on the UCI Air-Quality dataset \cite{uci_air_quality}, which consists of hourly averaged responses from a chemical sensor array alongside meteorological data. 
We process the data into an 8-dimensional observation vector $y_t \in \mathbb{R}^8$ spanning approximately one year ($T \approx 9,300$ steps). 
This benchmark is characterized by frequent periods of sensor failure, where valid signals are absent for extended durations. 
Rather than imputing these irregularities offline, we process the stream directly to rigorously test the method's ability to maintain coherent belief states during blackout periods. 
Visualizations of the data and more details on the signal characteristics and preprocessing are provided in Appendix~\ref{appdx:air_quality}.

We model the air quality dynamics using a non-linear Gaussian SSM with residual transitions. 
Let $X_t \in \mathbb{R}^{5}$ and $Y_t \in \mathbb{R}^{8}$ denote the latent state and observations respectively. The generative process is defined as:
\begin{align*}
    X_t &= X_{t-1} + f_\theta(X_{t-1}) + \nu_t \eqsp, & \nu_t \sim \mathcal{N}(0, Q_\theta) \\
    Y_t &= g_\theta(X_t) + \varepsilon_t \eqsp, & \varepsilon_t \sim \mathcal{N}(0, R_\theta)
\end{align*}
where $f_\theta$ and $g_\theta$ are neural networks parameterized by $\theta$ (with tanh activations), and the noise terms have diagonal covariance matrices, which are also learned. Here we used $N=20$ importance samples for the Monte Carlo estimates.

\paragraph{Online predictive performance.}
Our primary focus is the model's performance in an  online setting where all parameters must be learned from a cold start. 
We evaluate the model's ability to learn complex dynamics by measuring the one-step-ahead prediction RMSE on the five primary pollutants (CO, NO$_x$, NO$_2$, C$_6$H$_6$, O$_3$).
We compare our {\RMCVI} method against two baselines: a probabilistic online LSTM \cite{salinas2020deepar} with a Gaussian output  trained sequentially via maximum likelihood, and Online Variational SMC (OVSMC, \cite{mastrototaro:olsson:2024}).

Figure~\ref{fig:online_results} demonstrates the robustness and accuracy of our approach. As shown in Panel (a), we introduce a smoothing experiment with artificial sensor failure. While the purely autoregressive LSTM is limited to filtering and thus tracks the corrupted signal, {\RMCVI} effectively recovers the underlying ground truth. Similarly, panel (b) confirms that {\RMCVI} matches the predictive performance of the LSTM and outperforms OVSMC. 
Thus, {\RMCVI} combines the forecasting power of autoregressive networks with the advantage of variational smoothing, all while maintaining lower computational costs than particle-based methods.

\begin{figure}[ht]
    \centering
    \includegraphics[width=0.8\linewidth]{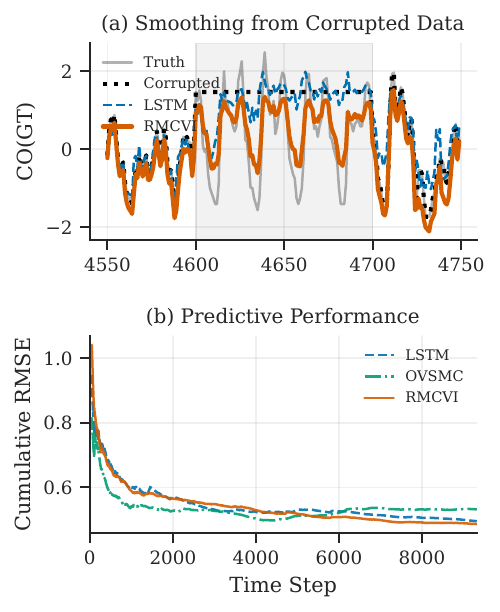}
    \vspace{-2mm}
    \caption{\textbf{Simultaneous Variational Learning and Prediction.} 
    (a) Smoothing reconstruction during sensor failure (black dotted). {\RMCVI} (orange) recovers the Truth (gray) while LSTM (blue) overfits the corruption.
    (b) Cumulative RMSE for one-step-ahead prediction averaged across all pollution features.}
    \label{fig:online_results}
\end{figure}

\section{Conclusion}

We introduced a theoretically grounded online variational learning algorithm for SSM. The performance of our method is assessed with synthetic and real-world datasets, where it is shown to be more efficient than recent alternatives both for smoothing and prediction tasks.
An important  future work concerns the theoretical analysis of the online Monte Carlo version of the algorithm, which is crucial to obtain quantitative guarantees and optimize hyperparameters.

\bibliography{online_vi}
\bibliographystyle{apalike}

\appendix

\section{Full online gradient estimator}
\label{appdx:alg}
\begin{algorithm}[H]
\caption{Online gradient estimator}\label{alg:onlinenabla}
\begin{algorithmic}
\STATE {\bfseries Input:} initial parameter estimate $(\thp_0, \parvar_0)$, step sizes $\{\gamma^\thp_t,\gamma^\parvar_t\}_{t \geqslant 1}$.
\STATE Sample $\lbrace \sample{0}{i} \rbrace_{i = 1}^N$ independently from $\vd[\parvar_0]{0}$\; 
\STATE For $1\leqslant i \leqslant n$, set $\approxHstat[\parvar_0]{0}{i} \gets \Hstat{0}(\sample{0}{i}),\approxGstat[\parvar_0]{0}{i} \gets 0, \approxFstat[\parvar_0]{0}{i} \gets \Fstat{0}(\sample{0}{i})$\;
\STATE set 
$\hGELBOt{\parvar}{0}{0} \gets N^{-1} \sum_{i = 1}^{N} \grad \log \vd[\parvar_0]{0}(\sample{0}{i}) \approxHstat[\parvar_0]{0}{i}$ and $\hGELBOt{\thp}{0}{0} \gets N^{-1} \sum_{i = 1}^{N}\approxFstat[\parvar_0]{0}{i}$\;
\STATE update $\parvar_1 \gets \parvar_0 + \gamma^{\parvar}_1\hGELBOt{\parvar}{0}{0}$ and $\thp_1 \gets \thp_0 + \gamma^{\thp}_1\hGELBOt{\thp}{0}{0}$\;
\FOR{$t \leftarrow 1$ to $n-1$}
\STATE sample $\lbrace \sample{t}{i} \rbrace_{i = 1}^N$ independently from $\vd[\parvar_t]{t}$\; 
\STATE set 
\begin{equation}
\label{eq:backwardweights:online}
    \nrmbackwdweight[\parvar_t]{t}{i,j} \gets \frac{\vd[\parvar_{t}]{t-1 \vert t}(\sample{t}{i}, \sample{t-1}{j})/\vd[\parvar_{t - 1}]{t-1}(\sample{t-1}{j})}{\sum_{k=1}^N \vd[\parvar_{t}]{t-1 \vert t}(\sample{t}{i}, \sample{t-1}{k})/\vd[\parvar_{t - 1}]{t-1}(\sample{t-1}{k})} \ ; 
\end{equation}
\STATE update 
\begin{align}
\approxHstat[\parvar_t]{t}{i} &\gets \sum_{j=1}^N \nrmbackwdweight[\parvar_t]{t}{i,j}
\left(\approxHstat[\parvar_{t-1}]{t-1}{j} + \addfelbo[\parvar_t]{t}(\sample{t-1}{j}, \sample{t}{i}) \right) \eqsp, \label{eq:approx:Hstat:online}\\ 
\approxGstat[\parvar_t]{t}{i} &\gets \sum_{i=1}^N \nrmbackwdweight[\parvar_t]{t}{i,j}\left\{\approxGstat[\parvar_{t-1}]{t - 1}{j} + \grad_\parvar \log \vd[\parvar_{t}]{t-1|t}(\xi_{t}^i, \xi_{t-1}^j)  \left( \approxHstat[\parvar_{t - 1}]{t-1}{j} + \addfelbo[\parvar_t]{t}(\xi_{t-1}^j, \xi_t^i) \right)\right\}\eqsp,
\label{eq:approx:Gstat:online}\\
\approxFstat[\thp_t]{t}{i} &\gets \sum_{i=1}^N \nrmbackwdweight[\parvar_t]{t}{i,j}\left\{\approxFstat[\thp_{t-1}]{t - 1}{j} + \grad_\thp \addfelbo[\parvar_t]{t}(\xi_{t-1}^j, \xi_t^i) \right\}\eqsp;
\label{eq:approx:Fstat:online}
\end{align}
\STATE set 
\begin{align*}
\hGELBOt{\parvar}{t}{t} &= \frac{1}{N}\sum_{i = 1}^{N}\left\{ \grad_\parvar \log \vd[\parvar_t]{t}(\sample{t}{i})  \approxHstat[\parvar_t]{t}{i} + \approxGstat[\parvar_t]{t}{i}\right\}\,, \\
\hGELBOt{\thp}{t}{t} &= \frac{1}{N}\sum_{i = 1}^{N}\approxFstat[\thp_t]{t}{i}\eqsp.
\end{align*}
\STATE Update
\begin{align}
\parvar_{t+1} \gets \parvar_t + \gamma^{\parvar}_{t+1}\left(\hGELBOt{\parvar}{t}{t} - \hGELBOt{\parvar}{t-1}{t-1}\right)\label{eq:approx:update:phi}\,\\
\thp_{t+1} \gets \thp_t + \gamma^{\thp}_{t+1}\left(\hGELBOt{\thp}{t}{t} - \hGELBOt{\thp}{t-1}{t-1}\right)\label{eq:approx:update:theta}\,;
\end{align}
\ENDFOR
\end{algorithmic}
\end{algorithm}

\section{Proof of proposition \ref{prp:elbo:grad:recursion}}
\label{appdx:proof:prp:elbo:recursion}

By definition,
$$
\ELBO{t} = \expect{\vd{0:t}}{\log \frac{\jointd{0:t}(X_{0:t},Y_{0:t})}{ \vd{0:t}(X_{0:t})}} =  \expect{\vd{0:t}}{\log \frac{\chi(X_0) \md{0}(X_0, Y_0)  \prod_{s=1}^{t} \hd{s}(X_{s-1}, X_s)\md{s}(X_s, Y_s)}{\vd{t}(X_t)\prod_{s = 1}^t\vd{s-1\vert s}(X_{s}, X_{s - 1})}}\eqsp,
$$
which, using \eqref{eq:elbo:pair:terms}, yields 
$$
\ELBO{t} =  \expect{\vd{0:t}}{\sum_{s = 0}^t \addfelbo{s}(X_{s-1}, X_s) - \log\vd{t}(X_t)}\eqsp.
$$
Therefore, 
\begin{equation}
\ELBO{t} =\expect{\vd{t}}{\Hstat{t}(X_t)} - \expect{\vd{t}}{\log\vd{t}(X_t)}\eqsp,\label{eq:ELBO:wrt:h}
\end{equation}
where 
\begin{align*}
\Hstat{t}(x_t) = 
\int \left(\sum_{s = 0}^t \addfelbo{s}(x_{s-1}, x_s)\right) \prod_{s = 1}^{t}\vd{s-1\vert s}(x_{s}, x_{s-1}) \, \rmd x_{0:t-1} = \expect{\vd{0:t-1\vert t}(x_t,\cdot)}{\afelbo{0:t}(X_{0:t-1}, x_t)}\eqsp,
\end{align*}
with $\vd{0:t-1\vert t}(x_t, x_{0:t-1}) = \prod_{s = 1}^t  \vd{s-1\vert s}(x_{s}, x_{s-1})$, and we introduced the notation $\afelbo{0:{t}}(x_{0:t}) = \sum_{s = 0}^{t}\addfelbo{s}(x_{s-1}, x_{s})$. 
Then, note that
\begin{align*}
\Hstat{t}(x_t) 
 &= \int \left(\Hstat{t -1}(x_{t-1}) + \addfelbo{t}(x_{t-1}, x_t)\right)\vd{t-1\vert t}(x_{t}, x_{t-1}) \, \rmd x_{t-1}\\
 &= \expect{\vd{t-1\vert t}(x_t,\cdot)}{\Hstat{t -1}(X_{t-1}) + \addfelbo{t}(X_{t-1}, x_t)}\eqsp.
\end{align*}
This establishes the recursive expression of the ELBO. 

Now, we consider the gradient of the ELBO with respect to $\parvar$. Write 
\begin{align*}
    \grad_{\parvar} \ELBO{t} &= \grad_{\parvar}  \expect{\vd{t}}{\Hstat{t}(X_t)} - \grad_{\parvar}  \expect{\vd{t}}{\log\vd{t}(X_t)}\eqsp\\
    &= \grad_{\parvar} \int \left(\Hstat{t}(x_t) - \log \vd{t}(x_t)\right)\vd{t}(x_t) \, \rmd x_t\\
    &= \int \left(\grad_{\parvar} \Hstat{t}(x_t) - \grad_{\parvar} \log \vd{t}(x_t)\right)\vd{t}(x_t) \, \rmd x_t + \int \left(\Hstat{t}(x_t) - \log \vd{t}(x_t)\right) \grad_{\parvar}  \vd{t}(x_t) \, \rmd x_t \\
   &= \expect{\vd{t}}{\grad_{\parvar} \Hstat{t}(X_t) + \left(\Hstat{t}(X_t) - \log \vd{t}(X_t)\right)  \grad_{\parvar} \log \vd{t}(X_t)}\eqsp, 
\end{align*}
where we used that $\mathbb{E}_{\vd{t}}[\grad\log\vd{t}(X_t)] = 0$. Then, writing $\Gstat{t}(x_t) = \grad_{\parvar} \Hstat{t}(x_t)$,
\begin{align*}
    \Gstat{t}(x_t) &= \grad_{\parvar} \expect{\vd{0:t-1\vert t}(x_t,\cdot)}{\afelbo{0:t}(X_{0:t-1}, x_t)}\\
    &=\expect{\vd{0:t-1\vert t}(x_t,\cdot)}{\left(\grad_{\parvar} \log \vd{0:t-1\vert t} \times \afelbo{0:t}\right)(X_{0:t-1}, x_t)}+ \expect{\vd{0:t-1\vert t}(x_t,\cdot)}{\grad_{\parvar} \afelbo{0:t}(X_{0:t-1}, x_t)}\eqsp.
\end{align*}
Remembering that 
$$
\afelbo{0:t} = \log\left(\chi(x_0) \md{0}(x_0, y_0)\right) + \sum_{s=1}^t \log \left(\hd{s}(x_{s-1}, x_s)\md{s}(x_s, y_s)\right) - \sum_{s=1}^t \log \vd{s-1\vert s}(x_s, x_{s-1}),
$$
we obtain that $\grad_{\parvar} \afelbo{0:t}(X_{0:t-1}, x_t) = -\grad_{\parvar} \log \vd{0:(t-1)\vert t}(X_{0:t-1}, x_t)$, 
which has zero expectation under $\vd{0:t-1\vert t}(x_t,\cdot)$. Thus, 
$$
\Gstat{t}(x_t) = \expect{\vd{0:t-1\vert t}(x_t,\cdot)}{\left(\grad_{\parvar} \log \vd{0:(t-1)\vert t} \times \afelbo{0:t}\right)(X_{0:t-1}, x_t)}\eqsp.
$$

To establish the recursion for $\Gstat{t}(x_t)$, write 
\begin{align}
    \Gstat{t}(x_t) &= \mathbb{E}_{\vd{0:t-1\vert t}(x_t,\cdot)}\left[\left(\grad_{\parvar} \log \vd{0:(t-2)\vert t-1}(X_{0:t-1}) + \grad_{\parvar} \log \vd{t - 1\vert t}(X_{t-1}, x_{t}) \right) \right. \nonumber \\
    &\hspace{5.5cm}\left.\times\left(\afelbo{0:t-1}(X_{0:t-1}) + \addfelbo{t}(X_{t-1}, x_t)\right)\right] \label{eq:appdx:beg:G}\\
    &= \expect{\vd{t-1\vert t}(x_t,\cdot)}{\Gstat{t-1}(X_{t-1})} \label{eq:appdx:Gtm1} \\
    &\ +  \expect{\vd{t-1\vert t}(x_t,\cdot)}{\grad_{\parvar} \log \vd{t - 1\vert t}(X_{t-1}, x_{t})\left(\expect{\vd{0:(t-2)\vert t-1}}{\afelbo{0:t-1}(X_{0:t-1})} + \addfelbo{t}(X_{t-1}, x_t)\right)} \nonumber\\
    &\ +\expect{\vd{t-1\vert t}(x_t,\cdot)}{\addfelbo{t}(X_{t-1}, x_t)\times \expect{\vd{0:(t-2)\vert t-1}}{\grad_{\parvar} \log \vd{0:(t-2)\vert t-1}(X_{0:t-1})}}\eqsp,  \nonumber 
\end{align}
which yields
$$
\Gstat{t}(x_t) = \expect{\vd{t-1\vert t}}{\Gstat{t-1}(X_{t-1}) + \grad_{\parvar} \log \vd{t - 1\vert t}(X_{t-1}, x_{t})\times \left(\Hstat{t-1}(X_{t-1}) + \addfelbo{t}(X_{t-1}, x_t) \right)}\eqsp, 
$$
which was to be established. 

Finally, let us consider the gradient w.r.t. $\thp$. 
Using \eqref{eq:ELBO:wrt:h}, we have that $\grad_\thp\ELBO{t} = \expect{\vd{t}}{\grad_\thp\Hstat{t}(X_t)}$. 
Writing $\Fstat{t} = \grad_\thp\Hstat{t}$, we obtain
\begin{align*}
    \Fstat{t}(x_t) &= \grad_\thp\expect{\vd{0:t-1\vert t}(x_t,\cdot)}{\afelbo{0:t}(X_{0:t-1}, x_t)}\\
    &= \expect{\vd{0:t-1\vert t}(x_t,\cdot)}{\grad_\thp\afelbo{0:t-1}(X_{0:t-2}, X_{t-1}) + \grad_\thp\afelbo{t-1}(X_{t-1}, x_{t})}\\
    &= \expect{\vd{0:t-1\vert t}(x_t,\cdot)}{\grad_\thp\afelbo{0:t-1}(X_{0:t-2}, X_{t-1})} + \expect{\vd{t-1\vert t}(x_t,\cdot)}{\grad_\thp\afelbo{t-1}(X_{t-1}, x_{t})}\\
    &=\expect{\vd{t-1\vert t}(x_t,\cdot)}{\Fstat{t-1}(X_{t-1})} + \expect{\vd{t-1\vert t}(x_t,\cdot)}{\grad_\thp\afelbo{t-1}(X_{t-1}, x_{t})}\,,
\end{align*}
which concludes the proof.

\section{Proof of Theorem~\ref{th:main}}
\label{app:extended-chain}


\subsection{Geometric ergodicity of the extended chain}

First, some notation. Let  $(\mathsf{E}, \mathcal{E})$ be an arbitrary state space. Then a kernel $K$ on $\mathsf{E} \times \mathcal{E}$ induces two endomorphisms, the first acting on the space $\bmf(\mathcal{E})$ of bounded measurable functions on $(\mathsf{E}, \mathcal{E})$ according to 
$$
\bmf(\mathcal{E}) \ni f \mapsto Kf(\cdot) \eqdef \int f(x) \, K(\cdot, \rmd x) \in \bmf(\mathcal{E})  
$$
and the second acting on the space $\mathsf{M}(\mathcal{E})$ of measures on $(\mathsf{E}, \mathcal{E})$ (we let $\probmeas(\mathcal{E}) \subset \mathsf{M}(\mathcal{E})$ denote the subspace of probability measures) according to 
$$
\mathsf{M}(\mathcal{E}) \ni \mu \mapsto \mu K(x, \cdot) \eqdef \int \mu(\rmd x) \, K(x, \cdot) \in \mathsf{M}(\mathcal{E}).   
$$
In addition, the product of two kernels $K$ and $L$ on $(\mathsf{E}, \mathcal{E})$ is defined as the kernel 
$$
KL : \mathsf{E} \times \mathcal{E} \ni (x, A) \mapsto \int K(x, \rmd x') \, L(x', A)
$$
on $(\mathsf{E}, \mathcal{E})$. Using this notation, we may define, for any $t \in \nsetpos$, the power $K^t$ of a kernel $K$ by multiplying $K$ by itself $t - 1$ times. 

In the following we denote, for every $t \in \nset$, by 
$$
\vk{t|t + 1}f(x_{t + 1}) \eqdef \int f(x_t) \vd{t|t + 1}(x_t, x_{t + 1}) \, \rmd x_t, \quad (x_{t + 1}, f) \times \bmf(\Xfd), 
$$
the Markov kernel induced by the transition density $\vd{t|t + 1}$.  

In \Cref{sec:onlinelearning} it is assumed that the data generating process $(X_t, Y_t)_{t \in \nset}$ is an SSM, and we denote by $\R$ its Markov transition kernel of this process. Since the auxiliary states $(a_t)_{t \in \nset}$ are generated deterministically from the observations via the mapping $\amap{\parvar}$, also the augmented process $(X_t, Y_t, a_t)_{t \in \nset}$ is Markov with transition kernel
\begin{multline*}
\S f(x_t, y_t, a_t) \eqdef \int f(x_{t + 1}, y_{t + 1}, \amap{\parvar}(a_t, y_{t + 1})) \, \R(x_t, y_t, \rmd (x_{t + 1}, y_{t + 1})), \\ (x_t, y_t, a_t, f) \in \Xsp \times \Ysp \times \Asp \times \bmf(\Xfd \tensprod \Yfd \tensprod \Afd). 
\end{multline*}

For clarity, we now briefly recall the principal assumptions of this work.

\begin{assumption}[Uniform ergodicity of $(X_t, Y_t, a_t)_{t \in \nset}$] \label{ass:unif:erg:xya}
    There exist $\pi \in \probmeas(\Xfd \tensprod \Yfd \tensprod \Afd)$ and $\alpha \in (0, 1)$ such that for every $t \in \nsetpos$, $\parvar \in \Phi$, and $(x, y, a) \in \Xsp \times \Ysp \times \Asp$,  
    $$
    \| (\S)^t(x, y, a) - \pi \|_{\mathsf{TV}} \leq \mixrt^t. 
    $$
\end{assumption}

\Cref{ass:unif:erg:xya} is discussed in Section~\ref{sec:discussion:unif:erg:xya} below.

\begin{assumption}\label{ass:bounded-pot}
There exist constants $0 < \varepsilon^- < \varepsilon^+ < \infty$ such that for every $\parvar \in \Phi$, $t \in \nsetpos$, and 
$(x, x') \in \X^2$,
$$
   \varepsilon^{-} \le \fwdpot{t}(x, x') \le \varepsilon^+.
$$
In addition, we let $\varepsilon \eqdef \varepsilon^- / \varepsilon^+$. 
\end{assumption}

\begin{assumption} \label{ass:grad:unif:bound}
    There exists $c \in \rsetpos$ such that for every $\parvar \in \Phi$, $a_t \in \Asp$, and $(x_{s + 1}, y_{s + 1}) \in \Xsp^2 \times \Ysp$, 
    \begin{itemize}
        \item[(i)] $\displaystyle
        \int | \grad_\parvar \log \vd{s|s + 1}(x_{s + 1}, x_s) |^2 \, \vk{s|s + 1}(x_{s + 1}, \rmd x_s) \leq c^2$, 
        \item[(ii)] $\displaystyle
        \int |\addfelbo{s}(x_s, x_{s + 1})|^2 \, \vk{s|s + 1}(x_{s + 1}, \rmd x_s) \leq c^2$,  
        \item[(iii)] $\displaystyle
        \int | \grad_\thp \addfelbo{s}(x_s, x_{s + 1}) | \, \vk{s|s + 1}(x_{s + 1}, \rmd x_s) \leq c$. 
    \end{itemize}
\end{assumption}

From now on we let $\Z \eqdef \Xsp \times \Ysp \times \Asp \times \bmf(\Xfd)^3$ denote the state space of the extended chain $(Z_t)_{t \in \nset}$ and let $\Zfd \eqdef \Xfd \tensprod \Yfd \tensprod \Afd \tensprod \bmffd(\Xfd)^{\tensprod 3}$ be the associated $\sigma$-field.

\begin{definition}
\label{def:lip:h}
    Let $\lip(\Zfd)$ be the set of $f \in \mf(\Zfd)$ for which there exists $\varphi \in \bmf(\Xfd \tensprod \Yfd \tensprod \Afd)$ such that for every $(x, y, a, h, h', u, u', v, v') \in \Xsp \times \Ysp \times \Asp \times \bmf(\Xfd)^6$, 
    \begin{itemize}
        \item[(i)] $|f(x, y, a, h, u, v)| \leq \varphi(x, y, a)$, 
        \item[(ii)] $|f(x, y, a, h, u, v) - f(x, y, a, h', u', v')| \leq \varphi(x, y, a) \left( \oscnorm(h - h') + \oscnorm(u - u') + \oscnorm(v - v') \right) $.  
    \end{itemize}
\end{definition}

The following is a slightly more precise statement of \Cref{th:main}, our main result. 

\begin{theorem} \label{prop:K_2:ergodicity}
    Assume \ref{ass:unif:erg:xya}, \ref{ass:grad:unif:bound}, and \ref{ass:bounded-pot}. Then there exist $\rho \in (0, 1)$ and a function $\statemap : \bmf(\Xfd)^6 \to \rsetpos$ such that for every $(\thp, \parvar) \in \Theta \times \vparsp$, $t \in \nset$, $f \in \lip(\Zfd)$, $z = (x_0, y_0, a_0, h_0, u_0, v_0) \in \Zsp$, and $z' = (x'_0, y'_0, a'_0, h'_0, u'_0, v'_0) \in \Zsp$,  
    \begin{equation}  \label{eq:K:diff:bound:final:th}
    |(\K)^t f(z) - (\K)^t f(z')| \leq  \statemap(h_0, h'_0, u_0, u'_0, v_0, v'_0) \| \varphi \|_\infty \rho^t.     
    \end{equation}
    Moreover, there exists a kernel $\Pi^{\thp, \parvar}$ on $\Zsp \times \lip(\Zfd)$ such that for every $f \in \lip(\Zfd)$, $\Pi^{\thp, \parvar} f$ is constant and for every $t \in \nset$ and $z = (x_0, y_0, a_0, h_0, u_0, v_0) \in \Zsp$, 
    \begin{equation} \label{eq:K:Sigma:diff}
    |(\K)^t f(z) - \Pi^{\thp, \parvar} f| \leq \bar{\statemap}(h_0, u_0, v_0) \| \varphi \|_\infty \rho^t,  
    \end{equation}
    where 
    \begin{equation} \label{eq:def:statemap:bar}
        \bar{\statemap}(h_0, u_0, v_0) \eqdef \frac{1}{1 - \rho} \int \statemap(h_0, h'_0, u_0, u'_0, v_0, v'_0) \, \K(z, \rmd z'). 
    \end{equation}
\end{theorem}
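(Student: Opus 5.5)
The plan is to combine a coupling of the ``exogenous'' part $(X_t, Y_t, a_t)$ with a contraction of the affine recursions of \Cref{prp:elbo:grad:recursion} in the oscillation seminorm.

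\textbf{Step 1: Doeblin contraction of the backward kernels.} By \eqref{eq:def:varbackwd} and \Cref{ass:bounded-pot}, for every $x_t$ the density $\vd{t-1|t}(x_t, \cdot)$ is proportional to $\vd{t-1}\fwdpot{t}(\cdot, x_t)$. Hence for any two $x_t, x'_t$,
\[
\vd{t-1|t}(x_t, \cdot) \geq \varepsilon \, \vd{t-1|t}(x'_t, \cdot), \qquad \varepsilon = \varepsilon^-/\varepsilon^+ .
\]
So the Dobrushin coefficient of $\vk{t-1|t}$ is at most $1-\varepsilon$, which gives $\oscnorm(\vk{t-1|t} g) \leq (1-\varepsilon)\oscnorm(g)$ for all bounded $g$.

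Consider two copies of the recursions driven by the \emph{same} $(y_t, a_t)$ but started from $(h_0,u_0,v_0)$ and $(h'_0,u'_0,v'_0)$.
\begin{itemize}
\item For $h$ and $v$ the additive terms $\addfelbo{t}$ and $\grad_\thp \addfelbo{t}$ cancel, so $\oscnorm(h_t - h'_t) \leq (1-\varepsilon)^t \oscnorm(h_0 - h'_0)$, and likewise for $v$.
\item For $u$, the difference is $\vk{t-1|t}(u_{t-1}-u'_{t-1}) + \vk{t-1|t}\bigl(\grad_\parvar \log \vd{t-1|t}\,(h_{t-1}-h'_{t-1})\bigr)$. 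The score has zero mean under $\vd{t-1|t}(x_t,\cdot)$, so we may subtract any constant from $h_{t-1}-h'_{t-1}$. Cauchy--Schwarz and \Cref{ass:grad:unif:bound}(i) then bound the second term by $c\,\oscnorm(h_{t-1}-h'_{t-1})$ uniformly in $x_t$.
\item This yields $\delta^u_t \leq (1-\varepsilon)\delta^u_{t-1} + 2c(1-\varepsilon)^{t-1}\delta^h_0$, hence $\delta^u_t \lesssim (1+t)(1-\varepsilon)^t$, which is absorbed into any $\rho > 1-\varepsilon$.
\end{itemize}
The same argument, using \Cref{ass:grad:unif:bound}(ii)--(iii), shows that the oscillations of $h_t, u_t, v_t$ stay bounded along any path by constants depending only on $\oscnorm(h_0), \oscnorm(u_0), \oscnorm(v_0)$, $c$ and $\varepsilon$. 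For example, $\oscnorm(h_t) \leq (1-\varepsilon)\oscnorm(h_{t-1}) + 2c$.

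\textbf{Step 2: coupling the exogenous chain.} From \Cref{ass:unif:erg:xya}, uniform in $\parvar$, I will build a coupling of two copies of $(X_t,Y_t,a_t)$ started at $(x_0,y_0,a_0)$ and $(x'_0,y'_0,a'_0)$. It should have a coupling time $\tau$ with $\mathbb{P}(\tau > s) \leq C\alpha_0^s$, after which the paths coincide (Doeblin-type successive maximal couplings after $m$ steps with $\alpha^m<1/2$). After $\tau$ both extended chains are driven by identical $(y_t,a_t)$, so Step 1 applies from time $\tau$ onward. Before $\tau$ only the uniform oscillation bounds of Step 1 are used.

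Split on $\{\tau \leq t/2\}$. On this event the Lipschitz property (ii) of $f\in\lip(\Zfd)$ together with the contraction over the remaining $t/2$ steps gives a bound $\|\varphi\|_\infty \cdot \mathrm{const}\cdot(1+t)(1-\varepsilon)^{t/2}$. On the complement, property (i) gives $2\|\varphi\|_\infty C\alpha_0^{t/2}$. Choosing $\rho$ with $\max\{(1-\varepsilon)^{1/2},\alpha_0^{1/2}\} < \rho < 1$ gives \eqref{eq:K:diff:bound:final:th}, with $\statemap$ collecting the oscillation-dependent constants.

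\textbf{Step 3: the limit $\Pi^{\thp,\parvar}$.} Write
\[
(\K)^{t+1} f(z) - (\K)^t f(z) = \int \bigl((\K)^t f(z') - (\K)^t f(z)\bigr)\,\K(z,\rmd z').
\]
By Step 2 its absolute value is at most $\|\varphi\|_\infty\rho^t\int\statemap\,\K(z,\rmd z')$. The increments are therefore summable, so $(\K)^t f(z)$ is Cauchy and converges to a limit $\Pi^{\thp,\parvar}f$.
\begin{itemize}
\item The limit does not depend on $z$ by \eqref{eq:K:diff:bound:final:th}.
\item Summing the geometric tail $\sum_{s\geq t}\rho^s = \rho^t/(1-\rho)$ gives \eqref{eq:K:Sigma:diff} with $\bar\statemap$ as in \eqref{eq:def:statemap:bar}.
\end{itemize}

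\textbf{Main obstacle.} I expect the main difficulty to be Step 2. One must construct a coupling of the non-compact $(X,Y,a)$ chain with geometric tails uniformly in $\parvar$, and control the $u$-component, whose recursion couples to $h$ and only has $L^2$-bounded scores. The zero-mean trick is what lets us work with oscillations rather than sup-norms there.
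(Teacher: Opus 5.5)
Your proposal is correct. Steps 1 and 3 match the paper: Step 1 is essentially \Cref{hmap:vmap:contraction:lemma}, including the zero-mean score trick for $u$, and Step 3 is the paper's construction of $\Pi^{\thp,\parvar}$ and $\bar{\statemap}$. Step 2, however, takes a genuinely different route.

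The paper never couples the driving chain $(X_t,Y_t,a_t)$. Instead it fixes reference functions $(\tilde h_0,\tilde u_0,\tilde v_0)$ and uses a Tadi\'c--Doucet telescoping decomposition: $f_{1,t}$ is written as $f_{1,t}(h_0,u_0,v_0)-f_{1,t}(\tilde h_0,\tilde u_0,\tilde v_0)$, plus $\sum_s (f_{s,t}-f_{s+1,t})$, plus $f_{t,t}$. Here $f_{s,t}$ restarts the functional recursions from the reference functions at time $s$. By the contraction lemma each increment is $O(\|\varphi\|_\infty\varrho^{t-s})$, and it depends only on the driving path from (roughly) time $s$ onward. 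Integrating it against $(S^{\phi})^s(z,\cdot)-(S^{\phi})^s(z',\cdot)$ and applying \Cref{ass:unif:erg:xya} gives $\alpha^s\varrho^{t-s}$, hence a total of order $t(\alpha\vee\varrho)^t$.

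What each route buys:
\begin{itemize}
\item The paper's argument is purely analytic and uses the total-variation bound exactly as assumed. It needs only one-step bounds from the fixed reference functions, not oscillation bounds along paths, and it gives the explicit rate $\rho\in(\alpha\vee\varrho,1)$.
\item Your coupling argument is more transparent. In exchange it needs the path-wise oscillation bounds you correctly derive, plus a successful coupling with geometric tail.
\item That coupling is standard rather than the obstacle you flag. Uniform ergodicity is equivalent to a Doeblin minorization, and block-wise maximal couplings with bridges work on the Euclidean state space. Compactness plays no role, and the tail constants depend only on $\alpha$, hence are uniform in $\parvar$.
\item Splitting at $t/2$ costs you a square root in the rate. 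Summing $\mathbb{P}(\tau=s)(1+t-s)(1-\varepsilon)^{t-s}$ over $s\le t$ instead recovers a rate of the paper's type, although the theorem only asks for some $\rho<1$ and a $\statemap$ independent of $(\thp,\parvar)$.
\end{itemize}

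One small slip in Step 1: the pairwise bound $\vd{t-1|t}(x_t,\cdot)\ge\varepsilon\,\vd{t-1|t}(x'_t,\cdot)$ holds in general only with $\varepsilon^2$, because the normalizing constants also vary with $x_t$. The clean statement is the common minorization $\vk{t-1|t}(x_t,\cdot)\ge\varepsilon\,\vd{t-1}$, which gives $\beta(\vk{t-1|t})\le 1-\varepsilon$ directly, as in \Cref{lem:bounded:dc:Q}. Your conclusion is therefore unaffected.
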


We preface the proof of \cref{prop:K_2:ergodicity} with a couple of definitions and lemmas. First, we summarize the recursion for the ELBO using the function-valued mapping 
$$
\hmap{\thp, \parvar}(\cdot, a_t, y_{t + 1}) : h_t \mapsto \int \left( h_t(x_t) + \addfelbo{t}(x_t, x_{t + 1}) \right) \vk{t|t + 1}(x_{t + 1}, \rmd x_t),  
$$
where the kernel $\vk{t|t + 1}$ and the term $\addfelbo{t}$ depend implicitly on $a_t$ and $y_{t + 1}$, respectively, implying that $h_{t + 1}(x_{t + 1}) = \hmap{\thp, \parvar}(h_t, a_t, y_{t + 1})(x_{t + 1})$. 
Based on the latter, we also define, for every $t \in \nsetpos$ and  vector $y_{1:t} \in \Ysp^t$, the composite versions 
\begin{equation} %
\label{eq:def:hmap:comp}
\hmap{\thp,\parvar}(\cdot, a_{0:t - 1}, y_{1:t}) : h_0 \mapsto  
\begin{cases}
     \hmap{\thp,\parvar}(h_0, a_0, y_1), & \mbox{for $t = 1$,} \\
     \hmap{\thp,\parvar}(\hmap[t - 1]{\thp,\parvar}(h_0, a_{0:t - 2}, y_{1:t - 1}), a_{t - 1}, y_t), & \mbox{for $t \geq 2$.} \end{cases}
\end{equation}

Using the similar notation 
$$
\vmap{\thp, \parvar}(\cdot, v_t, a_t) : v_t \mapsto \int \left( v_t(x_t) + \grad_\thp \addfelbo{t}(x_t, x_{t + 1}) \right) \vk{t|t + 1}(x_{t + 1}, \rmd x_t), 
$$
for the recursion of the ELBO gradient with respect to $\thp$, the composite mappings $(\vmap[t]{\thp, \parvar})_{t \in \nsetpos}$ are defined similarly. 

Finally, letting  
$$
\umap{\thp, \parvar}(\cdot, a_t, y_{t + 1}) : (h_t, u_t) \mapsto \int \left( u_t(x_t) + \grad_\parvar \log \vd{t|t + 1}(x_t, x_{t + 1}) \{ h_t(x_t) +
\addfelbo{t}(x_t, x_{t + 1}) \} \right) \vk{t|t + 1}(x_{t + 1}, \rmd x_t)
$$
summarize the recursion for the ELBO gradient with respect to $\parvar$, so that $u(x_{t + 1}) = \umap{\thp, \parvar}(h_t, u_t, a_t, y_{t + 1})(x_{t + 1})$, we also define the compositions 
\begin{equation}
\label{eq:def:umap:comp}
\umap[t]{\thp,\parvar}(\cdot, a_{0:t - 1}, y_{1:t}) : (h_0, u_0) \mapsto  
\begin{cases}
     \umap{\thp,\parvar}(h_0, u_0, a_0, y_1), & \mbox{for $t = 1$,} \\
     \umap{\thp,\parvar}(\hmap[t - 1]{\thp,\parvar}(h_0, a_{0:t - 2}, y_{1:t - 1}), \umap[t - 1]{\thp,\parvar}(h_0, u_0, a_{0:t - 2}, y_{1:t - 1}), a_{t - 1}, y_t), & \mbox{for $t \geq 2$.}
\end{cases}
\end{equation}

Using these definitions, the transition kernel $\K$ of the extended chain $(Z_t)_{t \in \nset}$ can be expressed as 
\begin{multline} \label{eq:def:T}
    \K f (z_t) = \int f(x_{t + 1}, y_{t + 1}, \amap{\thp, \parvar}(a_t, y_{t + 1}), \hmap{\thp, \parvar}(h_t, a_t, y_{t + 1}), \umap{\thp, \parvar}(h_t, u_t, a_t, y_{t + 1}), \vmap{\thp, \parvar}(v_t, a_t, y_{t + 1})) \\ 
    \times \R((x_t, y_t), \rmd (x_{t + 1}, y_{t + 1})), \quad (z_t, f) \in \Zsp \times \bmf(\Zfd),   
\end{multline}
where $z_t = (x_t, y_t, a_t, h_t, u_t, v_t)$.

The following lemma establishes the geometric contraction of the function-valued mappings defined above. 

\begin{lemma}
\label{hmap:vmap:contraction:lemma}
Assume \ref{ass:bounded-pot}. Then for every $t \in \nset$, $y_{1:t} \in \Ysp^t$, $a_{0:t - 1} \in \Asp^t$, and $(h_0, h'_0, v_0, v'_0) \in \bmf(\Xfd)^4$, 
\begin{itemize}
\item[(i)]
    $\displaystyle \oscnorm \left( \hmap[t]{\thp, \parvar}(h_0, a_{0:t - 1}, y_{1:t}) - \hmap[t]{\thp, \parvar}(h_0', a_{0:t - 1}, y_{1:t}) \right) \leq (1 - \varepsilon)^t \oscnorm(h_0 - h'_0)$, 
    \item[(ii)]
    $\displaystyle \oscnorm \left( \vmap[t]{\thp, \parvar}(v_0, a_{0:t - 1}, y_{1:t}) - \vmap[t]{\thp, \parvar}(v_0', a_{0:t - 1}, y_{1:t}) \right) \leq (1 - \varepsilon)^t \oscnorm(v_0 - v'_0)$, 
    \end{itemize}
    where $\varepsilon \in (0, 1)$ is given in \Cref{ass:bounded-pot}. 
    \begin{itemize}
    \item[(iii)] Assume additionally \ref{ass:grad:unif:bound}\,(i). Then for every $\varrho \in (1 - \varepsilon, 1)$ there exists $d > 0$ such that for every $t \in \nset$, $y_{1:t} \in \Ysp^t$, $a_{0:t - 1} \in \Asp^t$, and $(h_0, h'_0, u_0, u'_0) \in \bmf(\Xfd)^4$, 
    $$ 
    \oscnorm \left( \umap[t]{\thp,\parvar}(h_0, u_0, a_{0:t - 1}, y_{1:t}) - \umap[t]{\thp,\parvar}(h'_0, u'_0, a_{0:t - 1}, y_{1:t}) \right) \leq (1 - \varepsilon)^t \oscnorm(u_0 - u'_0) + d \varrho^t \oscnorm(h_0 - h'_0). 
    $$ 
\end{itemize}
\end{lemma}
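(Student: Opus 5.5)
The plan is a Doeblin/Dobrushin contraction for the variational backward kernels, followed by a linear recursion for the $\umap{}$-component.

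\emph{Step 1: minorization.} By \eqref{eq:def:varbackwd},
\[
\vd{t|t+1}(x_{t+1},x_t)=\frac{\vd{t}(x_t)\fwdpot{t+1}(x_t,x_{t+1})}{\int \vd{t}(x)\fwdpot{t+1}(x,x_{t+1})\,\rmd x}.
\]
Under \Cref{ass:bounded-pot}, the numerator is at least $\varepsilon^-\vd{t}(x_t)$ and the denominator is at most $\varepsilon^+$. Hence $\vk{t|t+1}(x_{t+1},\cdot)\ge \varepsilon\,\vd{t}(\cdot)$ for every $x_{t+1}$, uniformly in $\parvar$, $a_t$ and $y_{t+1}$. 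This is a Doeblin condition, so the Dobrushin coefficient of $\vk{t|t+1}$ is at most $1-\varepsilon$. Consequently, for every bounded $g$,
\[
\oscnorm(\vk{t|t+1}g)\le (1-\varepsilon)\oscnorm(g).
\]

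\emph{Step 2: items (i) and (ii).} The additive term $\addfelbo{t}$ (respectively $\grad_\thp\addfelbo{t}$) does not depend on $h_t$ (respectively $v_t$), so it cancels in differences:
\[
\hmap{\thp,\parvar}(h,a,y)-\hmap{\thp,\parvar}(h',a,y)=\vk{t|t+1}(h-h').
\]
Step 1 gives a one-step contraction by $1-\varepsilon$. Induction over the composition \eqref{eq:def:hmap:comp} yields (i); (ii) follows identically.

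\emph{Step 3: item (iii), one-step bound.} Write $g_t(x_{t+1},x_t)=\grad_\parvar\log\vd{t|t+1}(x_{t+1},x_t)$. The terms $g_t\,\addfelbo{t}$ cancel in the difference, leaving
\[
\umap{}(h,u)-\umap{}(h',u')=\vk{t|t+1}(u-u')+\vk{t|t+1}\bigl[g_t\,(h-h')\bigr].
\]
The first term is handled by Step 1. For the second, the score has zero mean, $\int g_t(x_{t+1},x)\,\vk{t|t+1}(x_{t+1},\rmd x)=0$; this requires differentiating $\int \vd{t|t+1}=1$ under the integral, which I take as granted under the standing regularity. We may therefore replace $h-h'$ by $h-h'-m$, where $m$ is the midpoint of its range, so that $\|h-h'-m\|_\infty\le \tfrac12\oscnorm(h-h')$. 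Cauchy--Schwarz together with \Cref{ass:grad:unif:bound}(i) then gives
\[
\sup_{x_{t+1}}\bigl|\vk{t|t+1}[g_t(h-h')](x_{t+1})\bigr|\le \tfrac{c}{2}\oscnorm(h-h'),
\]
so the oscillation of this term is at most $c\,\oscnorm(h-h')$. This step is the main obstacle: it is the only place where boundedness of the score is unavailable, and the centering trick is what avoids needing it.

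\emph{Step 4: item (iii), recursion.} Set $\delta_t=\oscnorm(\umap[t]{}(h_0,u_0,\ldots)-\umap[t]{}(h'_0,u'_0,\ldots))$. Combining Step 3 with item (i) applied to the composed $\hmap{}$ gives
\[
\delta_{t+1}\le(1-\varepsilon)\delta_t+c(1-\varepsilon)^t\oscnorm(h_0-h'_0).
\]
Unrolling this recursion,
\[
\delta_t\le(1-\varepsilon)^t\delta_0+c\,t(1-\varepsilon)^{t-1}\oscnorm(h_0-h'_0).
\]
For any $\varrho\in(1-\varepsilon,1)$, the constant $d\eqdef\sup_{t}c\,t(1-\varepsilon)^{t-1}\varrho^{-t}$ is finite, and (iii) follows.
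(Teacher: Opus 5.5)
Your argument is correct and is essentially the paper's proof. Both rest on the same ingredients:
\begin{itemize}
\item the Doeblin minorization $\vk{t|t+1}(x_{t+1},\cdot)\ge\varepsilon\,\vd{t}$, which gives $\beta(\vk{t|t+1})\le 1-\varepsilon$;
\item centering the zero-mean score and applying Cauchy--Schwarz with \Cref{ass:grad:unif:bound}(i);
\item the resulting bound $(1-\varepsilon)^t\oscnorm(u_0-u'_0)+c\,t(1-\varepsilon)^{t-1}\oscnorm(h_0-h'_0)$.
\end{itemize}

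The only difference is organizational. You unroll the one-step inequality $\delta_{t+1}\le(1-\varepsilon)\delta_t+c(1-\varepsilon)^t\oscnorm(h_0-h'_0)$, whereas the paper first writes the explicit iterated-integral form of \Cref{lem:ump:explicit:form} and then bounds each summand. Your choice of $d$ also correctly keeps the factor $c$, which the paper's explicit formula for $d$ drops.
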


The proof of \Cref{hmap:vmap:contraction:lemma} is based on the following lemmas. 

\begin{lemma} \label{lem:ump:explicit:form}
    for every $t \in \nsetpos$, $y_{1:t} \in \Ysp^t$, $a_{0:t - 1} \in \Asp^t$,  $(h_0, u_0) \in \bmf(\Xfd)^2$, and $x_t \in \Xsp$, 
    \begin{multline*}
    \umap[t]{\thp,\parvar}(h_0, u_0, a_{0:t - 1}, y_{1:t})(x_t) \\ 
    = \idotsint \left( u_0(x_0) + \sum_{s = 0}^{t - 1} \grad_\parvar \log \vd{s|s + 1}(x_{s + 1}, x_s) \{ \hmap[s]{\thp, \parvar}(h_0, a_{0:s - 1}, y_{1:s}) + \addfelbo{s}(x_s, x_{s + 1}) \} \right) \prod_{s = 0}^{t - 1} \vk{s|s + 1}(x_{s + 1}, \rmd x_s) 
    \end{multline*}
    with the convention $\hmap[0]{\thp, \parvar}(h_0, a_{0:- 1}, y_{1:0}) \eqdef h_0$.  
\end{lemma}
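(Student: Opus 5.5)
We argue by induction on $t$, using the recursive definition \eqref{eq:def:umap:comp} of $\umap[t]{\thp,\parvar}$ together with the one-step map $\umap{\thp,\parvar}$. In the formula, the integrand at step $s$ should be read as $\hmap[s]{\thp,\parvar}(h_0,a_{0:s-1},y_{1:s})(x_s)$, evaluated at the integration variable $x_s$. The integral is taken against the composite backward kernel $\prod_{s=0}^{t-1}\vk{s|s+1}(x_{s+1},\rmd x_s)$, integrating $x_0$ first, then $x_1$, and so on up to $x_{t-1}$, with $x_t$ fixed.

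\emph{Base case $t=1$.} By definition, $\umap[1]{\thp,\parvar}(h_0,u_0,a_0,y_1)=\umap{\thp,\parvar}(h_0,u_0,a_0,y_1)$. Evaluated at $x_1$, this equals
\[
\int \bigl(u_0(x_0)+\grad_\parvar\log\vd{0|1}(x_1,x_0)\{h_0(x_0)+\addfelbo{0}(x_0,x_1)\}\bigr)\vk{0|1}(x_1,\rmd x_0).
\]
With the convention $\hmap[0]{\thp,\parvar}=h_0$, this is exactly the claimed formula for $t=1$.

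\emph{Inductive step.} Assume the formula holds at $t-1$, and write $h_{t-1}\eqdef\hmap[t-1]{\thp,\parvar}(h_0,a_{0:t-2},y_{1:t-1})$. By \eqref{eq:def:umap:comp},
\[
\umap[t]{\thp,\parvar}(h_0,u_0,a_{0:t-1},y_{1:t})(x_t)=\int\Bigl(u_{t-1}(x_{t-1})+\grad_\parvar\log\vd{t-1|t}(x_t,x_{t-1})\{h_{t-1}(x_{t-1})+\addfelbo{t-1}(x_{t-1},x_t)\}\Bigr)\vk{t-1|t}(x_t,\rmd x_{t-1}),
\]
where $u_{t-1}=\umap[t-1]{\thp,\parvar}(h_0,u_0,a_{0:t-2},y_{1:t-1})$. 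By the induction hypothesis, $u_{t-1}(x_{t-1})$ is the $(t-1)$-fold iterated integral of $u_0(x_0)+\sum_{s=0}^{t-2}(\cdots)$ against $\prod_{s=0}^{t-2}\vk{s|s+1}(x_{s+1},\rmd x_s)$.

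We substitute this expression. The second term of the integrand depends only on $x_{t-1}$ and $x_t$, and each $\vk{s|s+1}(x_{s+1},\cdot)$ is a probability kernel. Hence that term can be placed inside the inner integrals over $x_{0:t-2}$ unchanged, since integrating a constant against a probability measure returns the constant. Combining the two pieces under the composite kernel $\prod_{s=0}^{t-1}\vk{s|s+1}$ produces the sum up to $s=t-1$, which is the formula at $t$.

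The only step needing justification is Fubini, i.e.\ merging the iterated integrals. This is where the argument could fail, but it does not: $h_0,u_0$ are bounded, and each $\hmap[s]{\thp,\parvar}(\cdots)$ is bounded, being an average of a bounded function plus an integrable $\addfelbo{s}$ (integrability by the standing integrability assumptions, e.g.\ \Cref{ass:grad:unif:bound}). By Cauchy--Schwarz with \Cref{ass:grad:unif:bound}\,(i)--(ii), each product term is integrable under the kernel. The identity is otherwise purely algebraic, so no real obstacle is expected beyond this bookkeeping of indices.
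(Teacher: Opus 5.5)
Your proof is correct and is essentially the paper's argument. You induct on $t$, read the base case $t=1$ directly off the definition of $\umap[1]{\thp,\parvar}$, and obtain the inductive step by unrolling \eqref{eq:def:umap:comp} one level and substituting the induction hypothesis for $\umap[t-1]{\thp,\parvar}$. You also make explicit two points the paper leaves implicit: the new term, which depends only on $(x_{t-1},x_t)$, passes unchanged through the inner probability kernels, and merging the iterated integrals is justified by the uniform bounds of \Cref{ass:grad:unif:bound}\,(i)--(ii) via Cauchy--Schwarz.
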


\begin{proof}[Proof of \Cref{lem:ump:explicit:form}]
We proceed by induction and assume that the claim holds true for $t \in \nsetpos$. By definition \eqref{eq:def:umap:comp}, 
\begin{align*}
    \lefteqn{\umap[t + 1]{\thp,\parvar}(h_0, u_0, a_{0:t}, y_{1:t + 1})(x_{t + 1})} \\
    &= \umap{\thp,\parvar}(\hmap[t]{\thp,\parvar}(h_0, a_{0:t - 1}, y_{1:t}), \umap[t]{\thp,\parvar}(u_0, a_{0:t - 1}, y_{1:t}), a_t, y_{t + 1}) \\
    &= \int \umap[t]{\thp,\parvar}(u_0, a_{0:t - 1}, y_{1:t})(x_t) + \grad_\parvar \log \vd{t|t + 1}(x_{t + 1}, x_t) \{ \hmap[t]{\thp, \parvar}(h_0, a_{0:t - 1}, y_{1:t})(x_t) + \addfelbo{t}(x_t, x_{t + 1}) \} \, \vk{t|t + 1}(x_{t + 1}, \rmd x_t).  
\end{align*}
Now, inserting the induction hypothesis into the right-hand side of the previous expression yields 
\begin{multline*}
   \umap[t + 1]{\thp,\parvar}(h_0, u_0, a_{0:t}, y_{1:t + 1})(x_{t + 1}) \\ 
   = \idotsint \left( u_0(x_0) + \sum_{s = 0}^t \grad_\parvar \log \vd{s|s + 1}(x_{s + 1}, x_s) \{ \hmap[s]{\thp, \parvar}(h_0, a_{0:s - 1}, y_{1:s}) + \addfelbo{s}(x_s, x_{s + 1}) \} \right) \prod_{s = 1}^t \vk{s|s + 1}(x_{s + 1}, \rmd x_s),  
\end{multline*}
which establishes the induction step.  

Finally, we note that the base case $t = 1$ holds true, since by definition \eqref{eq:def:umap:comp}
\begin{align*}
    \umap[1]{\thp,\parvar}(h_0, u_0, a_0, y_1)(x_1) &= \umap{\thp,\parvar}(h_0, u_0, a_0, y_1)(x_1) \\
    &= \int \left( u_0(x_0) + \grad_\parvar \log \vd{0|1}(x_1, x_0) \{ h_0(x_0) + \addfelbo{0}(x_0, x_{0 + 1}) \} \right) \, \vk{0|1}(x_1, \rmd x_0).  
\end{align*}
This completes the proof. 
\end{proof}

In the following, let $\beta(M)$ denote the Dobrushin coefficient of a Markov kernel $M$. 

\begin{lemma} \label{lem:bounded:dc:Q}
    Assume \ref{ass:bounded-pot}. Then for every $t \in \nset$, 
    $\beta(\vk{t|t + 1}) \leq 1 - \varepsilon$. 
\end{lemma}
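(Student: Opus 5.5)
The plan is to bound the Dobrushin coefficient via a Doeblin-type minorization. The kernel $\vk{t|t+1}$ has density $\vd{t|t+1}(x_{t+1},x_t) \propto \vd{t}(x_t)\fwdpot{t+1}(x_t,x_{t+1})$ by \eqref{eq:def:varbackwd}. So I will write explicitly
$$
\vk{t|t+1}(x_{t+1},\rmd x_t) = \frac{\vd{t}(x_t)\fwdpot{t+1}(x_t,x_{t+1})\,\rmd x_t}{\int \vd{t}(x)\fwdpot{t+1}(x,x_{t+1})\,\rmd x}.
$$

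The first step is to use \Cref{ass:bounded-pot}. The numerator is at least $\varepsilon^- \vd{t}(x_t)$. Since $\vd{t}$ is a probability density, the normalizing constant is at most $\varepsilon^+$. Hence, for every $x_{t+1}\in\Xsp$ and every $A\in\Xfd$,
$$
\vk{t|t+1}(x_{t+1},A) \geq \frac{\varepsilon^-}{\varepsilon^+}\int_A \vd{t}(x)\,\rmd x = \varepsilon\, \nu_t(A),
$$
where $\nu_t$ is the probability measure with density $\vd{t}$. The key point is that $\nu_t$ does not depend on the conditioning point $x_{t+1}$. It depends only on $a_t$ and $\parvar$, which are fixed when the kernel is considered.

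The second step turns this minorization into the Dobrushin bound. Recall that
$$
\beta(M)=\sup_{x,x'}\|M(x,\cdot)-M(x',\cdot)\|_{\mathsf{TV}}.
$$
For any $x,x'$, write $M(x,\cdot)=\varepsilon\nu_t + (1-\varepsilon)\tilde M(x,\cdot)$. Here $\tilde M(x,\cdot) \eqdef (M(x,\cdot)-\varepsilon\nu_t)/(1-\varepsilon)$, which is a probability measure because the minorization holds and $\varepsilon<1$. The common component $\varepsilon\nu_t$ cancels in the difference $M(x,\cdot)-M(x',\cdot)$. This gives $\|M(x,\cdot)-M(x',\cdot)\|_{\mathsf{TV}} = (1-\varepsilon)\|\tilde M(x,\cdot)-\tilde M(x',\cdot)\|_{\mathsf{TV}} \leq 1-\varepsilon$, using the TV normalization in which the distance between probability measures is at most $1$. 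Taking the supremum yields $\beta(\vk{t|t+1})\leq 1-\varepsilon$.

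The case $t=0$ is handled identically, since $\vk{0|1}$ is defined through \eqref{eq:def:varbackwd} with $\vd{0}$ and $\fwdpot{1}$.

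The only delicate point is the TV normalization convention, either $\sup_A|\mu(A)-\nu(A)|$ or half the $L^1$ norm. The bound is consistent with either, since both give at most $1$ for probability measures. Equivalently, one can use $\beta(M)=1-\inf_{x,x'}\int \min(m(x,\cdot),m(x',\cdot))$, and the minorization shows this infimum is at least $\varepsilon$. The bound holds uniformly in $t$, $a_t$, $y_{t+1}$ and $\parvar$, which is what \Cref{hmap:vmap:contraction:lemma} requires.
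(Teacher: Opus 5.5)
Your proposal is correct and matches the paper's proof: you lower-bound the numerator by $\varepsilon^-\vd{t}$ and upper-bound the normalizing constant by $\varepsilon^+$, which gives the Doeblin minorization $\vk{t|t+1}(x_{t+1},\cdot)\geq\varepsilon\,\vd{t}(x)\,\rmd x$ uniformly in $x_{t+1}$, and $\beta(\vk{t|t+1})\leq 1-\varepsilon$ follows. Beyond the paper, you spell out the standard splitting argument from the minorization to the Dobrushin bound, which the paper only asserts, and you index the potential as $\fwdpot{t+1}$, consistent with \eqref{eq:def:varbackwd}.
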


\begin{proof}[Proof of \Cref{lem:bounded:dc:Q}]
    Pick arbitrarily $(x_{t + 1}, f) \in \Xsp \times \bmf(\Xfd)$ and write, using definition \eqref{eq:def:varbackwd} and \Cref{ass:bounded-pot}, 
    $$
    \vk{t|t + 1} f(x_{t + 1}) = \frac{\int f(x_t) \fwdpot{t}(x_t, x_{t + 1}) \, \vd{t}(\rmd x_t)}{\int \fwdpot{t}(x'_t, x_{t + 1}) \, \vd{t}(\rmd x'_t)} \geq \frac{\varepsilon^-}{\varepsilon^+} \vd{t} f = \varepsilon \vd{t} f, 
    $$
    which means that $\vk{t|t + 1}$ allows $\Xsp$ as a 1-small set with respect to $(\vd{t}, \varepsilon)$. From this it follows that $\beta(\vk{t|t + 1}) \leq 1 - \varepsilon$. 
\end{proof}

We are now ready to establish \Cref{hmap:vmap:contraction:lemma}.  

\begin{proof}[Proof of \Cref{hmap:vmap:contraction:lemma}]
    To establish (i),  let $(Q^\parvar_{s - 1|s})_{s = 1}^t$ denote the backward transition kernels associated with $(q^\parvar_{s - 1|s})_{s = 1}^t$. We may then write, for every $x_t \in \Xsp$,  
    $$
    \hmap[t]{\thp, \parvar}(h_0, y_{1:t})(x_t) - \hmap[t]{\thp, \parvar}(h_0', y_{1:t})(x_t) = Q^\parvar_{t - 1|t} \cdots Q^\parvar_{0|1} (h_0 - h'_0)(x_t).  
    $$
    Now, recall that for every $s$ and $h \in \bmf(\Xfd)$, 
    \begin{equation} \label{eq:osc:norm:bound}
    \oscnorm(Q^\parvar_{s - 1|s} h) \leq \beta(Q^\parvar_{s - 1|s}) \oscnorm(h), 
    \end{equation}
    where $\beta(Q^\parvar_{s - 1|s})$ is the Dobrushin  coefficient of $Q^\parvar_{s - 1|s}$, and iterating the bound \eqref{eq:osc:norm:bound} yields 
    \begin{equation} \label{eq:osc:norm:bound:iterated}    \oscnorm(Q^\parvar_{t - 1|t} \cdots Q^\parvar_{0|1} (h_0 - h'_0)) \leq \left( \prod_{s = 1}^t \beta(Q^\parvar_{s - 1|s}) \right) \oscnorm(h_0 - h'_0).  
    \end{equation}
    From this the claim (i) follows by \Cref{lem:bounded:dc:Q}. 
    
    To prove (ii), note that  
    \begin{align*}
    \vmap[t]{\thp, \parvar}(v_0, a_{0:t - 1}, y_{1:t})(x_t) - \vmap[t]{\thp, \parvar}(v_0', a_{0:t - 1}, y_{1:t})(x_t) &= Q^\parvar_{t - 1|t} \cdots Q^\parvar_{0|1} (v_0 - v'_0)(x_t) \\ 
    &= \hmap[t]{\thp, \parvar}(v_0, a_{0:t - 1}, y_{1:t})(x_t) - \hmap[t]{\thp, \parvar}(v_0', a_{0:t - 1}, y_{1:t})(x_t). 
    \end{align*}
    Thus, (ii) follows immediately from (i).

    Finally, to establish (iii), write, using \Cref{lem:ump:explicit:form}, 
    \begin{align}
        \lefteqn{\umap[t]{\thp,\parvar}(h_0, u_0, a_{0:t - 1}, y_{1:t})(x_t) - \umap[t]{\thp,\parvar}(h'_0, u'_0, a_{0:t - 1}, y_{1:t})(x_t)} \nonumber \\ 
        &= \idotsint \left( u_0(x_0) - u'_0(x_0) + \sum_{s = 0}^{t - 1} \grad_\parvar \log \vd{s|s + 1}(x_{s + 1}, x_s) \{ \hmap[s]{\thp, \parvar}(h_0, a_{0:s - 1}, y_{1:s})(x_s) - \hmap[s]{\thp, \parvar}(h'_0, a_{0:s - 1}, y_{1:s})(x_s) \} \right) \nonumber \\
        & \hspace{130mm} \times \prod_{s = 0}^{t - 1} \vk{s|s + 1}(x_{s + 1}, \rmd x_s) 
        \nonumber \\
        &= \vk{t - 1|t} \cdots \vk{0|1}(u_0 - u'_0)(x_t) + \sum_{s = 0}^{t - 1} \vk{t - 1|t} \cdots \vk{s + 1|s + 2} \varphi_s(x_t), \label{eq:umap:diff:alt:form}
    \end{align}
    where we have set 
    $$
    \varphi_s(x_{s + 1}) \eqdef \int \grad_\parvar \log \vd{s|s + 1}(x_{s + 1}, x_s) \{ \hmap[s]{\thp, \parvar}(h_0, a_{0:s - 1}, y_{1:s})(x_s) - \hmap[s]{\thp, \parvar}(h'_0, a_{0:s - 1}, y_{1:s})(x_s) \} \, \vk{s|s + 1}(x_{s + 1}, \rmd x_s). 
    $$
    Now, note that since 
    $$
    \int \grad_\parvar \log \vd{s|s + 1}(x_{s + 1}, x_s) \, \vk{s|s + 1}(x_{s + 1}, \rmd x_s) = 0, 
    $$
    it holds, for every $c \in \rset$, 
    \begin{equation*} \label{eq:varphi:bd:1}
    \| \varphi_s \|_\infty  
    \leq \| \hmap[s]{\thp, \parvar}(h_0, a_{0:s - 1}, y_{1:s}) - \hmap[s]{\thp, \parvar}(h'_0, a_{0:s - 1}, y_{1:s}) - c\|_\infty \int | \grad_\parvar \log \vd{s|s + 1}(x_{s + 1}, x_s) | \vk{s|s + 1}(x_{s + 1}, \rmd x_s).
    \end{equation*}
    Thus, using \Cref{ass:grad:unif:bound}(i) and the fact that for all $f \in \mf(\Xfd)$, $\oscnorm(f) = 2 \inf_{c \in \rset}\| f - c\|_\infty$, it holds, by (i), that 
    $$
    \| \varphi_s \|_\infty \leq \frac{1}{2} c \oscnorm \left( \hmap[s]{\thp, \parvar}(h_0, a_{0:s - 1}, y_{1:s}) - \hmap[s]{\thp, \parvar}(h'_0, a_{0:s - 1}, y_{1:s}) \right) \leq \frac{1}{2} c (1 - \varepsilon)^s \oscnorm(h_0 - h_0'). 
    $$
    As a consequence, by \Cref{lem:bounded:dc:Q}, 
    \begin{align} 
        \oscnorm(\vk{t - 1|t} \cdots \vk{s + 1|s + 2} \varphi_s) &\leq \left( \prod_{\ell = s + 1}^{t - 1} \beta(\vk{\ell|\ell + 1}) \right) \oscnorm(\varphi_s) \nonumber \\
        &\leq c (1 - \varepsilon)^{t - s - 1} (1 - \varepsilon)^s \oscnorm(h_0 - h_0') \nonumber \\
        &= c (1 - \varepsilon)^{t - 1} \oscnorm(h_0 - h_0'). \label{eq:umap:diff:second:term:bd} 
    \end{align}
    Moreover, since 
    $$
    \vk{t - 1|t} \cdots \vk{0|1}(u_0 - u'_0)(x_t) = \hmap[t]{\thp, \parvar}(u_0, a_{0:t - 1}, y_{1:t})(x_t) - \hmap[t]{\thp, \parvar}(u_0', a_{0:t - 1}, y_{1:t})(x_t), 
    $$
    (i) implies that 
    \begin{equation} \label{eq:umap:diff:first:term:bd}
        \oscnorm(\vk{t - 1|t} \cdots \vk{0|1}(u_0 - u'_0)) \leq (1 - \varepsilon)^t \oscnorm(u_0 - u'_0). 
    \end{equation}
    Combining \eqref{eq:umap:diff:alt:form}, \eqref{eq:umap:diff:second:term:bd}, and \eqref{eq:umap:diff:first:term:bd} yields 
    $$
    \oscnorm \left( \umap[t]{\thp,\parvar}(h_0, u_0, a_{0:t - 1}, y_{1:t}) - \umap[t]{\thp,\parvar}(h'_0, u'_0, a_{0:t - 1}, y_{1:t}) \right) \leq (1 - \varepsilon)^t \oscnorm(u_0 - u'_0) + c t  (1 - \varepsilon)^{t - 1} \oscnorm(h_0 - h_0'). 
    $$
    Finally, the claim (iii) follows by picking $\varrho \in (1 - \varepsilon, 1)$ and letting $d \eqdef ((1 - \varepsilon) \operatorname{e} \log\{\varrho / (1 - \varepsilon)\})^{-1}$. 
\end{proof}

We are now ready to establish \Cref{prop:K_2:ergodicity}, following the same lines as \Cref{prop:K:ergodicity}. 

\begin{proof}[Proof of \Cref{prop:K_2:ergodicity}]
First, denote
\begin{multline*}
    f_{s,t}(x_t, y_{s:t}, a_{s - 1:t}, h, u, v) \\ 
    \eqdef f(x_t, y_t, a_t, \hmap[t - s + 1]{\thp,\parvar}(h, a_{s - 1:t - 1}, y_{s:t}), \umap[t - s + 1]{\thp,\parvar}(h, u, a_{s - 1:t - 1}, y_{s:t}), \vmap[t - s + 1]{\thp,\parvar}(v, a_{s - 1:t - 1}, y_{s:t})). 
\end{multline*}
for $s \in \intset{1}{t}$, where we have omitted the dependence on $\thp$ and $\parvar$ for brevity. Note that with this notation, for $z = (x_0, y_0, a_0, h_0, u_0, v_0)$, 
\begin{equation*}
    (\K)^t f(z) \\
    = \idotsint  f_{1,t}(x_t, y_{1:t}, a_{0:t}, h_0, u_0, v_0)  \prod_{s = 0}^{t - 1} \Km((x_s, y_s, a_s), \rmd(x_{s + 1}, y_{s + 1}, a_{s + 1})). 
\end{equation*}
Now, picking $(\tilde{h}_0, \tilde{u}_0, \tilde{v}_0) \in \bmf(\Xfd)^3$ arbitrarily and using the decomposition 
\begin{multline*}
f_{1,t}(x_t, y_{1:t}, a_{0:t}, h_0, u_0, v_0) = f_{1,t}(x_t, y_{1:t}, a_{0:t}, h_0, u_0, v_0) - f_{1,t}(x_t, y_{1:t}, a_{0:t}, \tilde{h}_0, \tilde{u}_0, \tilde{v}_0) \\ 
+ \sum_{s = 1}^{t - 1} \left( f_{s,t}(x_t, y_{s:t}, a_{s - 1:t}, \tilde{h}_0, \tilde{u}_0, \tilde{v}_0) - f_{s + 1,t}(x_t, y_{s + 1:t}, a_{s:t}, \tilde{h}_0, \tilde{u}_0, \tilde{v}_0) \right) + f_{t, t}(x_t, y_t, a_{t - 1:t}, \tilde{h}_0, \tilde{u}_0, \tilde{v}_0), 
\end{multline*}
which is adopted from \cite{tadic-doucet-2005}, we may write, 
\begin{align}
    \lefteqn{(\K)^t f(z) - (\K)^t f(z')} \nonumber \\ 
    &= \idotsint \left( f_{1,t}(x_t, y_{1:t}, a_{0:t}, h_0, u_0, v_0) - f_{1,t}(x_t, y_{1:t}, a_{0:t}, \tilde{h}_0, \tilde{u}_0, \tilde{v}_0) \right) \prod_{s = 0}^{t - 1} \S((x_s, y_s, a_s), \rmd(x_{s + 1}, y_{s + 1}, a_{s + 1})) \nonumber \\
    & - \idotsint \left( f_{1,t}(x'_t, y'_{1:t}, a'_{0:t}, h'_0, u'_0, v'_0) - f_{1,t}(x'_t, y'_{1:t}, a'_{0:t}, \tilde{h}_0, \tilde{u}_0, \tilde{v}_0)  \right) \prod_{s = 0}^{t - 1} \S((x'_s, y'_s, a'_s), \rmd(x'_{s + 1}, y'_{s + 1}, a'_{s + 1})) \nonumber \\
    &+ \sum_{s = 1}^t \idotsint \left( f_{s,t}(x_t, y_{s:t}, a_{s - 1:t}, \tilde{h}_0, \tilde{u}_0, \tilde{v}_0) - f_{s + 1,t}(x_t, y_{s + 1:t}, a_{s:t}, \tilde{h}_0, \tilde{u}_0, \tilde{v}_0) \right) \nonumber \\
    &\hspace{60mm} \times 
    \Delta \S_s((x_0, y_0, a_0), \rmd(x_s, y_s, a_s)) \, \prod_{\ell = s}^{t - 1} \S((x_\ell, y_\ell, a_\ell), \rmd(x_{\ell + 1}, y_{\ell + 1}, a_{\ell + 1})) \nonumber \\
    &- \sum_{s = 1}^t \idotsint \left( f_{s,t}(x_t, y_{s:t}, a_{s - 1:t}, \tilde{h}_0, \tilde{u}_0, \tilde{v}_0) - f_{s + 1,t}(x_t, y_{s + 1:t}, a_{s:t}, \tilde{h}_0, \tilde{u}_0, \tilde{v}_0) \right) \nonumber \\
    &\hspace{60mm} \times 
    \Delta \S_s((x'_0, y'_0, a'_0), \rmd(x_s, y_s, a_s)) \, \prod_{\ell = s}^{t - 1} \S((x_\ell, y_\ell, a_\ell), \rmd(x_{\ell + 1}, y_{\ell + 1}, a_{\ell + 1})) \nonumber \\
    &+ \int f_{t, t}(x_t, y_t, a_{t - 1:t}, \tilde{h}_0, \tilde{u}_0, \tilde{v}_0) \, \Delta \S_t((x_0, y_0, a_0), \rmd(x_t, y_t, a_t)) \nonumber \\
    &- \int f_{t, t}(x_t, y_t, a_{t - 1:t}, \tilde{h}_0, \tilde{u}_0, \tilde{v}_0)
    \, \Delta \S_t((x'_0, y'_0, a'_0), \rmd(x_t, y_t, a_t)), \label{eq:T:diff:decomp}
\end{align}
where we have defined, for $s \in \intset{1}{t}$, the signed kernels 
$$
\Delta \S_s f(x, y, a) \eqdef (\S)^s f(x, y, a) - \pi f, \quad (f, x, y, a) \in \bmf(\Xfd \tensprod \Yfd \tensprod \Afd) \times \Xsp \times \Ysp \times \Asp.   
$$
Note that by \Cref{def:lip:h}, for every $s \in \intset{1}{t}$, 
\begin{align}
    \lefteqn{|f_{s,t}(x_t, y_{s:t}, a_{s - 1:t}, \tilde{h}_0, \tilde{u}_0, \tilde{v}_0) - f_{s + 1,t}(x_t, y_{s + 1:t}, a_{s:t}, \tilde{h}_0, \tilde{u}_0, \tilde{v}_0)|} \nonumber \\ 
    &\leq \varphi(x_t, y_t, a_t) \left( \oscnorm(\hmap[t - s + 1]{\thp,\parvar}(\tilde{h}_0, a_{s - 1:t - 1}, y_{s:t}) - \hmap[t - s]{\thp,\parvar}(\tilde{h}_0, a_{s:t - 1}, y_{s + 1:t})) \right. \nonumber \\
    &\hspace{21mm}+ \oscnorm \left( \umap[t - s + 1]{\thp,\parvar}(\tilde{h}_0, \tilde{u}_0, a_{s - 1:t - 1}, y_{s:t}) - \umap[t - s]{\thp,\parvar}(\tilde{h}_0, \tilde{u}_0, a_{s:t - 1}, y_{s + 1:t}) \right) \nonumber \\
    &\hspace{21mm}+ \left. \oscnorm \left( \vmap[t - s + 1]{\thp,\parvar}(\tilde{v}_0, a_{s - 1:t - 1}, y_{s:t}) - \vmap[t - s]{\thp,\parvar}(\tilde{v}_0, a_{s:t - 1}, y_{s + 1:t}) \right) \right). \label{eq:f:diff:bd}
\end{align}
Using definition \eqref{eq:def:hmap:comp} and \Cref{hmap:vmap:contraction:lemma}(i), we conclude that 
\begin{align*}
    \lefteqn{\oscnorm\left(\hmap[t - s + 1]{\thp,\parvar}(\tilde{h}_0, a_{s - 1:t - 1}, y_{s:t}) - \hmap[t - s]{\thp,\parvar}(\tilde{h}_0, a_{s:t - 1}, y_{s + 1:t})\right)} \hspace{20mm} \\
    &= \oscnorm\left(\hmap[t - s]{\thp,\parvar}(\hmap{\thp,\parvar}(\tilde{h}_0, y_s), a_{s:t - 1}, y_{s + 1:t}) - \hmap[t - s]{\thp,\parvar}(\tilde{h}_0, a_{s:t - 1}, y_{s + 1:t})\right) \\
    &\leq (1 - \varepsilon)^{t - s} \oscnorm \left( \hmap{\thp,\parvar}(\tilde{h}_0, y_s) - \tilde{h}_0 \right).  
\end{align*}
Since 
\begin{align*}
    \| \hmap{\thp,\parvar}(\tilde{h}_0, y_s) - \tilde{h}_0 \|_\infty &= \left\| \int (\tilde{h}_0(x) + \afelbo{s - 1}(x, \cdot)) \, \vk{s - 1|s}(\cdot, \rmd x) - \tilde{h}_0(\cdot) \right\|_\infty \\
    &\leq 2 \| \tilde{h}_0 \|_\infty + c, 
\end{align*}
where the constant $c \in \rsetpos$ is provided by \Cref{ass:grad:unif:bound}(ii), it holds that  
\begin{equation} \label{eq:bound:h:diff}
    \oscnorm\left(\hmap[t - s + 1]{\thp,\parvar}(\tilde{h}_0, a_{s - 1:t - 1}, y_{s:t}) - \hmap[t - s]{\thp,\parvar}(\tilde{h}_0, a_{s:t - 1}, y_{s + 1:t})\right) \leq 2(2 \| \tilde{h}_0 \|_\infty + c) (1 - \varepsilon)^{t - s} .  
\end{equation}
By the same arguments it is shown that  
\begin{equation}\label{eq:bound:v:diff}   \oscnorm\left(\vmap[t - s + 1]{\thp,\parvar}(\tilde{v}_0, a_{s - 1:t - 1}, y_{s:t}) - \vmap[t - s]{\thp,\parvar}(\tilde{v}_0, a_{s:t - 1}, y_{s + 1:t})\right) \leq 2(2 \| \tilde{v}_0 \|_\infty + c) (1 - \varepsilon)^{t - s}.
\end{equation}
In addition, similarly, using \Cref{hmap:vmap:contraction:lemma}(iii), 
\begin{align*} \lefteqn{\oscnorm\left(\umap[t - s + 1]{\thp,\parvar}(\tilde{h}_0, \tilde{u}_0, a_{s - 1:t - 1}, y_{s:t}) - \umap[t - s]{\thp,\parvar}(\tilde{h}_0, \tilde{u}_0, a_{s:t - 1}, y_{s + 1:t})\right)} \hspace{20mm} \\
    &= \oscnorm\left(\umap[t - s]{\thp,\parvar}(\hmap{\thp,\parvar}(\tilde{h}_0, y_s), \umap{\thp,\parvar}(\tilde{h}_0, \tilde{u}_0, y_s), a_{s:t - 1}, y_{s + 1:t}) - \umap[t - s]{\thp,\parvar}(\tilde{h}_0, \tilde{u}_0, a_{s:t - 1}, y_{s + 1:t})\right) \\
    &\leq (1 - \varepsilon)^{t - s} \oscnorm \left( \hmap{\thp,\parvar}(\tilde{h}_0, y_s) - \tilde{h}_0 \right) + d \varrho^{t - s} \oscnorm \left( \umap{\thp,\parvar}(\tilde{h}_0, \tilde{u}_0,  y_s) - \tilde{u}_0 \right), 
\end{align*}
and since by \Cref{ass:grad:unif:bound}(i--ii) and the Cauchy--Schwarz inequality, 
\begin{align*}
    \left\| \umap{\thp,\parvar}(\tilde{h}_0, \tilde{u}_0, y_s) - \tilde{u}_0 \right\|_\infty &= \left\| \int (\tilde{u}_0(x) + \grad_\parvar \log \vd{s - 1|s}(\cdot, x) \{ \tilde{h}_0(x) + \afelbo{s - 1}(x, \cdot)\}) \, \vk{s - 1|s}(\cdot, \rmd x) - \tilde{h}_0(\cdot) \right\|_\infty \\
    &\leq 2 \| \tilde{u}_0 \|_\infty + c \| \tilde{h}_0 \|_\infty + c^2, 
\end{align*}
we may conclude that 
\begin{multline} \label{eq:bound:u:diff}
\oscnorm\left(\umap[t - s + 1]{\thp,\parvar}(\tilde{h}_0, \tilde{u}_0, a_{s - 1:t - 1}, y_{s:t}) - \umap[t - s]{\thp,\parvar}(\tilde{h}_0, \tilde{u}_0, a_{s:t - 1}, y_{s + 1:t})\right) \\ 
\leq 2(2 \| \tilde{h}_0 \|_\infty + c) (1 - \varepsilon)^{t - s} + 2 d (2 \| \tilde{u}_0 \|_\infty + c \| \tilde{h}_0 \|_\infty + c^2) \varrho^{t - s}. 
\end{multline}
Combining \eqref{eq:f:diff:bd}, \eqref{eq:bound:h:diff}, \eqref{eq:bound:v:diff}, and  \eqref{eq:bound:u:diff} yields the bound  
\begin{align}
    \lefteqn{|f_{s,t}(x_t, y_{s:t}, a_{s - 1:t}, \tilde{h}_0, \tilde{u}_0, \tilde{v}_0) - f_{s + 1,t}(x_t, y_{s + 1:t}, a_{s:t}, \tilde{h}_0, \tilde{u}_0, \tilde{v}_0)|} \hspace{30mm}
    \nonumber \\ 
    &\leq 2\varphi(x_t, y_t, a_t) \left( 3(2 \| \tilde{h}_0 \|_\infty + c) (1 - \varepsilon)^{t - s} +  d (2 \| \tilde{u}_0 \|_\infty + c \| \tilde{h}_0 \|_\infty + c^2) \varrho^{t - s} \right) \nonumber \\
    & \leq  2 \varphi(x_t, y_t, a_t)  \left( (6 + cd ) \| \tilde{h}_0 \|_\infty + 2d\| \tilde{u}_0 \|_\infty + 3 c  +  c^2 d \right) \varrho^{t - 2}. \label{eq:f:different:s:bound}
\end{align}
Similarly, 
\begin{equation} \label{eq:f:same:t:bound}
    |f_{1,t}(x_t, y_{1:t}, a_{0:t}, h_0, u_0, v_0) - f_{1,t}(x_t, y_{1:t}, a_{0:t}, \tilde{h}_0, \tilde{u}_0, \tilde{v}_0)| 
    \leq  \varphi(x_t, y_t, a_t) \statemap''(h_0, u_0, v_0) \varrho^t,   
\end{equation}
where we have defined the mapping  
$$
\statemap''(h_0, u_0, v_0) \eqdef 2 \left( (d + 1) (\| h_0 \| + \| \tilde{h}_0 \|_\infty ) + \| u_0 \| + \| \tilde{u}_0 \|_\infty + \| v_0 \| + \| \tilde{v}_0 \|_\infty \right).  
$$
Now, letting 
\begin{equation*}
    \statemap'(h_0, h'_0, u_0, u'_0, v_0, v'_0) \eqdef 2 \left( (6 + cd ) \| \tilde{h}_0 \|_\infty + 2d\| \tilde{u}_0 \|_\infty + 3 c  +  c^2 d \right) \vee \statemap''(h_0, u_0, v_0) \vee \statemap''(h'_0, u'_0, v'_0).  
\end{equation*}
Applying the bounds  \eqref{eq:f:different:s:bound} and \eqref{eq:f:same:t:bound} to the decomposition \eqref{eq:T:diff:decomp} yields  
\begin{multline*} \label{eq:K:diff:bound:1}
    |(\K)^t f(z) - (\K)^t f(z')|  
    \leq \statemap'(h_0, h'_0, u_0, u'_0, v_0, v'_0) \varrho^t \left( (\S)^t \varphi (x_0, y_0, a_0) + (\S)^t \varphi (x'_0, y'_0, a'_0) \right) \\ 
    + \statemap'(h_0, h'_0, u_0, u'_0, v_0, v'_0) \sum_{s = 1}^t \varrho^{t - s} \left( \Delta \S_s (\S)^{t - s} \varphi(x_0, y_0, a_0) + \Delta \S_s (\S)^{t - s} \varphi(x'_0, y'_0, a'_0) \right) \\
    + \Delta \S_t \varphi (x_0, y_0, a_0) + \Delta \S_t \varphi (x'_0,  y'_0, a'_0). 
\end{multline*}
Now, by applying \Cref{ass:unif:erg:xya} to the right-hand side of \eqref{eq:K:diff:bound:1} we obtain 
\begin{equation*}
  |(\K)^t f(z) - (\K)^t f(z')| \leq \| \varphi \|_\infty \left(\statemap'(h_0, h'_0, u_0, u'_0, v_0, v'_0)\alpha^t + \statemap'(h_0, h'_0, u_0, u'_0, v_0, v'_0) t (\alpha \vee \varrho)^t + \varrho^t \right),    
\end{equation*}
from which \eqref{eq:K:diff:bound:final:th} follows by picking $\rho \in (\alpha \vee \varrho, 1)$ and 
$$
\statemap(h_0, h'_0, u_0, u'_0, v_0, v'_0) \eqdef 2(\statemap'(h_0, h'_0, u_0, u'_0, v_0, v'_0) + 1) (\alpha \vee \varrho)/\rho + 2 \statemap'(h_0, h'_0, u_0, u'_0, v_0, v'_0) \left( \operatorname{e} |\log\{(\alpha \vee \varrho)/\rho\} |\right)^{-1}.    
$$
To prove the second claim, first note that by \eqref{eq:K:diff:bound:final:th}, 
\begin{align}
|(\K)^{t + 1} f(z) - (\K)^t f(z)| 
&\leq \int |(\K)^t f(z') - (\K)^t f(z)| \, \K(z, \rmd z') \nonumber \\
&\leq \| \varphi \|_\infty \rho^t \int \statemap(h_0, h'_0, u_0, u'_0, v_0, v'_0) \, \K(z, \rmd z').  
\label{eq:K:one:step:diff}
\end{align}
Now, define the kernel 
$$
\Pi^{\thp, \parvar} f(z) \eqdef f(z) + \sum_{t = 0}^\infty \left( (\K)^{t + 1} f(z) - (\K)^t f(z) \right)  
$$
on $\Zsp \times \lip(\Zfd)$. 
With this definition, note that by \eqref{eq:K:one:step:diff}, 
\begin{align*}
    |(\K)^t f(z) - \Pi^{\thp, \parvar} f(z)| &\leq \sum_{s = t}^\infty |(\K)^{s + 1} f(z) - (\K)^s f(z)| \nonumber \\
    &\leq \bar{\statemap}(h_0, u_0, v_0) \| \varphi \|_\infty \rho^t,  
\end{align*}
where $\bar{\statemap}(h_0, u_0, v_0)$ is defined in \eqref{eq:def:statemap:bar}, 
which establishes \eqref{eq:K:Sigma:diff}. 
Finally, it remains to prove that the function $\Pi^{\thp, \parvar} f$ is constant. For this purpose, pick arbitrarily $(z, z') \in \Zsp^2$; then, however, by \eqref{eq:K:diff:bound:final:th} and \eqref{eq:K:Sigma:diff},  
\begin{multline*}
|\Pi^{\thp, \parvar} f(z) - \Pi^{\thp, \parvar} f(z')| \leq \inf_{t \in \nset} \left( |(\K)^{t + 1} f(z) - (\K)^t f(z)| + |(\K)^t f(z) - \Pi^{\thp, \parvar} f(z) | \right. \\ + \left. |(\K)^t f(z') - \Pi^{\thp, \parvar} f(z')| \right) = 0, 
\end{multline*}
from which the claim follows.
\end{proof}

\subsection{Discussion on \Cref{ass:unif:erg:xya}}
\label{sec:discussion:unif:erg:xya}

Recall \Cref{ass:unif:erg:xya}, under which it is supposed that for every $\parvar \in \Phi$ there exist $\alpha \in (0, 1)$ and $\pi^{\parvar} \in \probmeas(\Xfd \tensprod \Yfd \tensprod \Afd)$ ($\alpha$ being independent of $\parvar$) such that for every $t \in \nset$, $(x, y, a) \in \Xsp \times \Ysp \times \Asp$, and $f \in \bmf(\Xfd \tensprod \Yfd \tensprod \Afd)$, 
\begin{equation} \label{eq:TV:ass:alt}
|(\S)^t f(x, y, a) - \pi^\parvar f| \leq \| f \|_\infty \alpha^t, 
\end{equation}
where 
\begin{multline*}
\S f(x_t, y_t, a_t) \eqdef \int f(x_{t + 1}, y_{t + 1}, \amap{\parvar}(a_t, y_{t + 1})) \, S((x_t, y_t), \rmd (x_{t + 1}, y_{t + 1})), \\ (x_t, y_t, a_t, f) \in \Xsp \times \Ysp \times \Asp \times \bmf(\Xfd \tensprod \Yfd \tensprod \Afd),   
\end{multline*}
is the transition kernel of the Markov chain $(X_t, Y_t, a_t)_{t \in \nset}$. 

In this section, we will present conditions under which \eqref{eq:TV:ass:alt} applies, if not for all $f \in \bmf(\Xfd \tensprod \Yfd \tensprod \Afd)$, then at least 
for all $f$ in a  certain Lipschitz subclass $\liptd(\Xfd \tensprod \Yfd \tensprod \Afd)$ of $\mf(\Xfd \tensprod \Yfd \tensprod \Afd)$ to be specified. 
Recall the mapping $\amap{\parvar}$ introduced in \Cref{sec:model:background} and define, for every $t \in \nsetpos$ and vectors $y_{1:t} \in \Ysp^t$ and $a_{0:t - 1} \in \Asp^t$, the composite versions 
\begin{equation} \label{eq:def:amap:comp}
\amap[t]{\parvar}(\cdot, y_{1:t}) \eqdef 
\begin{cases}
     \amap{\parvar}(\cdot, y_1), & \mbox{for $t = 1$,} \\
     \amap{\parvar}(\amap[t - 1]{\parvar}(\cdot, y_{1:t - 1}), y_t), & \mbox{for $t \geq 2$.}
\end{cases}
\end{equation}
We will assume that these satisfy the following assumption. 

\begin{assumption} \label{ass:amap:contraction}
    There exist $c > 0$ and $\kappa \in (0, 1)$ such that for every $\parvar \in \Phi$, $t \in \nset$, $y_{1:t} \in \Ysp^t$, 
    and $(a, \tilde{a}) \in \Asp^2$, 
    \begin{itemize}
    \item[(i)]
    $\displaystyle
    \| \amap[t]{\parvar}(a, y_{1:t}) - \amap[t]{\parvar}(\tilde{a}, y_{1:t}) \|_2 \leq c \|a - \tilde{a}\|_2\kappa^t$, 
    \item[(ii)] $\displaystyle \| \amap{\parvar}(a, y_t) - a \|_2 \leq c$.
    \end{itemize}
\end{assumption}

\begin{remark}
    To illustrate \Cref{ass:amap:contraction}(i), let $\amap{\parvar}$ correspond to the following vanilla recurrent neural network architecture, where $a_0 \in \rset^d$ is the initial hidden state. For $t \in \nsetpos$, given a sequence $y_{1:t}$ of observations, the hidden state $a_t$ is given by
    $$
        a_t = \amap{\parvar}(a_{t - 1}, y_t) = \sigma(W_\parvar a_{t-1} + U_\parvar y_t + b_\parvar),
    $$
where
\begin{itemize}
    \item $W_\parvar \in \rset^{d\times d}$ and $U_\parvar \in \rset^{d\times d_y}$ are  weight matrices and $b_\parvar \in \rset^d$ is a bias vector.
    \item $\sigma : \rset^d \rightarrow \rset^d$ is a Lipschitz mapping with Lipschitz constant $L_\sigma$. Typical examples include the ReLU and hyperbolic tangent  (tanh) functions, both of which satisfy $L_\sigma = 1$.
\end{itemize}
Consider the spectral norm of $W_\parvar$ given by
$$
L_{W_\parvar} = \|W_\parvar\|_2 = \sup_{\|x\|_2=  1} \|W_\parvar x\|_2, 
$$
\emph{i.e.}, $L_{W_\parvar}$ is the largest singular value of $W_\parvar$.
Now, let us consider  two sequences $(a_t)_{t \in \nset}$ and $(\tilde{a}_t)_{t \in \nset}$ of hidden states starting with $a_0$ and $\tilde{a}_0$, respectively.
For every fixed observation sequence $y_{1:t}$, it holds, by definition  \eqref{eq:def:amap:comp}, that
\begin{align*}
     \|\amap[t]{\parvar}(a_0, y_{1:t}) - \amap[t]{ \parvar}(\tilde{a}_0, y_{1:t}) \|_2 &\leq L_\sigma\left\|W\left(\amap[t-1]{\parvar}(a_0, y_{1:t-1})-\amap[t-1]{\parvar}(\tilde{a}_0, y_{1:t-1})\right)\right\|_2\\
     &\leq L_\sigma^t \|W\|_2^t \|a_0 - \tilde{a}_0\|_2.
\end{align*}
Typically, we may assume here that $L_\sigma = 1$ (again, this is the case for the ReLU and tanh functions). \Cref{ass:amap:contraction}(i) then holds if $\|W_\parvar\|_2< 1$. 
    This contraint can be enforced during training via spectral norm regularization \cite{yoshida2017spectral}. Alternatively, one may enforce it \emph{a priori} by fixing the spectral norm to be at most $\rho_{\text{max}} < 1$ and, at each gradient step $k$,  projecting the unconstrained weight matrix  $\tilde{W}_\parvar^{(k)}$ onto the corresponding spectral-norm ball according to  
    $$
    W_\parvar^{(k)} = \rho_{\text{max}}  \frac{\tilde{W}_\parvar^{(k)}}{
    \| \tilde{W}_\parvar^{(k)}\|_2}.
    $$
    Such projections are already standard in echo state networks \cite{sun2020review}.
\end{remark}

\begin{assumption}
[Uniform ergodicity of $(X_t, Y_t)_{t \in \nset}$]\label{ass:ergodicity:SSM} There exist $\beta \in (0, 1)$ and  
$\sigma \in \probmeas(\Xfd \tensprod \Yfd)$ such that for every $(x,y) \in \Xsp \times \Ysp$ and $t\in \nsetpos$,
\[
  \| \R^t((x,y),\cdot) - \sigma \|_{\mathsf{TV}}
  \le \beta^t.
\]
\end{assumption}

In addition, consider the following Lipschitz class of functions. 
\begin{definition}
\label{def:lip:h:tilde}
    Let $\tilde{\lip}(\Xfd \tensprod \Yfd \tensprod \Afd)$ be the set of $f \in \mf(\Xfd \tensprod \Yfd \tensprod \Afd)$ for which there exists $\tilde{\varphi} \in \bmf(\Xfd \tensprod \Yfd)$ such that for every $(x, y, a, a') \in \Xsp \times \Ysp \times \Asp^2$, 
    \begin{itemize}
        \item[(i)] $|f(x, y, a)| \leq \tilde{\varphi}(x, y)$, 
        \item[(ii)] $|f(x, y, a) - f(x, y, a')| \leq \tilde{\varphi}(x, y) \|a - a'\|_2$.  
    \end{itemize}
\end{definition}

\begin{proposition} \label{prop:K:ergodicity}
    Assume \ref{ass:amap:contraction} and \ref{ass:ergodicity:SSM}. Then there exist a function $\delta : \Asp^2 \to \rsetpos$ and $\alpha \in (0, 1)$ such that for every $\parvar \in \vparsp$, $t \in \nset$, $f \in \tilde{\lip}(\Xfd \tensprod \Yfd \tensprod \Afd)$, and $((x, y, a), (x', y', a')) \in (\Xsp \times \Ysp \times \Asp)^2$,  
    \begin{equation}    \label{eq:K:diff:bound:final:s}
    |(\S)^t f(x, y, a) - (\S)^t f(x', y', a')| \leq \delta(a, a') \| \tilde{\varphi} \|_\infty \alpha^t.     
    \end{equation}
    Moreover, there exists a kernel $\pi^{\parvar}$ on $(\Xsp \times \Ysp \times \Asp) \times \tilde{\lip}(\Xfd \tensprod \Yfd \tensprod \Afd)$ such that for every $f \in \lip(\Xfd \tensprod \Yfd \tensprod \Afd)$, $\pi^{ \parvar} f$ is constant and for every $t \in \nset$ and $(x, y, a) \in \Xsp \times \Ysp \times \Asp$, 
    \begin{equation} \label{eq:K:Sigma:diff:s}
    |(\S)^t f(x, y, a) - \pi^{\parvar} f| \leq \bar{\delta}(a) \| \tilde{\varphi} \| \alpha^t,  
    \end{equation}
    where 
    \begin{equation} \label{eq:def:delta:bar}
    \bar{\delta}(a) \eqdef \frac{1}{1 - \alpha} \int \delta(a, a') \, \S((x, y, a), \rmd(x',y',a')).  
    \end{equation}
\end{proposition}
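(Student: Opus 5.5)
The plan is to mirror the proof of \Cref{prop:K_2:ergodicity}, with the function-valued component $(h,u,v)$ replaced by the finite-dimensional state $a$. Here $\amap[t]{\parvar}$ plays the role of $\hmap[t]{\thp,\parvar}$, and \Cref{ass:amap:contraction}(i) replaces \Cref{hmap:vmap:contraction:lemma}. Since $a_t = \amap[t]{\parvar}(a_0, y_{1:t})$ is a deterministic function of $a_0$ and the observations, I will write
$$
(\S)^t f(x_0,y_0,a_0) = \int f\bigl(x_t, y_t, \amap[t]{\parvar}(a_0, y_{1:t})\bigr)\, \R^t((x_0,y_0), \rmd(x_{1:t}, y_{1:t})),
$$
where the path measure of $(X_s,Y_s)_{s\le t}$ under $\R$ does not depend on $\parvar$.

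For fixed $\tilde a\in\Asp$ I will set $f_{s,t}(x_t,y_{s:t}) \eqdef f(x_t,y_t,\amap[t-s+1]{\parvar}(\tilde a, y_{s:t}))$ for $s\in\intset{1}{t}$, with $f_{t+1,t}(x_t,y_t) \eqdef f(x_t,y_t,\tilde a)$. I will then use the telescoping decomposition of \cite{tadic-doucet-2005}:
$$
f\bigl(x_t,y_t,\amap[t]{\parvar}(a_0,y_{1:t})\bigr) = \bigl[f(x_t,y_t,\amap[t]{\parvar}(a_0,y_{1:t})) - f_{1,t}\bigr] + \sum_{s=1}^{t}(f_{s,t}-f_{s+1,t}) + f_{t+1,t}.
$$
The three kinds of terms are handled as follows.

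\textbf{First term.} By \Cref{def:lip:h:tilde}(ii) and \Cref{ass:amap:contraction}(i), it is bounded by $\tilde\varphi(x_t,y_t)\,c\|a_0-\tilde a\|_2\kappa^t$.

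\textbf{Increments.} Write $\amap[t-s+1]{\parvar}(\tilde a,y_{s:t}) = \amap[t-s]{\parvar}(\amap{\parvar}(\tilde a,y_s),y_{s+1:t})$. Then \Cref{ass:amap:contraction}(i)–(ii) give $|f_{s,t}-f_{s+1,t}| \le \tilde\varphi(x_t,y_t)\,c^2\kappa^{t-s}$. This is uniform in $a$, which is the point of (ii).

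\textbf{Crucial structural fact.} The function $g_{s,t} \eqdef f_{s,t}-f_{s+1,t}$ depends only on $(x_t,y_{s:t})$ and not on $a$. Its conditional expectation given $(X_{s-1},Y_{s-1})$ is therefore a bounded function $\bar g_{s,t}$ of $(x_{s-1},y_{s-1})$ alone. It satisfies $\|\bar g_{s,t}\|_\infty \le c^2\|\tilde\varphi\|_\infty\kappa^{t-s}$ and also the crude bound $\|\bar g_{s,t}\|_\infty \le 2\|\tilde\varphi\|_\infty$.

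Taking the difference of the expansions at $(x,y,a)$ and $(x',y',a')$ yields three contributions:
\begin{itemize}
\item the first-term pieces, bounded by $c(\|a-\tilde a\|_2 + \|a'-\tilde a\|_2)\|\tilde\varphi\|_\infty\kappa^t$;
\item the sums $\sum_s (\R^{s-1}\bar g_{s,t})(x,y) - (\R^{s-1}\bar g_{s,t})(x',y')$, where each summand is at most $2\|\bar g_{s,t}\|_\infty\beta^{s-1}$ by \Cref{ass:ergodicity:SSM};
\item the last term, which is at most $2\|\tilde\varphi\|_\infty\beta^t$.
\end{itemize}
Combining the two bounds on $\bar g_{s,t}$ gives $\sum_s \kappa^{t-s}\beta^{s-1} \le t(\kappa\vee\beta)^{t-1}$. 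Choosing $\alpha\in(\kappa\vee\beta,1)$ absorbs the factor $t$ into a constant, as was done for $\varrho$ in \Cref{hmap:vmap:contraction:lemma}. Picking $\tilde a = a$ yields \eqref{eq:K:diff:bound:final:s} with
$$
\delta(a,a') = C\bigl(1+\|a-a'\|_2\bigr)
$$
for a constant $C$ depending only on $c$, $\kappa$, $\beta$ and $\alpha$.

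The second part follows exactly as in \Cref{prop:K_2:ergodicity}:
\begin{itemize}
\item bound $|(\S)^{t+1}f(z)-(\S)^t f(z)| \le \int |(\S)^t f(z')-(\S)^t f(z)|\,\S(z,\rmd z')$ by $\|\tilde\varphi\|_\infty\alpha^t\int\delta(a,a')\,\S(z,\rmd z')$;
\item define $\pi^\parvar f(z)$ as the telescoping limit of these increments, summable since $\alpha<1$;
\item sum the geometric tail to get \eqref{eq:K:Sigma:diff:s} with $\bar\delta$ as in \eqref{eq:def:delta:bar};
\item obtain constancy in $z$ by combining with \eqref{eq:K:diff:bound:final:s} and letting $t\to\infty$.
\end{itemize}
For $\bar\delta$ to be finite I need $\int\|a-a'\|_2\,\S(z,\rmd z')\le c$, which is exactly \Cref{ass:amap:contraction}(ii).

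The main obstacle is the structural fact above: the $a$-dependence must be fully absorbed by $\tilde a$, so that the increments are functions of $(x,y)$ only and \Cref{ass:ergodicity:SSM} can be applied in total variation. Because $a$ is not mixed by $\R$, applying the assumption naively to functions of $a$ would fail. Bookkeeping the time index $s-1$ versus $s$ in the conditioning is the delicate part.
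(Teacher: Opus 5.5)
Your proposal is correct and follows the paper's proof essentially step by step. It uses the same Tadi\'c--Doucet telescoping around a fixed reference point $\tilde a$, the same increment bound $c^2\kappa^{t-s}$ from Assumption~\ref{ass:amap:contraction}(i)--(ii), the same uniform ergodicity of $\R$ applied to the $a$-free increments with $\alpha \in (\kappa \vee \beta, 1)$, and the same telescoping construction of $\pi^{\parvar}$; your choice $\tilde a = a$ (giving $\delta(a,a') = C(1 + \|a - a'\|_2)$), the corrected last term $f(x_t, y_t, \tilde a)$, and the remark that Assumption~\ref{ass:amap:contraction}(ii) keeps $\bar\delta$ finite only tidy up the paper's bookkeeping.
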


\begin{proof}[Proof of \Cref{prop:K:ergodicity}]
We proceed as in the proof of \Cref{prop:K_2:ergodicity}. 
To prove the first claim, we introduce the short-hand notation 
\begin{equation} \label{eq:f:amap:comp:def}
f_{s,t}(x_t, y_{s:t}, a) \eqdef f(x_t, y_t, \amap[t - s + 1]{\parvar}(a, y_{s:t}))
\end{equation}
for $s \in \intset{1}{t}$, where we have omitted the dependence on $\parvar$ for brevity. Note that with this notation, 
\begin{equation*}
    (\S)^t f(z) \\
    = \idotsint  f_{1,t}(x_t, y_{1:t}, a) \, \R((x, y), \rmd(x_1, y_1)) \prod_{s = 1}^{t - 1} \R((x_s, y_s), \rmd(x_{s + 1}, y_{s + 1})). 
    \end{equation*}
Now, picking $\tilde{a} \in \Asp$ arbitrarily and using the decomposition 
\begin{equation*}
f_{1,t}(x_t, y_{1:t}, a) = f_{1,t}(x_t, y_{1:t}, a) - f_{1,t}(x_t, y_{1:t}, \tilde{a}) 
+ \sum_{s = 1}^t \left( f_{s, t}(x_t, y_{s:t}, \tilde{a}) - f_{s + 1, t}(x_t, y_{s + 1:t}, \tilde{a})\right) + f_{t, t}(x_t, y_t, \tilde{a}), 
\end{equation*}
we may write, 
\begin{align}
    \lefteqn{(\S)^t f(x, y, a) - (\S)^t f(x', y', a')} \nonumber \\ 
    &= \idotsint \left( f_{1,t}(x_t, y_{1:t}, a) - f_{1,t}(x_t, y_{1:t}, \tilde{a})  \right) \R((x, y), \rmd(x_1, y_1)) \prod_{s = 1}^{t - 1} \R((x_s, y_s), \rmd(x_{s + 1}, y_{s + 1})) \nonumber \\
    & - \idotsint \left( f_{1,t}(x_t, y_{1:t}, a') - f_{1,t}(x_t, y_{1:t}, \tilde{a})  \right) \R((x', y'), \rmd(x_1, y_1)) \prod_{s = 1}^{t - 1} \R((x_s, y_s), \rmd(x_{s + 1}, y_{s + 1})) \nonumber \\
    &+ \sum_{s = 1}^t \idotsint \left( f_{s, t}(x_t, y_{s:t}, \tilde{a}) - f_{s + 1, t}(x_t, y_{s + 1:t}, \tilde{a})\right) 
    \Delta \R_s((x, y), \rmd(x_s, y_s)) \, \prod_{\ell = s}^{t - 1} \R((x_\ell, y_\ell), \rmd(x_{\ell + 1}, y_{\ell + 1})) \nonumber \\
    & - \sum_{s = 1}^t \idotsint \left( f_{s, t}(x_t, y_{s:t}, \tilde{a}) - f_{s + 1, t}(x_t, y_{s + 1:t}, \tilde{a})\right) 
    \Delta \R_s((x', y'), \rmd(x_s, y_s)) \, \prod_{\ell = s}^{t - 1} \R((x_\ell, y_\ell), \rmd(x_{\ell + 1}, y_{\ell + 1})) \nonumber \\
    &+ \int f_{t, t}(x_t, y_t, \tilde{a}) \, \Delta \R_t((x, y), \rmd(x_t, y_t)) - \int f_{t, t}(x_t, y_t, \tilde{a}) \, \Delta \R_t((x', y'), \rmd(x_t, y_t)), \label{eq:K:diff:decomp}
\end{align}
where we have defined, for $s \in \intset{1}{t}$, the signed kernels 
$$
\Delta \R_s f(x, y) \eqdef \R^s f(x, y) - \sigma f, \quad (f, x, y) \in \bmf(\Xfd \tensprod \Yfd) \times \Xsp \times \Ysp.   
$$
Note that by \Cref{def:lip:h:tilde} and \Cref{ass:amap:contraction}, for every $s \in \intset{1}{t}$,   
\begin{align}
    |f_{s, t}(x_t, y_{s:t}, \tilde{a}) - f_{s + 1, t}(x_t, y_{s + 1:t}, \tilde{a})| &\leq \tilde{\varphi}(x_t, y_t) \| \amap[t - s + 1]{\parvar}(\tilde{a}, y_{s:t}) - \amap[t - s]{\parvar}(\tilde{a}, y_{s + 1:t})\|_2 \nonumber \\
    &= \tilde{\varphi}(x_t, y_t) \| \amap[t - s]{\parvar}(\amap{\parvar}(\tilde{a}, y_s), y_{s + 1:t}) - \amap[t - s]{\parvar}(\tilde{a}, y_{s + 1:t})\|_2 \nonumber \\
    &\leq \tilde{\varphi}(x_t, y_t) c \|\amap{\parvar}(\tilde{a}, y_s) -  \tilde{a} \|_2 \kappa^{t - s} \nonumber \\
    &\leq \tilde{\varphi}(x_t, y_t) c^2 \kappa^{t - s}. \label{eq:h:bound:gen}
\end{align}
Similarly, 
\begin{align}
|f_{1,t}(x_t, y_{1:t}, a) - f_{1,t}(x_t, y_{1:t}, \tilde{a})| &\leq \tilde{\varphi}(x_t, y_t) c \| a - \tilde{a} \|_2 \nonumber \\
&\leq \tilde{\varphi}(x_t, y_t) c (\| a \|_2 + \| \tilde{a} \|) \kappa^t.  \label{eq:h:bound:spec}
\end{align}
Now, let 
$$
\delta'(a, a') \eqdef c^2 \vee c (\| a \|_2 + \| \tilde{a} \|) \vee c (\| a' \|_2 + \| \tilde{a} \|); 
$$ 
then, applying the bounds  \eqref{eq:h:bound:gen} and \eqref{eq:h:bound:spec} to the decomposition \eqref{eq:K:diff:decomp} yields 
\begin{multline} \label{eq:K:diff:bound:1:s}
    |(\S)^t f(x, y, a) - (\S)^t f(x', y', a')| \\ 
    \leq \delta'(a, a') \kappa^t \left( \R^t \tilde{\varphi} (x, y) + \R^t \tilde{\varphi} (x', y') \right) + \delta'(a, a') \sum_{s = 1}^t \kappa^{t - s} \left( \Delta \R_s \R^{t - s} \tilde{\varphi}(x, y) + \Delta \R_s \R^{t - s} \tilde{\varphi}(x', y') \right) \\
    + \Delta \R_t \tilde{\varphi} (x, y) + \Delta \R_t \tilde{\varphi} (x', y'). 
\end{multline}
Now, by applying \Cref{ass:ergodicity:SSM} to the right-hand side of \eqref{eq:K:diff:bound:1:s} we obtain 
\begin{equation*}
  |(\S)^t f(x, y, a) - (\S)^t f(x', y', a')| \leq 2 \| \tilde{\varphi} \|_\infty \left(\delta'(a, a') \kappa^t + \delta'(a, a') t (\kappa \vee \beta)^t + \beta^t \right),     
\end{equation*}
from which \eqref{eq:K:diff:bound:final:s} follows by picking $\alpha \in ((\kappa \vee \beta), 1)$ and letting 
$$
\delta(a, a') \eqdef 2(\delta'(a, a') + 1)(\kappa \vee \beta) / \alpha + 2\delta'(a, a')(\operatorname{e} |\log\{(\kappa \vee \beta) / \alpha\}|)^{-1}. 
$$ 
To prove the second claim, first note that by \eqref{eq:K:diff:bound:final:s}, \begin{align}
\lefteqn{|(\S)^{t + 1} f(x, y, a) - (\S)^t f(x, y, a)|} \nonumber \hspace{15mm}\\
&\leq \int |(\S)^t f(x', y', a') - (\S)^t f(x, y, a)| \, \S((x, y, a), \rmd(x', y', a')) \nonumber \\
&\leq \| \tilde{\varphi} \|_\infty \alpha^t \int \delta(a, a') \, \S((x, y, a), \rmd(x', y', a')). \label{eq:K:one:step:diff:s}
\end{align}
Now, define the kernel 
$$
\pi^{\parvar} f(x, y, a) \eqdef f(x, y, a) + \sum_{t = 0}^\infty \left( (\S)^{t + 1} f(x, y, a) - (\S)^t f(x, y, a) \right)
$$
on $\Xsp \times \Ysp \times \Asp \times \lip(\Xfd \tensprod \Yfd \tensprod \Afd)$. With this definition, note that by \eqref{eq:K:one:step:diff:s}, 
\begin{align*}
    |(\S)^t f(x, y, a) - \pi^{\parvar} f(x, y, a)| &\leq \sum_{s = t}^\infty |(\S)^{s + 1} f(x, y, a) - (\S)^s f(x, y, a)| \nonumber \\
    &\leq \delta(a) \| \tilde{\varphi} \|_\infty \alpha^t,  
\end{align*}
where $\delta(a)$ is provided by \eqref{eq:def:delta:bar}, 
which establishes \eqref{eq:K:Sigma:diff:s}. 

Finally, it remains to prove that the function $\pi^{\parvar} f$ is constant. For this purpose, pick arbitrarily $((x, y, a),(x'y',a')) \in (\Xsp \times \Ysp \times \Asp)^2$; then, however, by \eqref{eq:K:diff:bound:final:s} and \eqref{eq:K:Sigma:diff:s},  
\begin{multline*}
|\pi^{\parvar} f(x, y, a) - \pi^{\parvar} f(x', y', a')| \leq \inf_{t \in \nset} \left( |(\S)^{t + 1} f(x, y, a) - (\S)^t f(x, y, a)| \right. \\ + |(\S)^t f(x, y, a) - \pi^{\parvar} f(x, y, a) | \\
\left. + |(\S)^t f(x', y', a') - \pi^{\parvar} f(x', y', a')| \right) = 0, 
\end{multline*}
from which the claim follows. 
\end{proof}

\section{Full algorithm using backward sampling and control variate}
\label{appdx:full:algo}
In this section, we detail Algorithm \ref{alg:full:monty}, which presents one iteration of the online gradient ascent algorithm in the amortized scheme, incorporating both backward sampling and control variates for improved efficiency and variance reduction.

\paragraph{Backward sampling. }
Computing the backward weights of \eqref{eq:backwardweights:online} has the disadvantage of $O(N^2)$ complexity due to the computation of the normalizing constant, which can be prohibitive when $N$ is large (typically for high-dimensional state spaces).
One solution, suggested by \cite{olsson2017efficient} in the context of SMC smoothing, is to use a backward sampling approach. 
More precisely, at time step $t$, given $\sample{t}{i}$, one samples independently $M$ indexes $\{j_k\}_{k = 1}^M$ 
from the categorical distribution over $ \{1,\ldots,N\}$ with weights $\{\nrmbackwdweight{t}{i,j}\}_{j = 1}^N$, and replace \eqref{eq:approx:Hstat:online} by $\sum_{k=1}^M (\approxHstat[\parvar_{t-1}]{t-1}{j_k} + \addfelbo[\parvar_t]{t}(\sample{t-1}{j_k}, \sample{t}{i}))/M$.
\cite{olsson2017efficient} show that even with $M$ much smaller than $N$ (typically, $M=2$), which provides a considerable improvement in complexity, this alternative estimator has only slightly higher variance than the original estimator. 
Here, noting that $\nrmbackwdweight{t}{i,j} \propto_{j} \fwdpot{t}(\sample{t-1}{j}, \sample{t}{i})$, backward sampling can, in the case of bounded potential functions, be performed using an accept-reject procedure without having to calculate the normalizing constant of the weights. 
We refer the reader to  \cite{olsson2017efficient,gloaguen2022pseudo,dau2022complexity} for details and alternative backward sampling approaches. 

\paragraph{Variance reduction of the gradient estimator. }

Proposition \ref{prp:elbo:grad:recursion} involves computing \emph{score-function expectations} in the form $\expect{q^\parvar}{\grad_{\parvar} \log q^\parvar(X) \cdot f(X)}$ for some p.d.f. $q^\parvar$. 
As shown in \cite{mohamed2020}, direct Monte Carlo estimation of the score function leads to high variance and should normally not be used without a suitable variance reduction technique. The most straightforward approach is to design a \textit{control variate}. 
Using the fact that $\mathbb{E}_{q^\parvar}[\grad_{\parvar} \log q^\parvar(X)] = 0$, the target expectation can be rewritten as $\mathbb{E}_{q^\parvar}[\grad_{\parvar} \log q^\parvar(X) \{f(X) - \expect{q^\parvar}{f(X)}\}]$, which can be estimated with lower variance using a Monte Carlo estimate of $\expect{q^\parvar}{f(X)}$. In our case, the latter is formed as a by-product of Algorithm~\ref{alg:onlinenabla}, and  therefore our methodology comes with built-in variance reduction without the need to recompute additional quantities. 
This accelerated version of Algorithm~\ref{alg:onlinenabla} is described in detail in Algorithm~\ref{alg:full:monty} (see Appendix~\ref{appdx:full:algo}), which also includes the backward sampling technique described above.

As an alternative to this variance reduction technique, it is natural to consider the reparametrization trick, as it often leads to Monte Carlo estimators with lower variance compared to those obtained using the score function.
However, the implementation of the reparametrization trick in this context requires that $\grad_{\parvar} \ELBO{t}$ is expressed as an expectation with respect to a random variable $Z_{0:t}$ that does not depend on $\parvar$. Moreover, the recursive expression of this expectation at time $t + 1$ must be derivable from its predecessor, which is non-trivial.
For example, in the classical case where $q_{0:t}$ is the p.d.f. of a multivariate Gaussian random variable with mean $\mu$ and variance $\Sigma$, and the expectation is taken w.r.t. $Z_{0:t} \sim \mathcal{N}(0, I_{\statedim \times (t + 1)})$, such a recursion is not feasible as the ELBO is no longer an additive functional when $X_{0:t}$ is replaced by  $\mu + \Sigma^{\frac{1}{2}}Z_{0:t}$.

\begin{algorithm}[ht] 
\caption{\label{alg:full:monty}One iteration of the online gradient ascent algorithm (for $t\geq 1$) in the amortized scheme}
\begin{algorithmic}
\REQUIRE \textit{}
\begin{itemize}
    \item Previous statistics $\{ (\approxHstat[\parvar_{t-1}]{t-1}{i}, \approxGstat[\parvar_{t-1}]{t-1}{i}, \approxFstat[\thp_{t-1}]{t-1}{i})\}_{i=1}^N$, and previous  samples $\{\sample{t-1}{i}\}_{i=1}^N$;
    \item Intermediate quantity $a_{t-1}$, parameter estimates $(\thp_t, \parvar_{t})$, step sizes $(\gamma^\thp_t,\gamma^\parvar_t)$;
    \item New observation $y_t$.
\end{itemize}
\ENSURE $\{(\approxGstat[\parvar_{t}]{t}{i}, \approxHstat[\parvar_{t}]{t}{i})\}_{i=1}^N$, $\parvar_{t + 1}$, $a_t$.
\STATE
Set $a_t \gets \amap{\parvar_t}(a_{t-1}, y_t)$\ and $\eta^{\parvar_t}_t \gets \fat[t](a_{t})$, the parameters of $q_t^{\parvar_{t}}$ 
\STATE Sample $\{\sample{t}{i}\}_{i=1}^N$ independently from $\vd[\parvar_{t}]{t}$\;
\FOR{$i \gets 1$ to $N$}
\STATE Set $\tilde{\eta}_{t}^{\parvar_{t},i} \gets \fxt[t](\sample{t}{i})$\;
\FOR{$j \gets 1$ to $M$}
\STATE  \textit{// Backward sampling step, $M$ is the number of backward samples} 
\STATE Sample $(j_k)_{k = 1}^M\overset{\text{i.i.d.}}{\sim} \mathsf{Cat}(\{\nrmbackwdweight[\parvar_t]{t}{i,j}\}_{1\leq j \leq N})$ with the weights  \eqref{eq:backwardweights:online}. 
\ENDFOR
\STATE Set  \textit{// Recall that each term $\addfelbo[\parvar_t]{t}$ depends on $y_t$.}
\begin{align*}
\approxHstat[\parvar_{t}]{t}{i} &\gets 
\frac{1}{M}\sum_{k=1}^M
\left(
\approxHstat[\parvar_{t - 1}]{t - 1}{j_k} + \addfelbo[\parvar_{t}]{t}(\sample{t-1}{j_k},\sample{t}{i})\right) \eqsp;\\
\approxGstat[\parvar_{t}]{t}{i} &\gets 
\frac{1}{M}\sum_{k=1}^M
\left\{
\approxGstat[\parvar_{t - 1}]{t-1}{j_k} + \grad_\parvar \log \vd[\parvar_{t}]{t-1|t} (\sample{t-1}{j_k},\sample{t}{i})\left( \approxHstat[\parvar_{t-1}]{t - 1}{j_k} + \addfelbo[\parvar_{t}]{t}(\sample{t-1}{j_k},\sample{t}{i}) - \approxHstat[\parvar_{t}]{t}{i} \right)\right\}\eqsp;\\
\approxFstat[\thp_t]{t}{i} &\gets \sum_{i=1}^N \nrmbackwdweight[\parvar_t]{t}{i,j}\left\{\approxFstat[\thp_{t-1}]{t - 1}{j} + \grad_\thp \addfelbo[\parvar_t]{t}(\xi_{t-1}^j, \xi_t^i) \right\}\eqsp;
\end{align*}
\STATE  \textit{// Note the difference with \eqref{eq:approx:Gstat:online} and the inclusion of control variate $\approxHstat{t}{i}$ for the computation of $\approxGstat{t}{i}$} 
\STATE \textit{// }$\grad\vd[\parvar_{t}]{t-1|t} (\sample{t-1}{j_k},\sample{t}{i})$ \textit{is typically computed with automatic differentiation}
\ENDFOR
\STATE Set 
\begin{align*}
\hGELBOt{\parvar}{t}{t} &\gets \frac{1}{N}\sum_{i=1}^N
\left\{\approxGstat[\parvar_{t}]{t}{i} + \grad_\parvar \log \vd[\parvar_{t-1}]{t}(\sample{t}{i})\left(\approxHstat[\parvar_{t}]{t}{i} - \frac{1}{N}\sum_{k=1}^N \approxHstat[\parvar_{t}]{t}{k}\right)\right\} \ ;\\
\parvar_{t + 1} &\gets \parvar_{t} + \gamma^{\parvar}_t \left( \hGELBOt{\parvar}{t}{t} - \hGELBOt{\parvar}{t-1}{t-1}\right) \; \\
\hGELBOt{\thp}{t}{t} &\gets \frac{1}{N}\sum_{i = 1}^{N}\approxFstat[\thp_t]{t}{i}\,;\\
\thp_{t+1} &\gets \thp_t + \gamma^{\thp}_{t+1}\left(\hGELBOt{\thp}{t}{t} - \hGELBOt{\thp}{t-1}{t-1}\right)\eqsp.
\end{align*}
\STATE \textit{// Note the difference with Algorithm \ref{alg:onlinenabla} and the inclusion of the control variate $\frac{1}{N}\sum_{i=1}^N \approxHstat{t}{i}$ in the calculation of $\widehat{\grad} \ELBO{t}$}
\STATE \textit{// }$\grad \log \vd[\parvar_{t-1}]{t}(\sample{t}{i})$ \textit{is typically computed using automatic differentiation}
\end{algorithmic}
\end{algorithm}

\section{Supplementary details for the numerical experiments in \Cref{sec:exp}}
\label{appdx:experiments}

\subsection{Appendix for section \ref{sec:LGHMM} the linear Gaussian SSM.} \label{sec:details:lin:Gauss}

\paragraph{Variational Family.}
In the linear Gaussian setting, the variational marginals $\vd{t}$ are parameterized as Gaussian distributions defined by their natural parameters $\eta_t^\parvar$. To ensure the backward kernel $\vd{t-1|t}$ remains in the same Gaussian family, the potential is explicitly defined as 
\begin{equation}
    \fwdpot{t}(x_{t-1},x_t) = \exp{(\langle \tilde{\eta}_{t}^{\parvar}(x_t), T(x_{t-1}) \rangle)} \eqsp.
\end{equation}
This formulation allows the backward density to be derived analytically by simply summing natural parameters, avoiding the need for normalizing constants. The parameters for the backward kernel are updated according to:
\begin{equation}
    \eta_{t-1|t}^\parvar = \eta_{t-1}^\parvar + \tilde{\eta}_t^\parvar \eqsp.
\end{equation}
Unlike the general case requiring neural networks, the recursions for these parameters in the linear Gaussian case are analytical, effectively mirroring the smoothing distribution updates of a standard linear Gaussian SSM. 

\paragraph{Parameters for the linear Gaussian SSM.}
For the streaming experiment presented in the main text ($T=50,000$), we learn both model and variational parameters from random initialization. We utilize learning rates of $10^{-3}$ for the variational parameters and $10^{-4}$ for the model parameters. 
To ensure a rigorous comparison, we replicate the generative settings of \cite{campbell2021online}, using diagonal noise covariance matrices with fixed variances of $0.1$ for the transition and $0.25$ for the emission. 

\paragraph{Oracle ELBO} As an oracle baseline, we can compute the closed-form ELBO and its associated gradient via the reparameterization trick. Figure \ref{fig:training_curve_lgm} displays the evolution of the ELBO in the case of the linear SSM and the offline setting, \emph{i.e.}, when observations are processed through multiple epochs. In this specific experiment, we do not learn the model parameter. 
For our recursive method, we choose $\Delta=2$ to truncate the backpropagation, as we observe that $\Delta < 2$ prevents our method from converging altogether, while $\Delta > 2$ only improves convergence speed by a small margin. 
The experiment is run using $10$ different parameters for the generative model, $\statedim=\obsdim=10$, $T=500$, and $N=2$ for the two methods involving Monte Carlo sampling.
It shows the convergence of our score-based solution to the correct optimum given by the analytical computations. 
This is particularly appealing and notably demonstrates that our online gradient-estimation method may perform well using few samples. 
In practice, we observe that the variance reduction introduced in Section \ref{appdx:full:algo} is crucial in reaching such performance.

\begin{figure}[t]
     \centering
     \includegraphics[width = 0.5\textwidth]{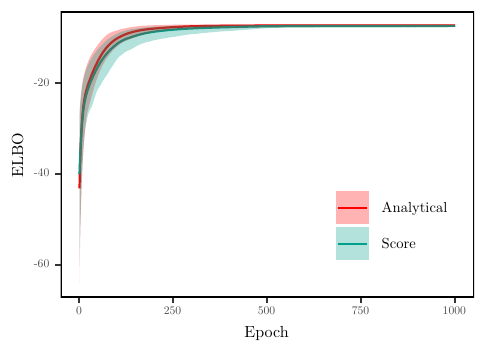}
     \caption{Evolution of $\ELBO{T}/T$ computed with three different methods and with three different types of gradients estimates. Full lines: means of the 10 replicates. Shaded lines: standard deviations of the 10 replicates.}
     \label{fig:training_curve_lgm}
 \end{figure}

\begin{table}[t]
    \small
    \centering
    \begin{tabular}{||c||c|c||} 
        \hline
        Gradients & $\Delta_{T,p}^\parvar$ ($\times 10^{-2}$) & Avg. time \\ [0.5ex] 
        \hline\hline
        Score-based & 13.5 $\pm$ 0.7 ({\bf 12.2}) & 173 ms \\ 
        \hline
        Backward sampling & 11.9 $\pm$ 0.4 ({\bf 11.4}) & 17 ms \\
        \hline
    \end{tabular}
    \caption{RMSE of the predicted marginal means $\expect{\vd{0:T}}{X_t}$ w.r.t. the true states $x_t^*$ and average time per gradient step.}
\label{table:smoothing_rmse_chaotic_dim_5}
\end{table}

\subsection{Appendix for section \ref{sec:xp:chaotic:rnn} the chaotic RNN}
\label{appdx:chaotic:rnn}
The 1-step smoothing and filtering errors in table \ref{table:1_step_smoothing_chaotic_rnn} are given as 
\begin{align}\label{eq:1-step_error}
\kappa^{(1)}_{T} &= \frac{1}{T-1}\sum_{t=1}^{T-1} \Big(\tfrac{1}{\statedim}\sum_{k=1}^{\statedim} \left(\approxexpect{\vd{t-1:t}}{X_{t-1}^{(k)}} - x_{t-1}^{*(k)}\right)^2\Big)^{\!1/2}\\
\kappa^{(2)}_{T}&=\frac{1}{T}\sum_{t=1}^{T} \Big(\tfrac{1}{\statedim}\sum_{k=1}^{\statedim} \left(\approxexpect{\vd{t}}{X_{t}^{(k)}} - x_{t}^{*(k)}\right)^2\Big)^{\!1/2}\eqsp\,,
\end{align}
where $\widehat{\mathbb{E}}_q$ denotes the standard Monte Carlo estimate of an expectation w.r.t. $q$.
Table~\ref{table:chaotic_streaming_data1} reports smoothing and filtering RMSE with respect to the true states at the end of training.

\begin{table}[t]
    \centering
    \small
    \begin{tabular}{||c||c|c||} 
        \hline
        Sequence & Smoothing RMSE & Filtering RMSE  \\ [0.5ex] 
        \hline
        Training & 0.281 & 0.311 \\ 
        \hline
        Eval & 0.278 ($\pm$ 0.01) & 0.305 ($\pm$ 0.014) \\
        \hline
    \end{tabular}
    \caption{Smoothing and filtering RMSE when $\parvar$ is learned online together with $(\rho,\gamma)$ in the chaotic RNN. Results are shown for the training stream and for independent sequences from the same generative model.}
    \label{table:chaotic_streaming_data1}
\end{table}

\paragraph{Parameters for the chaotic RNN}
We choose the same generative hyperparameters as \cite{campbell2021online} with $\Delta=0.001$, $\tau=0.025$, $\gamma=2.5$, $2$ degrees of freedom and a scale of $0.1$ for the Student-$t$ distribution, and define $Q = \text{diag}(0.01)$. 
For the joint learning setting, training is performed using the Adam \cite{kingma2015adam} optimizer with learning rates of $10^{-3}$ for the variational parameters and $10^{-4}$ for the model parameters.
The variational backward kernels follow the parameterization defined in the main text, where the potential functions are parameterized by MLPs with $100$ hidden units and $\tanh$ activations.

\paragraph{Implementation settings for the comparison with \cite{campbell2021online}}

In the original paper, the variational distribution is designed in a non-amortized scheme, meaning that the variational parameters are not shared through time.
Specifically, we have that $\parvar = \lbrace \parvar^0, \dots, \parvar^t,\dots, \rbrace$, and each $\parvar^t$ contains the parameter $\eta_t = (\mu_t, \Sigma_t)$ of the distribution $\vd[\parvar]{t} \sim \mathcal{N}(\mu_t, \Sigma_t)$ and the parameter $\tilde{\eta}_t$ of the function $\fwdpot[\parvar_t]{t}$. 
For this latter function, we match the number of parameters of \cite{campbell2021online} by defining $\fwdpot[\parvar]{t}(x,y) = \exp{(\tilde{\eta}_t(y) \cdot T(x))}$ with $\tilde{\eta}_t(y) = (\tilde{\eta}_{t,1}(y), \tilde{\eta}_{t,2})$ where $y \mapsto \tilde{\eta}_{t,1}(y)$ is a multi-layer perceptron with 100 neurons from $\statesp$ to $\statesp$, and $\tilde{\eta}_{t,2}$ is a negative definite matrix. We follow the optimization schedules of \cite{campbell2021online} with $K=500$ gradient steps at each time-step. 

\paragraph{Modifications induced by the non-amortized scheme}
\label{appdx:details:non:amortized}
In the non-amortized scheme, $\parvar = \lbrace \parvar^0, \dots, \parvar^t\rbrace$ is a set of distinct parameters, each parameter corresponding to a specific time step.
In the notations of the article, the estimate $\parvar_{t-1}$ of $\parvar$ after having processed observations the $y_{0:{t-1}}$ depends on $\lbrace\parvar^0, \dots, \parvar^{t-1}\rbrace$. 
Therefore, the gradient of the ELBO  will be w.r.t. $\parvar^t$ only. 
This affects the expression of the statistic $\Gstat{t}$, and one can see in the expansion of Eqn.~ \eqref{eq:appdx:beg:G} that the term \eqref{eq:appdx:Gtm1} will now, when the gradient is taken w.r.t. $\parvar^t$, be zero.  
This means that this term no longer has to be propagated.
Indeed, as we set $\vd[\parvar_t]{t-1|t}(x_t, x_{t-1}) \propto \vd[\parvar_{t-1}]{t - 1}(x_{t-1})\fwdpot[\parvar_t]{t}(x_{t-1}, x_t)$, the gradient of the ELBO w.r.t. $\parvar^t$ will be
\begin{align*}
\gradalt{\parvar^t} \ELBO[\parvar_t]{t} =& \mathbb{E}_{\vd[\parvar_t]{t}}
\left[ 
\mathbb{E}_{\vd[\parvar_t]{t - 1\vert t}}
\left[
\grad \log \vd{t}(X_t) \times \addfelbo[\parvar_t]{t}(X_{t-1}, X_t) \right.
\right.
\\
&+ 
\left.\left.\gradalt{\parvar^t} \log \vd[\parvar_t]{t - 1\vert t}(X_{t-1}, x_{t})\times \left(\Hstat{t-1}(X_{t-1}) + \addfelbo[\parvar_t]{t}(X_{t-1}, X_t) \right)\right]\right]\eqsp.
\end{align*}

This gradient will be estimated using Monte Carlo in the same way as in the algorithm. In \cite{campbell2021online}, the inner conditional expectation is estimated with the regression approach (that we briefly recall below) instead of our importance sampling approach. 

\paragraph{Functional regression approach of \cite{campbell2021online}} 
Here, we recall the alternate option used in \cite{campbell2021online} to propagate approximations of the backward expectations.
Denoting $\mathcal{F} = \left\lbrace g: \rset^p \to \rset^{\statedim}, \mathbb{E}_{\vd{t}}[\|g(X_t)\|_2] < \infty\right\rbrace$,  $\Hstat{t}(x)$ satisfies, by definition of conditional expectation, 
$$
\Hstat{t} = \operatorname*{argmin}_{g \in \mathcal{F}} \mathbb{E}_{\vd{t-1:t}(x_{t-1}, x_t)} \|g(X_t) -  [\Hstat{t-1}(X_{t-1})  + \addfelbo{t}(X_{t-1}, X_t)]\|_2,
$$
which provides a regressive objective for learning an approximation of $\Hstat{t}$.
In practice, the authors restrict the minimization problem to a subset of $\mathcal{F}$, a parametric family of functions (typically, a neural network) parameterized by $\gamma$ belonging to $\Gamma \subset \rset^{d_\gamma}$, and learn this by approximating the expectation using Monte Carlo sampling. 
More precisely, the authors propose to estimate $\Hstat{t}$ by $\paramapproxvarcondE{t}$, where 
\begin{equation}
\label{eq:campbell:minimization}
\hat{\gamma}_t = \operatorname*{argmin}_{\gamma \in \Gamma}
\frac{1}{N}\sum_{k = 1}^N \|
\paramvarcondE(\xi^k_t) - 
[\paramapproxvarcondE{t-1}(\xi^k_{t-1})  + \tilde{h}_t(\xi^k_{t-1}, \xi^k_t)]\|_2\eqsp,
\end{equation}
where $\left\lbrace (\xi_{t-1}^i, \xi_{t}^i) \right\rbrace_{i=1}^N$ is  an i.i.d. sample from the variational joint distribution of $(X_{t-1}, X_t)$, which has density $\vd{t-1:t} = \vd{t}\vd{t-1|t}$.
Upon convergence, $\paramapproxvarcondE{t}$ is then used in the successive recursions (similar to \eqref{eq:approx:Hstat:online}). 

\subsection{Appendix for section \ref{sec:exp_air_quality}}
\label{appdx:air_quality}
\paragraph{Air-quality data}
The UCI Air Quality dataset contains hourly readings from 5 metal oxide chemical sensors located in a significantly polluted area. 
Our observation vector $y_t$ comprises Carbon Monoxide (CO), Non-Methane Hydrocarbons (NMHC), Nitrogen Oxides ($\text{NO}_x$, $\text{NO}_2$), Benzene ($\text{C}_6\text{H}_6$), Temperature (T), Relative Humidity (RH), and Absolute Humidity (AH).
A key challenge of this dataset is the non-uniform data censorship caused by sensor failures. 
As shown in Figure~\ref{fig:air_quality_panels}, these dropouts occur at different intervals for different sensors. 
In the raw data, these are marked by a sentinel value ($-200$), which we treat as missing values (NaNs) during preprocessing.

\begin{figure}[h]
    \centering
    \includegraphics[width=0.95\textwidth]{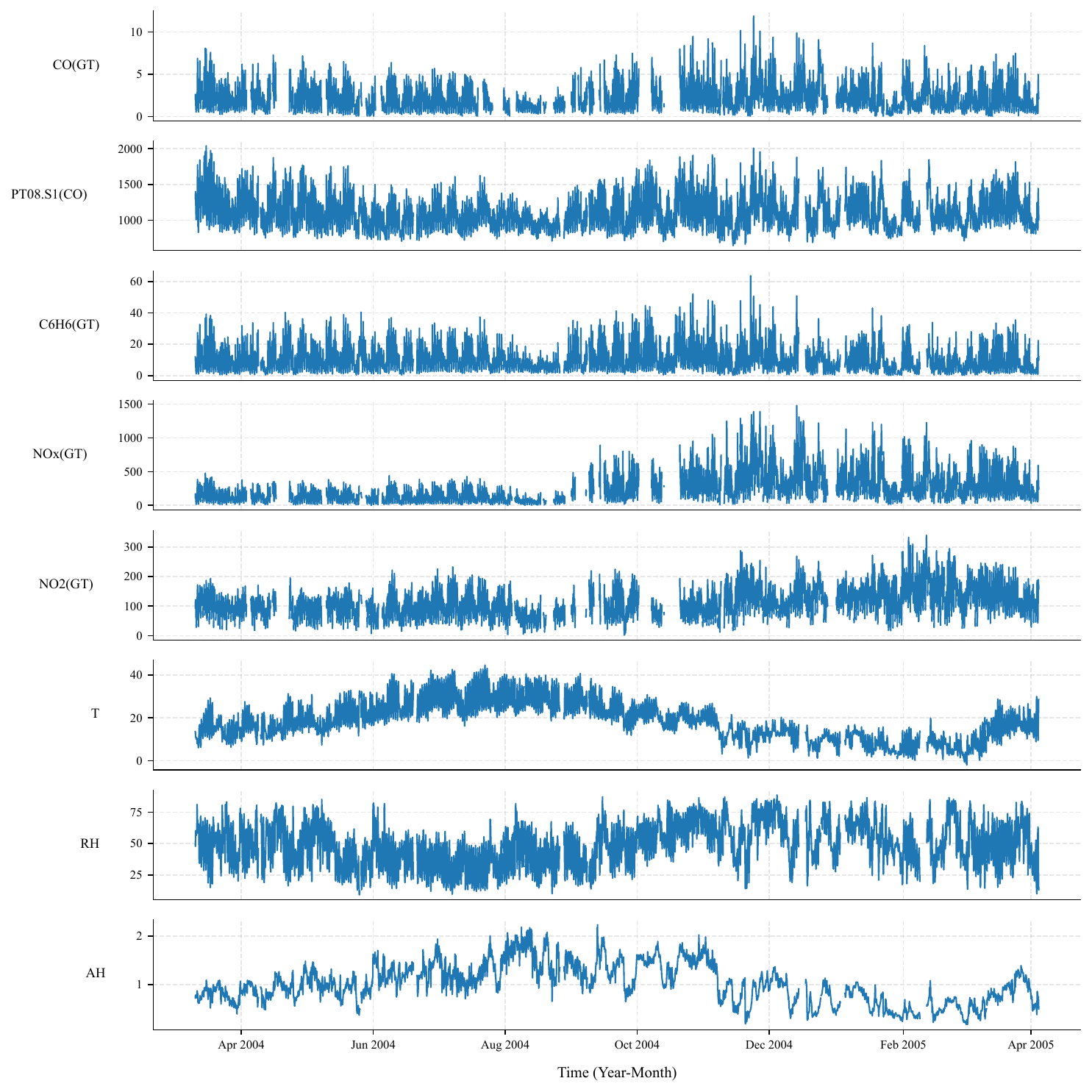}
    \caption{Time series of the 8 observed dimensions in the UCI Air Quality dataset. The gaps in the signals correspond to periods of sensor censorship (missing data). The framework must infer the latent state during these blackout periods without direct observation.}
    \label{fig:air_quality_panels}
\end{figure}

\paragraph{Parameters for the Air Quality Data}
The transition function $f_\theta$ and emission function $g_\theta$ are both parameterized by MLPs with 2 hidden layers of 32 units each and $\tanh$ activation functions. The noise covariances $Q_\theta$ and $R_\theta$ are learned diagonal matrices.
Regarding the variational approximation, the backward potentials are parameterized by MLPs with 32 hidden units and $\tanh$ activations.
We use $N=20$ importance samples for the Monte Carlo estimates.
Training is performed using the Adam optimizer with learning rates of $10^{-3}$ for the variational parameters and $10^{-4}$ for the model parameters.

\end{document}